\setlist{topsep=1ex,itemsep=-1ex,partopsep=0ex,parsep=1ex}
\def\cqedsymbol{\ifmmode$\lrcorner$\else{\unskip\nobreak\hfil
\penalty50\hskip1em\null\nobreak\hfil$\lrcorner$
\parfillskip=0pt\finalhyphendemerits=0\endgraf}\fi}
\newcommand{\Oh}{\mathcal{O}}
\newcommand{\X}{\mathcal{X}}
\DeclareMathOperator{\tw}{\mathbb{tw}}
\DeclareMathOperator{\Red}{Red}
\DeclareMathOperator{\pred}{pred}
\newcommand{\poly}{poly}
\newcommand{\degg}{d}
\newcommand{\defproblem}[4]{
 \vspace{5mm}
\noindent\fbox{
 \begin{minipage}{0.85\textwidth}
 \begin{tabular*}{\textwidth}{@{\extracolsep{\fill}}lr} #1 & \\ \end{tabular*}
 {\bf{Input:}} #2 \\
 {\bf{Parameter:}} #3 \\
 {\bf{Question:}} #4
 \end{minipage}
 }
 \vspace{5mm}
}
\newtheorem{lemma}{Lemma}[section]
\newtheorem{observation}[lemma]{Observation}
\newtheorem{theorem}[lemma]{Theorem}
\newtheorem{claim}[lemma]{Claim}
\theoremstyle{definition}
\newtheorem{definition}[lemma]{Definition}
\newtheorem*{reduction*}{Reduction Rule}
\crefname{lemma}{Lemma}{Lemmas}
\crefname{theorem}{Theorem}{Theorems}
\title{Close relatives (of Feedback Vertex Set), revisited\thanks{This research is part of a project
that has received funding from the European Research Council (ERC)
under the European Union's Horizon 2020 research and innovation programme
Grant Agreement 714704. This research was conducted while Hugo Jacob was doing a research internship
at the University of Warsaw.}}
\author{
         Hugo Jacob\footnote{ENS Paris-Saclay, France, \texttt{hugo.jacob@ens-paris-saclay.fr}}
    \and Thomas Bellitto\footnote{Faculty of Mathematics, Informatics and Mechanics, University of Warsaw, Poland, \texttt{tbellitto@mimuw.edu.pl}}
    \and Oscar Defrain\footnote{Faculty of Mathematics, Informatics and Mechanics, University of Warsaw, Poland, \texttt{odefrain@mimuw.edu.pl}}
    \and Marcin Pilipczuk\footnote{Faculty of Mathematics, Informatics and Mechanics, University of Warsaw, Poland, \texttt{malcin@mimuw.edu.pl}}
}
\date{}
\begin{document}

\maketitle

\begin{textblock}{20}(0, 12.5)
\includegraphics[width=40px]{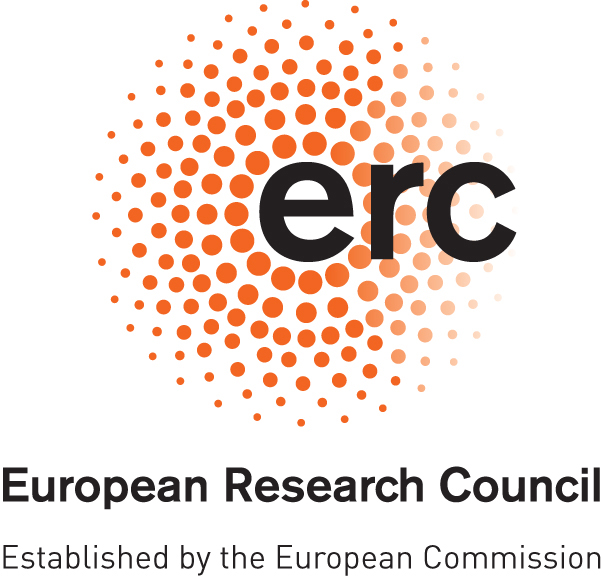}%
\end{textblock}
\begin{textblock}{20}(0, 13.4)
\includegraphics[width=40px]{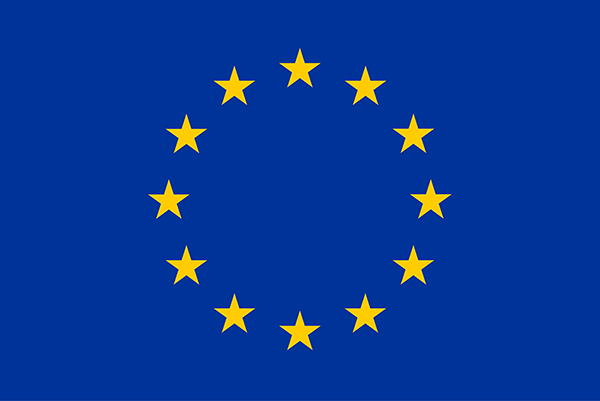}%
\end{textblock}
\begin{abstract}
At IPEC 2020, Bergougnoux, Bonnet, Brettell, and Kwon (\textit{Close Relatives of Feedback Vertex Set Without Single-Exponential Algorithms Parameterized by Treewidth}, IPEC 2020, LIPIcs vol.~180, pp.~3:1--3:17) 
showed that a number of problems related to the classic \textsc{Feedback Vertex Set} (\textsc{FVS})
  problem 
do not admit a $2^{o(k \log k)} \cdot n^{\Oh(1)}$-time algorithm on graphs of treewidth at most $k$, assuming the Exponential Time Hypothesis. 
This contrasts with the $3^{k} \cdot k^{\Oh(1)} \cdot n$-time algorithm for \textsc{FVS} using the Cut\&Count technique. 

During their live talk at IPEC 2020, Bergougnoux et al.~posed a number of open questions, which we answer in this work.
\begin{itemize}
\item \textsc{Subset Even Cycle Transversal}, \textsc{Subset Odd Cycle Transversal}, \textsc{Subset Feedback Vertex Set} can be solved in time $2^{\Oh(k \log k)} \cdot n$ in graphs of treewidth at most $k$. This matches a lower bound for \textsc{Even Cycle Transversal} 
of Bergougnoux et al.~and improves the polynomial factor in some of their upper bounds. 
\item \textsc{Subset Feedback Vertex Set} and \textsc{Node Multiway Cut} can be solved in 
time $2^{\Oh(k \log k)} \cdot n$, if the input graph is given as a cliquewidth expression of size $n$ and width $k$.
\item \textsc{Odd Cycle Transversal} can be solved in time $4^k \cdot k^{\Oh(1)} \cdot n$
if the input graph is given as a cliquewidth expression of size $n$ and width $k$.
Furthermore, the existence of a constant $\varepsilon > 0$ and an algorithm performing this
task in time $(4-\varepsilon)^k \cdot n^{\Oh(1)}$ would contradict the Strong Exponential Time Hypothesis. 
\end{itemize}
A common theme of the first two algorithmic results is to represent connectivity properties of the current graph in a state of a dynamic programming algorithm as an auxiliary forest with $\Oh(k)$ nodes.
This results in a $2^{\Oh(k \log k)}$ bound on the number of states for one node of the tree decomposition or cliquewidth expression and allows to compare two states in $k^{\Oh(1)}$ time, resulting in linear time dependency on the size of the graph or the input cliquewidth expression.

\end{abstract}

\section{Introduction}
Treewidth, introduced by Robertson and Seymour in their seminal Graph Minors series~\cite{RobertsonS84}, but also independently introduced under different names by other authors, is probably
the most successful graph width notion. 
(For the formal definition of treewidth and other width notions mentioned in this introduction, we refer to Section~\ref{sec:prelims}.)
From the algorithmic point of view, its applicability is described by Courcelle's theorem~\cite{Courcelle90}
that asserts that every problem expressible in monadic second order logic with quantification
over vertex sets and edge sets, can be solved in linear time on graphs of bounded treewidth.

Due to the abundance of algorithms for graphs of bounded treewidth applications
and since Courcelle's theorem provides a very weak bound on the dependency of the running time 
of the algorithm on the treewidth of the input graph, 
a lot of research in the last decade has been devoted to understanding optimal running time 
bounds for algorithms on graphs of bounded treewidth. 
One of the first methodological approaches was provided by two works of Lokshtanov, Marx,
and Saurabh from SODA 2011~\cite{lokshtanov2011known,LokshtanovMS18known,LokshtanovMS11soda,LokshtanovMS18slightly}. Their contribution can be summarized as follows.
\begin{itemize}
\item For a number of classic problems,
the known (and very natural) dynamic programming algorithm,
given an $n$-vertex graph $G$ and a tree decomposition of width $k$,
runs in time $c^k \cdot n^{\Oh(1)}$ for a constant $c > 1$. \cite{lokshtanov2011known,LokshtanovMS18known} shows that in most cases the constant $c$
is optimal, assuming the Strong Exponential Time Hypothesis.%
\footnote{For a discussion on the complexity assumptions used, namely the Exponential Time Hypothesis (ETH) and the Strong Exponential Time Hypothesis (SETH), we refer to Chapter 14 of~\cite{cygan2015parameterized}.}
\item \cite{LokshtanovMS11soda,LokshtanovMS18slightly} introduces a framework for proving lower bounds (assuming the Exponential Time Hypothesis) against
$2^{o(k \log k)}\cdot n^{\Oh(1)}$-time algorithms with the same input as above.
\end{itemize}
Both aforementioned works seemed to point to a general conclusion that the natural and naive dynamic programming
algorithms on graphs of bounded treewidth are probably optimal in essentially all interesting cases. 
This intuition has been refuted by Cygan et al.~\cite{CyganNPPRW11} who presented the Cut\&Count technique at FOCS 2011 which allowed $2^{\Oh(k)}\cdot n^{\Oh(1)}$-time algorithms on graphs of treewidth $k$
for many connectivity problems where the natural and naive algorithm runs in time
$2^{\Oh(k \log k)}\cdot n^{\Oh(1)}$. 
One of the prominent examples of such problems is \textsc{Feedback Vertex Set} (\textsc{FVS}) where, given a graph $G$ and an integer $p$, one asks for a set of at most $p$ vertices
that hits all cycles of $G$.

Since then, the intricate landscape of optimal algorithms parameterized by the treewidth
has been explored by many authors, see 
e.g.~\cite{%
DBLP:conf/mfcs/Pilipczuk11,
DBLP:journals/iandc/BodlaenderCKN15,
DBLP:journals/iandc/CyganMPP17,
DBLP:conf/wg/BonamyKNPSW18,
DBLP:journals/jacm/CyganKN18,
DBLP:journals/corr/abs-1907-04442,
DBLP:conf/mfcs/SauS20,
DBLP:journals/siamdm/BasteST20,DBLP:journals/tcs/BasteST20,DBLP:journals/jcss/BasteST20}.
Last year at IPEC 2020, Bergougnoux, Bonnet, Brettell, and Kwon~\cite{bergougnoux2020close}
presented an in-depth study of problems related to \textsc{FVS}, showing that for most of
them $2^{\Oh(k \log k)}$ is the optimal (assuming ETH) dependency on treewidth
in the running time bound. 
During their live talk at IPEC 2020, they asked a number of open questions. In this work, we continue this line of research and answer all of them.

\paragraph*{Hitting cycles in graphs of bounded treewidth.} We first focus on the problems
\textsc{Odd Cycle Transversal} (\textsc{OCT}) and \textsc{Even Cycle Transversal} (\textsc{ECT})
where, given a graph $G$ and an integer $p$, the goal is to pick a set of at most $p$ vertices of
$G$ that hits all odd cycles (resp.~even cycles) of $G$.
These problems are thus closely related to the aforementioned \textsc{FVS} 
problem that asks to hit \emph{all} cycles.
Using the fact that graphs without odd cycles are exactly bipartite graphs, 
it is relatively easy to obtain a $3^k\cdot k^{\Oh(1)}\cdot n$-time algorithm for \textsc{OCT} for
graphs equipped with a tree decomposition of width $k$~\cite{FioriniHRV08}, and the base $3$ of the
exponent is optimal assuming SETH~\cite{lokshtanov2011known,LokshtanovMS18known}. 

In contrast to \textsc{FVS} and \textsc{OCT}, Bergougnoux et al.~\cite{bergougnoux2020close}
showed that, assuming ETH, \textsc{ECT} admits no $2^{o(k \log k)}\cdot n^{\Oh(1)}$-time algorithm
and asked for a matching upper bound.
Our first result is a positive answer to this question, even in a more general setting
of \textsc{Subset Odd Cycle Transversal} (SOCT) and \textsc{Subset Even Cycle Transversal} (SECT)
where $G$ is additionally equipped with a set $S \subseteq V(G)$, and we are only required to hit odd (resp.~even) cycles that pass through at least one vertex of $S$.
\begin{theorem}\label{thm:subset-ect}
\textsc{Subset Odd Cycle Transversal} and \textsc{Subset Even Cycle Transversal},
  even in the weighted setting,
  can be solved in time $2^{\Oh(k \log k)} \cdot n$
on $n$-vertex graphs of treewidth $k$.
\end{theorem}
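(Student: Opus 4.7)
The plan is a bottom-up dynamic programming algorithm over a nice tree decomposition of $G$ of width $k$ (computable in $2^{\Oh(k)} n$ time). For every node $t$ with bag $X_t$ and associated subgraph $G_t$, I maintain a table of partial solutions indexed by a state that captures all information about $X \cap V(G_t)$ needed to complete it into a global subset transversal. The first step is to extract a clean structural characterization: every cycle of $G - X$ lies in a single block of $G - X$, so $X$ is a subset odd cycle transversal iff every block of $G - X$ that touches $S$ is bipartite, and $X$ is a subset even cycle transversal iff every such block has no even cycle (equivalently, is a bridge or an odd cycle, since any other $2$-connected graph contains a theta subgraph and thus an even cycle). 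Both problems then reduce to tracking block structure plus a parity-type invariant.

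The state for bag $X_t$ has two parts: the pair $(X \cap X_t, \text{cost})$, and an auxiliary forest $\F_t$ whose nodes are the surviving bag vertices $X_t \setminus X$ together with a bounded number of virtual ``block'' and ``cut'' nodes. The forest $\F_t$ encodes, for each connected component $C$ of $G_t - X$ touching $X_t$: the partition of $C \cap X_t$ by component identity; the restriction to $C \cap X_t$ of the block-cut tree of $C$, with virtual block- and cut-nodes inserted only when they lie on a path of that tree between two bag vertices of $C$; and, for SOCT, a two-coloring of every block that already contains an $S$-vertex, or, for SECT, a flag on each such block recording whether it is still a bridge, an odd cycle, or already ruled out. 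Since no future edge can touch a vertex outside $X_t$, any subtree of $\F_t$ that does not lead to a bag vertex can be pruned away; a standard counting argument then bounds the total size of $\F_t$ by $\Oh(k)$, and the number of such labeled forests (via Cayley-style bounds on forests of $\Oh(k)$ nodes) is $2^{\Oh(k \log k)}$.

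Transitions follow the usual nice-tree-decomposition scheme. Introduce-vertex inserts a singleton isolated node. Introduce-edge $uv$ may merge two components or collapse several blocks of a single component into one; if the resulting block then contains an $S$-vertex, the invariant stored in $\F_t$ (bipartiteness for SOCT, bridge-or-odd-cycle for SECT) must be verified and updated, and the state is discarded upon failure. Join merges two forests on the same bag, again with a merge-and-check step. Forget either deletes the departing bag vertex from $\F_t$, replaces it by a virtual cut-node (if it is still needed to separate surviving portals), or triggers contractions that consolidate virtual nodes. Each such operation manipulates forests of $\Oh(k)$ nodes and can be carried out in $k^{\Oh(1)}$ time; crucially, comparing two states for equality also takes $k^{\Oh(1)}$ time, which is what converts the usual $n^{\Oh(1)}$ dependency into the linear factor $n$.

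Multiplying the $2^{\Oh(k \log k)}$ states by the $k^{\Oh(1)}$ per-update cost across $\Oh(n)$ tree-decomposition nodes gives the claimed $2^{\Oh(k \log k)} \cdot n$ running time; handling rational vertex weights requires no change beyond tracking an aggregate cost in each state. I expect the technical heart of the proof to be the clean definition of the forget operation and the correctness of the block-contraction rules that accompany it: getting the virtual-node bookkeeping right so that $\F_t$ remains a faithful summary of $G_t - X$ with respect to future modifications is where the $\Oh(k)$ bound on $|\F_t|$ is really earned, and it is the step where SOCT and SECT diverge most visibly in the invariants they maintain.
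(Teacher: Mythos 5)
Your overall architecture (auxiliary forest of $\Oh(k)$ nodes summarizing the block--cut structure of $G_t - X$ restricted to the bag, $2^{\Oh(k\log k)}$ states, $k^{\Oh(1)}$ per-transition cost giving the linear dependency on $n$) is the same as the paper's. But there is a genuine gap in the structural characterization you build everything on, and it is precisely the step the paper identifies as its key new ingredient. You claim that $X$ is a subset \emph{even} cycle transversal iff every block of $G-X$ touching $S$ has no even cycle at all, i.e.\ is a bridge or an odd cycle. That is the characterization for plain \textsc{ECT}, not for \textsc{SECT}: a $2$-connected block may contain an $S$-vertex \emph{and} an even cycle while containing no even $S$-cycle. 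Concretely, take the $4$-cycle $u\text{--}a\text{--}v\text{--}b\text{--}u$ and add an $S$-vertex $s$ adjacent to $u$ and $a$; the resulting graph is $2$-connected, contains the even cycle $uavb$, and its only $S$-cycles ($sua$ and $suvba$... i.e.\ $s\text{--}u\text{--}b\text{--}v\text{--}a\text{--}s$) are odd. Your invariant would discard the corresponding state and thus overcount the optimum. The correct characterization (the paper's Lemma~3.1, item~4) is considerably more delicate: such a block is a ``cycle'' of bipartite subcomponents of $G-S$ glued through degree-$2$ $S$-vertices, with one (hence every) $S$-cycle odd; representing this in the forest requires extra label types (the paper's ``internal bipartite'' vertices) and an $\mathbb{F}_2$-labeling of forest edges to track parity, none of which your state accommodates. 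Without this, the SECT half of the theorem does not go through.

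A secondary, more repairable issue: for \textsc{SOCT} you propose storing a two-coloring only for blocks that \emph{already} contain an $S$-vertex. Blocks can merge under later edge insertions or joins, so a currently $S$-free non-bipartite block may end up inside a block containing an $S$-vertex; at that point you must already know its parity class. You therefore need to carry bipartiteness/parity information (a ``bipartite''/``not bipartite'' flag plus edge parities in the auxiliary forest) for \emph{all} nontrivial blocks, as the paper does, not just for those currently meeting $S$.
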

Here, and in later statements, by \emph{weighted setting} we mean the following: every vertex has its positive integer weight, and the input integer $p$ becomes an upper bound on the total weight of the solution.

Misra, Raman, Ramanujan, and Saurabh~\cite{misra2012parameterized} showed that a graph $G$
does not contain an even cycle if and only if every block (2-connected component) of $G$ is an edge or an odd cycle. 
The key ingredient of the proof of Theorem~\ref{thm:subset-ect} for \textsc{SECT}
is a characterization
(in the same spirit, but more involved) of graphs $G$ with sets $S \subseteq V(G)$
that do not contain an even cycle passing through a vertex of~$S$. 

In \textsc{Subset Feedback Vertex Set} (\textsc{SFVS}), given a graph $G$, a set $S \subseteq V(G)$, and an integer $p$,
the goal is to find a set of at most $p$ vertices that hits every cycle that passes through a vertex of $S$. 
We remark that, as it is straightforward to reduce SFVS
to SECT without increasing the treewidth (just subdivide
every edge once), the running time bound of Theorem~\ref{thm:subset-ect} applies
also to \textsc{SFVS}. This improves the polynomial factor of the running time bound
of~\cite{bergougnoux2020close} for \textsc{SOCT} and \textsc{SFVS} from cubic to linear. 

\paragraph*{Clique-width parameterization.}
We then switch our attention to clique-width. 
Clique-width is a width measure aiming at capturing simple yet (contrary to treewidth) dense graphs.
It originates from works of Courcelle, Engelfriet, and Rozenberg~\cite{CourcelleER93}
and of Wanke~\cite{Wanke94} from early 90s. 
Informally speaking, a graph $G$ is of clique-width at most $k$ if one can provide an expression
(called a \emph{$k$-expression})
that constructs $G$ using only $k$ \emph{labels} which essentially are names for vertex sets. 
Clique-width plays the role of treewidth for dense graphs in the following sense:
any problem expressible in monadic second order logic with quantification over vertex sets 
(but not edge sets) can be solved in time $f(k) \cdot n$, given a $k$-expression of size
$n$ constructing the input graph, where $f$ is some computable function~\cite{CourcelleMR00}. 
Similarly as for treewidth, it is natural to investigate optimal functions $f$ in such running time bounds. 
Here, the most relevant work is due to Bui-Xuan, Such\'{y}, Telle, and Vatshelle~\cite{Bui-XuanSTV13} that showed an algorithm with $f(k) = 2^{\Oh(k \log k)}$ for \textsc{FVS}. 

One should also mention a long line of work~\cite{FominGLS10,FominGLS14,FominGLSZ19} searching for optimal running time bounds on graphs of bounded clique-width for problems \emph{not} captured by the aforementioned meta-theorem and that provably (unless $\mathrm{FPT} = \mathrm{W}[1]$) do not have algorithms with the running time bound $f(k) \cdot n^{\Oh(1)}$, given a $k$-expression building the input graph.

Following on the open questions provided by Bergougnoux et al., we focus
on \textsc{SFVS} and \textsc{Node Multiway Cut} (\textsc{NMwC}). 
In the second problem, given a graph $G$, a set $T \subseteq V(G)$, and an integer $p$,
the goal is to find a set of at most $p$ vertices that does not contain any vertex
of $T$, but hits all paths with both endpoints in $T$.   
We show the following.
\begin{theorem}\label{thm:sfvs}
\textsc{Subset Feedback Vertex Set} and \textsc{Node Multiway Cut}, even in the weighted setting,
 can be solved
in time $2^{\Oh(k \log k)} \cdot n$
if the input graph is given as a $k$-expression of size $n$.
\end{theorem}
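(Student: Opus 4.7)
The plan is to run a bottom-up dynamic programming on the input $k$-expression $T$. For each node $t$ with associated labelled graph $G_t$, we maintain a table mapping each possible \emph{state} to the minimum weight of a partial solution $X_t \subseteq V(G_t)$ that is consistent with it. Following the common theme announced in the abstract, a state will be an auxiliary forest on $\Oh(k)$ nodes, together with $\Oh(k)$ additional bits encoding, per node, its label, a multiplicity bucket (e.g.\ $\{1, \ge 2\}$), its role in the block-cut structure (block-node vs.\ cut-vertex-node), and a flag recording $S$-membership (for \textsc{SFVS}) or the index of the terminal class (for \textsc{NMwC}). Since the number of labelled forests on $\Oh(k)$ nodes is $k^{\Oh(k)} = 2^{\Oh(k \log k)}$, and each piece of auxiliary data adds $\Oh(\log k)$ bits per node, this yields $2^{\Oh(k \log k)}$ states per tree node.

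The forest is designed to summarise the block-cut tree of $G_t \setminus X_t$, but restricted to what is observable from outside label classes: two vertices sharing a label are merged into the same forest node whenever they occupy the same position in the block-cut tree. Forest components all of whose vertices are already ``finished'' (no label still has a vertex that can be touched by a future $\eta$ operation, and the local \textsc{SFVS}/\textsc{NMwC} condition holds) are pruned and discarded; this pruning is what keeps the forest of size $\Oh(k)$, since every retained node must host at least one vertex of some still-active label, and cut-vertex nodes must separate two such nodes. For \textsc{NMwC} the analogous invariant is that every retained node either contains an active label or separates two differently-coloured terminals.

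Each of the four cliquewidth operations is translated into a local update of the state. Introducing a labelled vertex adds a singleton block. Disjoint union simply takes the disjoint union of the two forests; enumerating over pairs of states at a union node costs $2^{\Oh(k \log k)}$ total and $k^{\Oh(1)}$ per pair. Relabelling $\rho_{i \to j}$ retags the nodes of label $i$ to label $j$ and merges classes that become indistinguishable. The edge-addition $\eta_{i,j}$ is the non-trivial case: when both $V_i$ and $V_j$ have at least two vertices, the induced $K_{|V_i|,|V_j|}$ fuses every block lying in any tree of the forest reachable from labels $i$ or $j$ into a single new $2$-connected block, and we must reject if this block contains an $S$-vertex (for \textsc{SFVS}) or two distinct terminal classes (for \textsc{NMwC}). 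When one side is a singleton, the operation behaves as attaching a star, and we update the block-cut structure locally as in standard block-cut tree maintenance.

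The main obstacle will be proving that the state update at an $\eta_{i,j}$ node is both correct and computable in $k^{\Oh(1)}$ time. Because adding all edges between two label classes can simultaneously merge many blocks and change the cut-vertex status of many nodes, we need a clean combinatorial description of the resulting block-cut tree in terms of the old one: roughly, collect every forest node that is either labelled $i$, labelled $j$, or lies on a tree path between two such nodes, collapse them into one block, and reattach everything else. Once this operation is shown to preserve both the $\Oh(k)$ size bound (merges only decrease the number of nodes) and the invariant that pruning is always safe, the running time of $2^{\Oh(k \log k)} \cdot k^{\Oh(1)}$ per expression node combines with the expression having $n$ nodes to give the advertised $2^{\Oh(k \log k)} \cdot n$ bound.
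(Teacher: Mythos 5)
Your overall architecture --- a bottom-up DP whose state is an $\Oh(k)$-node annotated forest summarizing the block/connectivity structure of $G_t-X_t$ as seen from the label classes, with $2^{\Oh(k\log k)}$ states and $k^{\Oh(1)}$-time transitions --- is the same as the paper's, and your description of the $\eta_{i\times j}$ update (collapse everything on forest paths between nodes carrying labels $i$ and $j$ into one block; reject if the new nontrivial block meets $S$) matches the paper's case analysis in spirit. However, there is a genuine gap in how you keep the forest at $\Oh(k)$ nodes.

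The problem is that a single \emph{active} label class $V_i$ can occupy arbitrarily many distinct positions in the block-cut forest of $G_t-X_t$: think of $m$ vertices of label $i$ attached as pendants at $m$ different points of a long path whose interior consists only of vertices of exhausted labels. Your pruning rule retains a node for each such position (each hosts a vertex of a still-active label) together with the cut vertices separating them, so the forest has $\Omega(m)$ nodes with $m$ unbounded in $k$; the multiplicity bucket does not help, because the positions are pairwise distinguishable by the ($S$-)paths between them, and that information is genuinely needed to decide whether a future $\eta_{i\times j}$ creates an $S$-cycle. Conversely, if you collapse all positions of label $i$ into one forest node, you lose exactly that information, and the collapse is not even justified since those vertices are not yet connected. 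The paper escapes this dilemma by \emph{guessing the future of each label} (``expectation from the outside''): every label is assigned one of seven states ($Q_\varnothing,Q_1,Q_1^*,Q_2,Q_w,Q_w^*,Q_f$) recording whether and how it will be joined later, and for the ``will be joined'' states a single virtual representative vertex is added and made adjacent to all vertices of the label. The compatibility conditions attached to each state (e.g.\ for $Q_w^*$ no two vertices of the label may be connected at all; for $Q_2$ and $Q_w$ no $S$-path may join two of them) are precisely what makes this collapse sound, and they are what yields one forest node per label and hence the $\Oh(k)$ bound. Without some equivalent of this guessing mechanism, your invariant does not bound the state space. A minor further point: storing ``the index of the terminal class'' per node for \textsc{Node Multiway Cut} would cost $\log n$ bits per node if taken literally; the paper instead reduces weighted \textsc{Node Multiway Cut} to weighted \textsc{Subset Feedback Vertex Set} by adding a new infinite-weight $S$-vertex adjacent to all terminals, which only doubles the number of labels.
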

Note that the running time bound of Theorem~\ref{thm:sfvs} matches the lower bound
of Bergougnoux et al.~\cite{bergougnoux2020close} for pathwidth parameterization\footnote{We do not formally define pathwidth in this work, as it is not used except for this paragraph.}
of \textsc{SFVS} and \textsc{NMwC}, and it is straightforward to turn a path decomposition
of width $\ell$ into a $k$-expression for $k = \ell + \Oh(1)$. 

Observe also that, if vertex weights are allowed, \textsc{NMwC} reduces to \textsc{SFVS}.
Namely, given a \textsc{NMwC} instance $(G,T,p)$, set the weights of all vertices of $T$
to $+\infty$, create a graph $G'$ by adding to $G$ a new vertex $s$ of weight $+\infty$
adjacent to all vertices of $T$ and set $S := \{s\}$; 
the \textsc{SFVS} instance $(G',S,p)$ is easily seen to be equivalent to the input \textsc{NMwC} instance $(G,T,P)$.
Since it is straightforward to turn a $k$-expression of $G$ into a $(2k)$-expression of $G'$,
in Theorem~\ref{thm:sfvs} it suffices to focus only on the \textsc{SFVS} problem.

A common theme in the dynamic programming algorithm of Theorem~\ref{thm:subset-ect}
and of Theorem~\ref{thm:sfvs}
is the representation of the connectivity in the currently analyzed graph as an auxiliary 
forest of size $\Oh(k)$ with some annotations.
This allows a neat description of the essential connectivity features, 
avoiding involved case analysis. 
The $\Oh(k)$ bound serves two purposes. First, it implies a bound of $2^{\Oh(k \log k)}$
on the number of states of the dynamic programming algorithm
at one node of the tree decomposition or $k$-expression.
Second, it allows to perform computations on states in $k^{\Oh(1)}$ time, giving
the final linear dependency on the size of the graph or the input $k$-expression
in the running time bound.

\paragraph*{Hitting odd cycles in graphs of bounded clique-width.}
Finally, we restrict our attention to \textsc{Odd Cycle Transversal}. Recall that 
in graphs of treewidth $k$, \textsc{OCT} admits an algorithm with running time bound
$3^k \cdot k^{\Oh(1)} \cdot n$~\cite{FioriniHRV08} and the base $3$ is optimal assuming
SETH~\cite{lokshtanov2011known,LokshtanovMS18known}. 
We show that for clique-width, the optimal base is $4$.
\begin{theorem}\label{thm:oct}
\textsc{Odd Cycle Transversal}, even in the weighted setting, can be solved in time $4^k \cdot k^{\Oh(1)} \cdot n$
if the input graph is given as a $k$-expression of size $n$.
Furthermore, the existence of a constant $\varepsilon > 0$ and an algorithm performing the same task
in time $(4-\varepsilon)^k \cdot n^{\Oh(1)}$ contradicts the Strong Exponential Time Hypothesis. 
\end{theorem}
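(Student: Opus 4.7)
The plan is dynamic programming along the given $k$-expression. Since \textsc{OCT} asks for a minimum-weight set $X$ such that $V(G) \setminus X$ admits a proper 2-coloring with classes white ($\mathrm{w}$) and black ($\mathrm{b}$), and since all vertices currently sharing a label are ``twins'' with respect to every remaining clique-width operation, the only information about a label that can ever matter in the future is which of the two colors appear on its non-removed vertices. For each label $i \in [k]$ I would maintain one of the four states $\mathrm{empty}$, $\{\mathrm{w}\}$, $\{\mathrm{b}\}$, $\{\mathrm{w},\mathrm{b}\}$, yielding a state space of size $4^k$ at each node of the expression; for every such state $s$ I would store the minimum total weight of $X$ over colorings of the already-built subgraph that are consistent with $s$ and that create no monochromatic edge.

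\textbf{Polynomial factor via cumulative encoding.}
To keep the polynomial-in-$k$ overhead small I would actually store the downward-cumulative value $\widehat{\mathrm{opt}}[s] = \min_{s' \le s} \mathrm{opt}[s']$, where $\le$ is the pointwise set-inclusion order on $\{\mathrm{empty}, \{\mathrm{w}\}, \{\mathrm{b}\}, \{\mathrm{w},\mathrm{b}\}\}^k$. This makes the disjoint-union node trivial, since the downward zeta transform turns min-plus OR-convolution into pointwise addition: $\widehat{\mathrm{opt}_{\cup}}[t] = \widehat{\mathrm{opt}_1}[t] + \widehat{\mathrm{opt}_2}[t]$. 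The remaining operations also reduce to $\Oh(1)$ per target state: an introduce node sets $\widehat{\mathrm{opt}}[t]$ to $w(v)$ when $t_i = \mathrm{empty}$ and to $0$ otherwise; a relabel $\rho_{i \to j}$ reduces to the single lookup $\widehat{\mathrm{opt}_{\text{after}}}[t] = \widehat{\mathrm{opt}_{\text{before}}}[t^*]$ with $t^*_i = t^*_j = t_j$ and $t^*_\ell = t_\ell$ elsewhere; and a join $\eta_{i,j}$ is admissible precisely on the pairs $(\mathrm{empty},\cdot)$, $(\cdot,\mathrm{empty})$, $(\{\mathrm{w}\},\{\mathrm{b}\})$, $(\{\mathrm{b}\},\{\mathrm{w}\})$, so $\widehat{\mathrm{opt}_{\text{after}}}[t]$ is the minimum of four lookups into $\widehat{\mathrm{opt}_{\text{before}}}$, each obtained by intersecting the $(i,j)$-coordinates of $t$ with one of those four legal profiles. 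Summing over the $n$ operations of the expression yields the claimed $4^k \cdot k^{\Oh(1)} \cdot n$ bound, and the optimum is read off as $\widehat{\mathrm{opt}_{\text{root}}}[(\{\mathrm{w},\mathrm{b}\},\ldots,\{\mathrm{w},\mathrm{b}\})]$.

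\textbf{Lower bound and main obstacle.}
For the SETH-based lower bound, the plan is to reduce $n$-variable CNF-SAT to weighted \textsc{OCT} on a graph equipped with an explicit $k$-expression of width $k \le n/2 + \Oh(\log n)$, so that any hypothetical $(4-\varepsilon)^k$-time algorithm would yield a $(2-\delta)^n$-time SAT algorithm and contradict SETH. Each label of the construction would encode a pair of Boolean variables by identifying its four DP states $\mathrm{empty}, \{\mathrm{w}\}, \{\mathrm{b}\}, \{\mathrm{w},\mathrm{b}\}$ with the four truth assignments to that pair. The reduction has the usual two layers: a variable gadget per pair of variables that forces the DP state of the corresponding label to take exactly those four assignments at matched cost, and clause gadgets that, through a carefully orchestrated chain of relabelings and joins, create an unavoidable monochromatic edge precisely when the clause is falsified. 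The main obstacle is the asymmetry of the state lattice under joins: $\mathrm{empty}$ is always a safe partner while $\{\mathrm{w},\mathrm{b}\}$ is always a forbidden one, so the clause gadget must exploit both extremes in a uniform way to discriminate among all four assignments per label without blowing up the width of the $k$-expression. I would build on the SETH-gadget framework developed for clique-width lower bounds in prior work to realize the base $4$ exactly.
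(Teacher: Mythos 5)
Your upper bound is correct and is essentially the algorithm of the paper: the four per-label states $\mathrm{empty},\{\mathrm{w}\},\{\mathrm{b}\},\{\mathrm{w},\mathrm{b}\}$ correspond, after your downward-cumulative closure, exactly to the paper's categories $0,1,2,3$ (``all deleted'', ``contained in $X\cup A$'', ``contained in $X\cup B$'', ``unrestricted''), your admissible join profiles are exactly the condition $\x[i]\wedge\x[j]=0$, and your zeta-transform treatment of the disjoint union node is the same $\delta_i(\x)=\min\{d[t_i,\x']:\x'\le\x\}$ device the paper uses to get the convolution down to pointwise addition. The only cosmetic difference is that you keep the table cumulative throughout (so joins and relabels become $\Oh(1)$ lookups per target state) whereas the paper only takes cumulative minima at union nodes; both give $4^k\cdot k^{\Oh(1)}\cdot n$.

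The lower bound, however, is a plan rather than a proof, and the part you defer is precisely the part that carries the content of the theorem. You correctly identify the target (width $n/2+\Oh(1)$, four states of one label encoding the truth assignment of two variables) and you correctly identify the obstacle (a clause gadget must distinguish all four states of a label even though joins only ``see'' the coarse legality condition), but you do not resolve it, and resolving it is not routine. For comparison, the paper does \emph{not} encode two variables into one label via an abstract four-way gadget; instead it builds, for each variable $x_i$, a long path $P^*_i$ on which a tight vertex cover must alternate (odd versus even positions encode true versus false), attaches a large biclique $(A,B)$ joined to odd-indexed and even-indexed paths respectively so that any small OCT is forced to be a vertex cover of the paths, reuses the arrow and clause-cycle gadgets of Lokshtanov--Marx--Saurabh to test clause satisfaction, and repeats the whole construction $n+1$ times so that at least one copy has a ``clean'' alternating cover from which an assignment can be read off. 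The factor $n/2$ in the clique-width then comes not from a logical pairing of variables but from a scheduling argument: the crossing edges between rows $i$ and $i+1$ let a single label carry the two current extremities of $P^*_i$ and $P^*_{i+1}$ simultaneously, and a careful column-by-column linear $k$-expression with $\Oh(1)$ working labels realizes the whole graph with $n/2+10$ labels. Without an analogue of this construction (or of your hypothesized uniform four-way clause gadget), the claimed contradiction with SETH is not established; as written, your argument proves only the algorithmic half of the statement.
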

The key insight in the \textsc{OCT} algorithm of~\cite{FioriniHRV08} is to reformulate the problem
into finding explicitly a partition $V(G) = X \uplus A \uplus B$ that minimizes $|X|$ while keeping $G[A]$ and $G[B]$ both edgeless. 
Then, in a dynamic programming algorithm on a tree decomposition,
one remembers the assignment of the vertices of the current bag into $X$, $A$, and $B$; this yields the $3^k$ factor in the time complexity.
For clique-width, a similar approach yields $4^k$ states: every label may be allowed to contain only vertices of $X$, allowed to contain vertices of $X$ or $A$ but not $B$, allowed to contain vertices of $X$ or $B$ but not $A$, or allowed to contain vertices of any of the three sets. 
To obtain the upper bound of Theorem~\ref{thm:oct}, one needs to add on top of the above an appropriate convolution-like treatment of the disjoint union nodes of the $k$-expression. 
The lower bound of Theorem~\ref{thm:oct} combines a way to encode evaluation of two variables of a CNF-SAT formula into one of the four aforementioned states of a single label with 
a few gadgets for checking in the \textsc{OCT} regime if a clause is satisfied, borrowed from the corresponding reduction for pathwidth from~\cite{lokshtanov2011known,LokshtanovMS18known}.

\medskip
After preliminaries in Section~\ref{sec:prelims}, Theorem~\ref{thm:subset-ect} is proven
in Section~\ref{sec:ect} and Theorem~\ref{thm:sfvs} is proven in Section~\ref{sec:sfvs}.
Finally, Theorem~\ref{thm:oct} is proven in Section~\ref{sec:oct}.

\section{Preliminaries}\label{sec:prelims}
\def\tw{\mathbf{tw}}
\def\cw{\mathbf{cw}}
\def\lcw{\mathbf{lcw}}

In this paper, the notation $[n]$ stands for $\{1,\dots,n\}$.
A \emph{$k$-coloring} or \emph{$k$-labeling} of a graph is a mapping $\alpha$ from its vertices to $[k]$.
A coloring is said to be proper if for every two adjacent vertices $u$ and $v$, $\alpha(u)\neq \alpha(v)$.

A \emph{multigraph} is a graph where there can be several edges between a given pair of vertices. 
In this paper, we do not consider graphs with loops, i.e., edges whose two extremities are the same vertex.
If every pair of vertices is connected by at most one edge, we say that the graph is simple.
For a multigraph graph $G=(V,E)$ and a set of edges $A$, we denote by $G+A$ the graph whose vertex set is $V$ and whose edges are all the edges of $E$ and all the edges of $A$. 
Note that this may create multiple edges even if $G$ is a simple graph. 
For a subset $S$ of vertices of $V$, the subgraph of $G$ induced by $S$ is denoted by $G[S]$ and is the graph whose vertex set is $S$ and whose edges are all the edges of $E$ whose both extremities are in $S$. 
We denote by $G- S$ the graph $G[V\setminus S]$.
For two graphs $G$ and $H$, their union $G\cup H$ is the graph whose vertex set is the union of the vertex sets of $G$ and $H$ and that has all the edges of both $G$ and $H$. Again, if two vertices $u$ and $v$ belong to both $G$ and $H$ and are adjacent in both, $G\cup H$ can have multiple edges even if $G$ and $H$ are simple.
For a graph $G$ and two vertices $u$ and $v$, identifying $u$ and $v$ is the operation that replaces the vertices $u$ and $v$ by a new vertex $w$ and replace every edge $xu$ and every edge $xv$ by an edge $xw$.
Again, if $u$ and $v$ shared common neighbours, this operation creates multiple edges. 
For an edge $uv$ of a graph $G$, \emph{subdiving $uv$} is the operation that replaces the edge $uv$ with a new vertex $w$ and edges $uw$ and $wv$. 
For a graph $G$, a \emph{clique} of $G$ is a set of pairwise adjacent vertices and an \emph{independent set} is a set of pairwise nonadjacent vertices.
A \emph{biclique} is a complete bipartite graph.

A \emph{rooted tree} is a tree $T$ together with a special vertex $r\in V(T)$, called the \emph{root}.
It induces a natural ancestor-descendent relation $\leq$ on its vertex set, where a vertex $s\in V(T)$ is said to be a \emph{descendent} of a vertex $t\in V(T)$, denoted $s\leq t$, if $t$ is on the (unique) path from $s$ to $r$ in $T$.

To capture the parity of lengths of paths in a robust manner, we use graphs with edges labeled with elements of $\mathbb{F}_2$. 
Let $G$ be a graph where every edge $e \in E(G)$ is assigned an element $\lambda(e) \in \mathbb{F}_2$. 
With a walk $W$ in $G$ we can associate then the sum of the elements assigned to the edges on $W$ (with multiplicities, i.e., if an edge $e$ appears $c$ times in $W$, then we add $c\cdot\lambda(e)$ to the sum). 
An important observation is that if in $G$ in every closed walk the edge labels sum up to $0$, then for every $u,v \in V(G)$, in every walk from $u$ to $v$ the edge labels sum up to the same
value, depending only on $u$ and $v$. Furthermore, one can in linear time (a) check if every closed walk in $G$ sums up to $0$ and, if this is the case, (b) compute for every $u$ a value
$x_u \in \mathbb{F}_2$, called henceforth the \emph{potential}, such that for every $u,v \in V(G)$ and every walk $W$ from $u$ to $v$ in $G$, the sum of the labels of $W$ equals $x_v - x_u$. 
Indeed, it suffices to take any rooted spanning forest $F$ of $G$, define $x_u$ to be the sum of the labels on the path from $u$ to the root of the corresponding tree in $F$, and check for every
$uv \in E(G)$ if $\lambda(uv) = x_v - x_u$.

\paragraph{Treewidth.} We recall the definitions of treewidth and nice tree decompositions.

\begin{definition}[Tree decomposition and treewidth]
    A tree decomposition of a graph $G$ is a pair $\mathcal{T}=(T,\X)$ where $T$ is a tree whose every node $t$ is associated to a set $X_t \in \X$, called bag of $T$, such that the following conditions hold:
    \begin{itemize}
        \item every vertex of $G$ is contained in at least one bag of $T$;
        \item for every $uv \in E(G)$, there exists $t\in V(T)$ whose bag $X_t$ contains both $u$ and $v$; and
        \item for every $u \in V(G)$, the set of nodes whose bags contain $u$ is connected in $T$.
    \end{itemize}
    The width of a tree decomposition is defined as $\max_{t \in V(T)} |X_t|-1$.
    The treewidth of a graph~$G$, denoted $\tw(G)$, is the minimum possible width of a tree decomposition
    of $G$.
\end{definition}

\begin{definition}[Nice tree decomposition]
    A nice tree decomposition of a graph $G$ is a rooted tree decomposition $(T,\{X_t\}_{t \in V(T)})$
    such that:
    \begin{itemize}
        \item the root and leaves of $T$ have empty bags; and
        \item other nodes are of one of the following types:
        \begin{itemize}
            \item \textbf{Introduce vertex node}: a node $t$ with exactly one child $t'$ such
            that $X_t=X_{t'} \cup \{v\}$ with $v \notin X_{t'}$.
            We say that $v$ is introduced at $t$;
            \item \textbf{Forget vertex node}: a node $t$ with only one child $t'$ such that
            $X_{t}=X_{t'} \setminus \{v\}$ with $v \in X_{t'}$. We say that $v$ is forgotten at $t$;
            and
            \item \textbf{Join node}: a node $t$ with two children $t_1$, $t_2$ such that
            $X_t=X_{t_1}=X_{t_2}$.
        \end{itemize}
    \end{itemize}
    For each node $t$ of the decomposition, we define a \emph{partial} graph $G_t =
    G\left[\bigcup_{s \leq t} X_s\right] - E(G[X_t])$.
\end{definition}
Note that edges of partial graphs appear at forget vertex nodes and that they correspond to adding
edges between the forgotten vertex and its neighbours.

From a tree decomposition $\mathcal{T}=(T,\{X_t\}_{t \in V(T)})$ of $G$ of width $k$, a nice tree
decomposition of width $k$ with $\Oh(k|V(G)|)$ nodes can be computed in time
$\Oh(k^2\cdot\max(|V(T)|,|V(G)|))$, see \cite{kloks1994treewidth}.

\paragraph{Clique-width.} 
A \emph{$k$-labeled} graph is a graph $G$ together with a labeling function $\Gamma: V(G)\rightarrow
[k]$.
For $k$-labeled graphs $H,G$, and integers $i,j\in [k]$, we consider the following operations:
\begin{itemize}
    \item \textbf{Vertex creation:} ${i(v)}$ is the $k$-labeled graph consisting of a single vertex
    $v$ with label $i$;

    \item \textbf{Disjoint union:} ${H \oplus\,G}$ is the $k$-labeled graph consisting of the
    disjoint union of $H$ and $G$;

    \item \textbf{Join:} $\eta_{i\times j}(G)$ is the $k$-labeled graph obtained by adding an edge
    between any pair of vertices one being of label $i$, the other of label $j$, if the edge does
    not exist; and

    \item \textbf{Renaming label:} $\rho_{i\rightarrow j}(G)$ is the $k$-labeled graph obtained by
    changing the label of every vertex labeled $i$ to label $j$: $\forall v \in
    \Gamma^{-1}(\{i\}),\ \Gamma(v) := j$
\end{itemize}
The \emph{clique-width} of a graph $G$, denoted $\cw(G)$, is the least integer $k$ such that a
$k$-labeled graph isomorphic to $G$ can be constructed using these operations.
We call \emph{$k$-expression} of a $k$-labeled graph $G$ a sequence of operations that leads to the
construction of $G$.
Note that such a sequence defines a tree, called \emph{tree associated to the $k$-expression} in the following.

The \emph{linear clique-width} of a graph $G$, denoted $\lcw(G)$, is the least
integer $k$ such that a $k$-labeled graph isomorphic to $G$ can be constructed by a \emph{linear
$k$-expression}, which is a $k$-expression where disjoint union nodes always have one child
that is a vertex creation node.

Note that a it may happen that an edge having its endpoints in two different labels $i$ and $j$ is already present in the graph $G$ before performing the join $\eta_{i\times j}(G)$.
Despite that the edge is not produced twice, the existence of such a situation may be problematic in our algorithms when we only consider a compact representation of $G$.
To circumvent this problem, we can assume that every edge of a graph appears at most once in the join of our given $k$-expressions.
More precisely, when performing a join operation $\eta_{i\times j}(G)$, we can assume that none of the edges in $G$ has its endpoints in $i$ and $j$, respectively.
An expression with such property can be computed in time $\mathcal{O}(k^2 \cdot n)$ from a given arbitrary expression of size $n$.
A proof of this observation is included in appendix. 

Consider a $k$-expression of a $k$-labeled graph $G$, and its associated tree $T$.
For a node $t\in V(T)$, we denote by $T_t$ the subtree of $T$ rooted at $t$, and associate it with
the labeled graph $G_t$ it describes.
For an integer $i\in [k]$, we denote by $V_i(G)$ the set of vertices of label $i$ in $G$.

By an abuse of notations in the following, by ``label $i$'' for a labeled graph $G$ we may refer to
both the integer $i$, or the set $V_i(G)$.

We define \emph{partially $k$-labeled graphs} as labeled graphs with a labeling function $\Gamma(G)
: V(G) \rightarrow [k] \cup \{ \bot \}$ and call \emph{unlabeled} the vertices of
$\Gamma^{-1}(\{\bot\})$.

The problems (together with their parameterization) we consider in this paper are the following.

\begin{center}
    \defproblem{\textsc{Subset Odd Cycle Transversal} (SOCT)}%
    {A graph $G$, a subset $S\subseteq V(G)$, and an integer $p$.}%
    {$\tw(G)$}%
    {Is there a set of at most $p$ vertices hitting every odd cycle of $G$ that contains a vertex of $S$?}

    \vspace{-0.5cm}

    \defproblem{\textsc{Subset Even Cycle Transversal} (SECT)}%
    {A graph $G$, a subset $S\subseteq V(G)$, and an integer $p$.}%
    {$\tw(G)$}%
    {Is there a set of at most $p$ vertices hitting every even cycle of $G$ that contains a vertex of $S$?}

    \vspace{-0.5cm}

    \defproblem{\textsc{Odd Cycle Transversal (OCT)}}%
    {A graph $G$ and an integer $p$.}%
    {$\cw(G)$}%
    {Is there a set of at most $p$ vertices hitting every odd cycle of $G$?}

    \vspace{-0.5cm}

	\defproblem{\textsc{Even Cycle Transversal (ECT)}}%
    {A graph $G$ and an integer $p$.}%
    {$\tw(G)$}%
    {Is there a set of at most $p$ vertices hitting every even cycle of $G$?}

    \vspace{-0.5cm}

    \defproblem{\textsc{Subset Feedback Vertex Set} (SFVS)}%
    {A graph $G$, a subset $S\subseteq V(G)$, and an integer $p$.}%
    {$\tw(G)$ or $\cw(G)$}%
    {Is there a set of at most $p$ vertices hitting every cycle of $G$ that contains a vertex of $S$?}
\end{center}

Given a graph $G$ and a set of vertices $S \subseteq V(G)$, we call $S$-vertex a vertex that is part
of $S$ and we call $S$-path (resp. $S$-cycle) a path that contains at least one $S$-vertex.

We call \emph{nontrivial} a 2-connected multigraph that contains a cycle. Other 2-connected graphs are
the degenerate cases of a single vertex and a bridge, i.e., two vertices connected by a single edge. A
\emph{nontrivial} 2-connected component of a multigraph is a 2-connected component which is a
nontrivial 2-connected multigraph, it is not an isolated vertex or a bridge.

Since our algorithms solve weighted variants of the problems, we will denote by $c:V(G) \rightarrow
\mathbb{Z} \cup \{+\infty\}$ the weight function of the instance. We extend this notation to sets of
vertices with $c(U)= \sum_{v \in U} c(v)$. The unweighted variant corresponds to having $c(v)=1$
for all $v \in V(G)$.

In the context of a dynamic programming algorithm, a state is a tuple of parameters used to index
the table in which computations are done. We denote our table by $d$. We call transition from a set of states
$\mathcal{A}$ to a single state $B$ the action of updating the entry indexed by $B$ based on the values of
states in $\mathcal{A}$. Since we consider only minimizing problems, for a function $f$, such a transition will consist in
applying the operation $$d[B] := \min\{d[B],f(\mathcal{A})\}.$$ We denote this operation by $d[B]
\leftarrow f(\mathcal{A})$ and say that value $f(\mathcal{A})$ is propagated to state $B$.

In the following, we will describe some dynamic programming states with a partially labeled forest
$F$, to be more precise, we describe the state with an arbitrary rooting of such a forest. This
allows for $\Oh(|V(F)|)$ equality testing and preserves all information about the forest. We call
\emph{forest description} such a rooted representation of a forest. A rooted forest can for instance
be encoded by a tuple $(N,f,\ell)$ where $N$ is the number of vertices, $f : [N] \rightarrow [N]$ is
a function where for vertex $u$, $f(u)$ indicates its parent in the rooted forest or $u$ if it is a
root, and $\ell : [N] \rightarrow \mathcal{S}$ is a labeling over set of symbols $\mathcal{S}$ that
can additionally contain a symbol for active vertices. In order to keep notations simple, a forest
description will be denoted like the partially labeled forest it describes.

\section{Hitting even cycles in graphs of bounded treewidth}\label{sec:ect}
\def\ECT{\textsc{Even Cycle Transversal}}
\def\SECT{\textsc{Subset Even Cycle Transversal}}
\def\SOCT{\textsc{Subset Odd Cycle Transversal}}
\def\SFVS{\textsc{Subset Feedback Vertex Set}}
\def\sq{{\small $\square$}}
\def\tw{\textbf{tw}}

In \cite{bergougnoux2020close}, Bergougnoux, Bonnet, Brettell, and Kwon gave a $2^{\Omega(\tw \log
\tw)}n^{\Oh(1)}$ lower bound for \SFVS{}, \SOCT{}, and \ECT{} under the ETH.
We present an algorithm of complexity $2^{\Oh(\tw \log \tw)}n$ for them and \SECT, closing the gap for
\ECT{} and \SECT, and improving on the previous $2^{\Oh(\tw \log \tw)}n^3$ algorithm for \SFVS{} and
\SOCT{} of \cite{bergougnoux2020close}.

Rather than simply giving an algorithm for just SOCT and SECT, we also show how our method gives
less involved algorithms for SFVS and ECT.
All these problems can be seen as looking for a minimum deletion set such that the resulting graph has no
$S$-cycle, no even cycle, no odd $S$-cycle, or no even $S$-cycle. In order to have a common notation,
we will call \sq-cycles the cycles that have to be hit in the problem and \sq-cycle-free the
graphs that do not contain \sq-cycles.

\subsection{Forest representation of \sq-cycle-free graphs}

To transform a \sq-cycle-free graph into a forest, we will replace its nontrivial 2-connected components
with tree structures. We use labeled vertices to store efficiently the properties of these
nontrivial 2-connected components.

We begin by giving a characterisation of nontrivial 2-connected \sq-cycle-free graphs for each problem.
This implies characterisations of \sq-cycle-free graphs.

\begin{lemma}\label{lem:2-cc-charac}
	Let $G$ be a nontrivial 2-connected multigraph.
\begin{enumerate}
	\item $G$ contains no $S$-cycle if and only if it contains no $S$-vertex.

	\item $G$ contains no even cycle if and only if it is an odd cycle.

	\item $G$ contains no odd $S$-cycle if and only if it has one of the following forms:
	\begin{itemize}
		\item $G$ contains no $S$-vertex and is not bipartite
		\item $G$ contains no $S$-vertex and is bipartite
		\item $G$ contains at least one $S$-vertex and is bipartite.
	\end{itemize}

	\item $G$ contains no even $S$-cycle if and only if it has of one of the following forms:
	\begin{itemize}
		\item $G$ contains no $S$-vertex and is not bipartite
		\item $G$ contains no $S$-vertex and is bipartite
		\item $G$ contains at least one $S$-vertex, the connected components of $G-S$ are bipartite, together with
		$S$-vertices they form a cycle: each $S$-vertex has degree 2 and each connected component of
		$G-S$ has outdegree 2. One $S$-cycle is odd.
		We later call \emph{bipartite subcomponents} the connected components of $G-S$.
		This is illustrated in Figure \ref{fig_forest_a}.
	\end{itemize}
\end{enumerate}
\end{lemma}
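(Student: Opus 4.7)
The plan is to treat the four items in turn, with items~3 and~4 requiring the most work. Items~1 and~2 are short: for item~1, any vertex of a nontrivial $2$-connected multigraph lies on a cycle, so an $S$-vertex immediately produces an $S$-cycle, while absence of $S$-vertices trivially excludes $S$-cycles. For item~2, the $(\Leftarrow)$ direction is immediate; for $(\Rightarrow)$, an ear decomposition of a $2$-connected graph strictly larger than a single cycle exhibits a theta-subgraph, whose three cycles have lengths summing to twice a common length, forcing one of them to be even, a contradiction; so $G$ must itself be a cycle, which must be odd, for otherwise it would itself be an even cycle.

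For the remaining items I would use the $\mathbb{F}_2$-labelling framework of Section~\ref{sec:prelims} (assign every edge label~$1$, so cycle parity equals length mod~$2$) combined with Menger's theorem. For item~3, the reverse direction is immediate in every subcase; for the forward direction it suffices to show that in a $2$-connected non-bipartite graph, every vertex lies on an odd cycle. Given $v\in S$ and an odd cycle $O$ in $G$, if $v\notin V(O)$ then two internally disjoint $v$-to-$O$ paths (from Menger) reach distinct $w_1,w_2\in V(O)$ and can be closed using either of the two $w_1$-$w_2$ arcs of~$O$; the two arcs have opposite parities since $|O|$ is odd, so one of the resulting cycles through~$v$ is odd.

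Item~4 is the real work. For $(\Leftarrow)$ in the non-trivial subcase, since each $S$-vertex has degree~$2$, any cycle through an $S$-vertex must use both of its edges and therefore must traverse the entire outer cycle; inside each bipartite subcomponent the parity of any path between the two attachment vertices is determined by their colours, so all such $S$-cycles share a common parity, which is odd by assumption. For $(\Rightarrow)$, pick $v\in S$ and first argue that $G-v$ is bipartite: an odd cycle $O\subseteq G-v$, combined with two internally disjoint $v$-to-$O$ paths ending at distinct $w_1,w_2\in V(O)$ and the two $w_1$-$w_2$ arcs of~$O$, yields cycles through~$v$ of both parities, producing an even $S$-cycle. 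Next, $v$ must have degree exactly~$2$: a multi-edge at~$v$ gives an even $2$-cycle, and if $v$ had three distinct neighbours then by pigeonhole two of them would share a side of the bipartition of $G-v$, producing an even path between them and hence an even $S$-cycle through~$v$. Since this holds for every $v\in S$, each component $B$ of $G-S$ is bipartite.

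The main obstacle is to deduce the cyclic outer structure. Contracting each component of $G-S$ to a single super-vertex yields a $2$-connected multigraph $\tilde G$ in which $S$-vertices have degree~$2$ and super-vertices are pairwise non-adjacent. If some super-vertex had degree at least~$3$, then $\tilde G$ would contain a theta-subgraph with paths $\pi_1,\pi_2,\pi_3$ between two vertices; each $\pi_i$ passes through at least one $S$-vertex (since super-vertices are non-adjacent), so lifting the three cycles back to~$G$ using bipartite colourings to assign parities to super-vertex traversals gives three $S$-cycles whose parities sum to~$0\pmod{2}$---each $\pi_i$'s length and each super-vertex traversal contributes to exactly two of the three cycle parities---forcing one to be even, a contradiction. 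Hence $\tilde G$ is a cycle, which lifts to the claimed outer structure; since every $S$-vertex lies on a cycle and all $S$-cycles are assumed odd, the final condition is satisfied.
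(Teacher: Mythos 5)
Your treatment of items 1--3 and of the converse of item 4 is sound, and the first half of your forward argument for item 4 (every $S$-vertex has degree $2$, $G-v$ is bipartite for $v\in S$, hence the components of $G-S$ are bipartite) is correct. The gap is in the final step: you assert that contracting each component of $G-S$ yields a \emph{2-connected} multigraph $\widetilde{G}$, and the entire derivation of the cyclic outer structure rests on extracting a theta subgraph of $\widetilde{G}$ at a super-vertex of degree at least $3$. Contracting a connected vertex set in a 2-connected multigraph does not in general preserve 2-connectivity, and the properties you have established by that point do not force it either. Concretely, let $W$ be the path $a\,c\,d\,b$ and add $s_1\in S$ adjacent to $a$ and $d$, and $s_2\in S$ adjacent to $c$ and $b$. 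This $G$ is a nontrivial 2-connected graph, both $S$-vertices have degree $2$, $G-s_1$ and $G-s_2$ are bipartite, and $G-S=W$ is bipartite; yet $\widetilde{G}$ is a super-vertex $w$ of degree four joined by a double edge to each of $s_1$ and $s_2$, so $w$ is a cutvertex and $\widetilde{G}$ contains no theta subgraph. Your argument therefore produces no contradiction on this input, although it must, since $G$ contains the even $S$-cycle $s_1acds_1$. So the 2-connectivity of $\widetilde{G}$ is not a bookkeeping remark: it genuinely needs the no-even-$S$-cycle hypothesis and a separate argument about how the external edges of each component $W$ attach relative to the bipartition of $W$ (a Menger-type analysis inside $W$, or a direct proof that each $W$ has exactly two external edges). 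A smaller point: for a theta with a super-vertex as a branch vertex, that vertex's traversal does not literally ``appear in exactly two of the three cycles''; one should instead say that the three pairwise colour-sums $\chi(a_i)+\chi(a_j)$ cancel modulo $2$, which is what your parity count actually uses.

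For comparison, the paper proves item 4 by first fixing an odd $S$-cycle $C$ through an $S$-vertex and analysing the components of $G-V(C)$, using a theta claim stated in $G$ itself: three internally disjoint paths between two vertices, at least two of which are $S$-paths, always yield an even $S$-cycle. The fixed backbone $C$ is what supplies, in every application, the third disjoint path and the guarantee that two of the three are $S$-paths; this is precisely the role that 2-connectivity of $\widetilde{G}$ is silently playing in your write-up, and it is the piece you would need to supply to complete your route.
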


The first point is immediate, the second was observed in \cite{misra2012parameterized} and the third
in \cite{bergougnoux2020close}. The last point was not known to us and we provide a proof.

\begin{proof}
Suppose that $G$ is a nontrivial 2-connected multigraph containing no even $S$-cycle.

If $G$ contains no $S$-vertex, it is either bipartite or not, leading to the first possible forms.

If $G$ contains an $S$-vertex, it must contain an $S$-cycle $C$ due to being a nontrivial
2-connected multigraph and $C$ is odd because $G$ contains no even $S$-cycle.

\begin{claim}\label{claim:theta}
	If two vertices are connected by three disjoints paths at least two of which are $S$-paths
then two of the paths form an even $S$-cycle.
\end{claim}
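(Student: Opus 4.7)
My plan is to prove the claim by a parity argument on the three cycles formed by pairing the paths. Let $u,v$ be the two vertices connected by three internally disjoint paths $P_1, P_2, P_3$, and assume without loss of generality that $P_1$ and $P_2$ are $S$-paths. Consider the three cycles $C_{12} = P_1 \cup P_2$, $C_{13} = P_1 \cup P_3$, $C_{23} = P_2 \cup P_3$. Note that each of $C_{12}, C_{13}, C_{23}$ is an $S$-cycle, because each contains at least one of $P_1$ or $P_2$.

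The key observation is that the sum of the three cycle lengths equals $2(|P_1|+|P_2|+|P_3|)$, which is even, so the number of odd cycles among $C_{12}, C_{13}, C_{23}$ must be even (i.e., $0$ or $2$). In particular, at least one of these three cycles is even, and since all three are $S$-cycles, that even cycle is an even $S$-cycle formed by two of the given paths, as desired.

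To make this fully transparent, I would first handle the easy case: if $C_{12}$ is even, we are immediately done. Otherwise $|P_1|+|P_2|$ is odd, so $|P_1|$ and $|P_2|$ have opposite parities, which forces $|P_1|+|P_3|$ and $|P_2|+|P_3|$ to also have opposite parities; hence exactly one of $C_{13}, C_{23}$ is even, and it is again an $S$-cycle by the above.

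I do not anticipate any real obstacle here: the statement is essentially a combinatorial identity on path-parities in a theta-like subgraph, and the $S$-path hypothesis just ensures that every candidate cycle automatically contains an $S$-vertex. The only subtlety worth a line of justification is that the paths being internally vertex-disjoint guarantees that each $C_{ij}$ is indeed a simple cycle (or at worst a 2-cycle in the multigraph, whose parity is still well defined), so the parity counting above is legitimate.
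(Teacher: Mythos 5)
Your proof is correct and follows essentially the same route as the paper: observe that each of the three cycles contains one of the two $S$-paths and is therefore an $S$-cycle, then use the parity identity among the three cycle lengths (the paper writes it as $|C_3| = |C_1| + |C_2| - 2|C_1 \cap C_2|$, you as the evenness of the total sum) to conclude that not all three can be odd. Nothing further is needed.
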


\begin{proof}
The three cycles formed by combining the paths are $S$-cycles and they cannot all be odd: if we
denote them $C_1,C_2,C_3$, $|C_3| = |C_1| + |C_2| - 2|C_1 \cap C_2|$.
\end{proof}

Consider a connected component $A$ of $G-V(C)$.

Consider an $S$-vertex $v$ of $A$, because $G$ is a nontrivial 2-connected multigraph, there exist two disjoint paths that
connect $v$ to distinct vertices $a,b$ of $C$, $a$ and $b$ satisfy the conditions of claim
\ref{claim:theta} leading to a contradiction. Hence, $A$ cannot contain an $S$-vertex.

Since $G$ is a nontrivial 2-connected multigraph, there are at least 2 edges between $A$ and distinct
vertices of $C$.
Consider 2 arbitrary distinct such edges, they cut $C$ into two paths $P_1$ and
$P_2$ with extremities $u$ and $v$. Since $A$ is connected, there is a third $u$--$v$ path $P_3$ through
$A$. Only one of $P_1$ and $P_2$ may contain an $S$-vertex, by claim \ref{claim:theta} applied to
$P_1,P_2,P_3$. 
In particular, $u$ and $v$ cannot be $S$-vertices, this implies that the only edges incident to
$S$-vertices in $G$ are edges of cycle $C$, so $S$-vertices have degree 2.

Let $\widetilde{A}$ be the connected component of $G-S$ containing $A$. $\widetilde{A}$ contains a
maximal $S$-free path of $C$ because $A$ is connected to $C$ and cannot be adjacent to
$S$-vertices. $\widetilde{A}$ contains only one maximal $S$-free path of $C$ because otherwise
either we get two edges from $A$ to $C$ that separate $C$ in two $S$-paths and this was excluded in
the previous paragraph, or we have a chord $ab$ in $C$ that connects two distinct maximal
$S$-free-paths and this is excluded by Claim \ref{claim:theta}. In particular, note that this shows
that $\widetilde{A}$ has outdegree $2$ in $G$.

Consider a cycle $C'$ of $\widetilde{A}$, then there are 2 disjoint paths from it to
the $S$-vertices adjacent to $\widetilde{A}$. If they are distinct we can connect them with a disjoint
path via $C$. This constructions contains two $S$-cycles $C$ and $C\Delta C'$ which must both be odd so
$C'$ can only be even. Hence $\widetilde{A}$ contains no odd cycle so it is bipartite.

We can conclude that all connected components of $G-S$ are bipartite and that together with
$S$-vertices they form a cycle.

Conversely, if $G$ contains no $S$-vertex it does not contain any even $S$-cycle. If it is a cycle of
bipartite components and $S$-vertices with one $S$-cycle $C$ being odd, then each $S$-cycle $C'$
goes through all bipartite components and $S$-vertices. Replacing the path of $C$ by the path of
$C'$ in each bipartite component preserves parity because endpoints are unique. We conclude that all
$S$-cycles are odd in $G$.
\end{proof}

\begin{definition}\label{def:underlying-forest}
Given a \sq-cycle-free graph $G$, we define its \emph{underlying forest} $F(G)$ as the graph
obtained from $G$ by modifying independently each nontrivial 2-connected component $C$ as follows:
\begin{enumerate}
	\item For SFVS, remove edges inside $C$ and add an unlabeled vertex adjacent to all
	vertices of $C$.
	\item For ECT, remove edges inside $C$ and add a vertex adjacent to all vertices of
	$C$ and label it “odd cycle”.
	\item For SOCT, remove edges inside $C$ and add a vertex adjacent to all vertices of $C$, label
	it “bipartite” or “not bipartite” based on the property of $C$ and make it an $S$-vertex if $C$
	contains an $S$-vertex.
	\item For SECT, in the two first forms we remove edges inside $C$ and add a vertex
	adjacent to all vertices of $C$ and label it “bipartite” or “not bipartite” based on
	the property of $C$. For the last form, for each bipartite subcomponent in the cycle, we
	remove its edges, add a vertex labeled “internal bipartite” adjacent to its vertices. Then
	remove edges of $C$ incident to $S$, add an $S$-vertex labeled “odd cycle” adjacent to
	$S$-vertices and vertices labeled “internal bipartite”. 
	This is illustrated in Figure \ref{fig_forest_b}.
\end{enumerate}
\end{definition}

\begin{figure}[!h]
 \begin{subfigure}{0.5\linewidth}
  \centering{\includegraphics{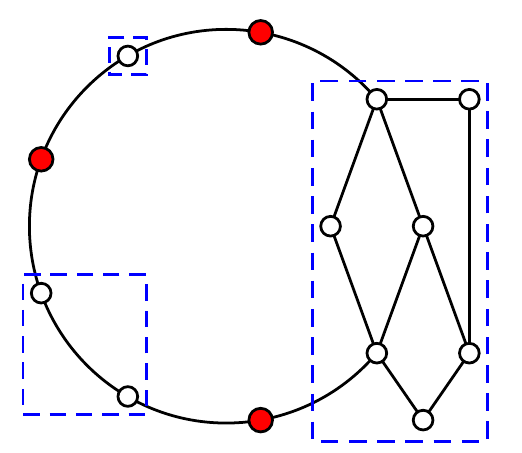}}
  \caption{An example of graph with no even $S$-cycle. \\ The vertices of $S$ are depicted in red. The blue \\ boxes denote the bipartite subcomponents.}
  \label{fig_forest_a}
 \end{subfigure}
 \begin{subfigure}{0.5\linewidth}
  \centering{\includegraphics{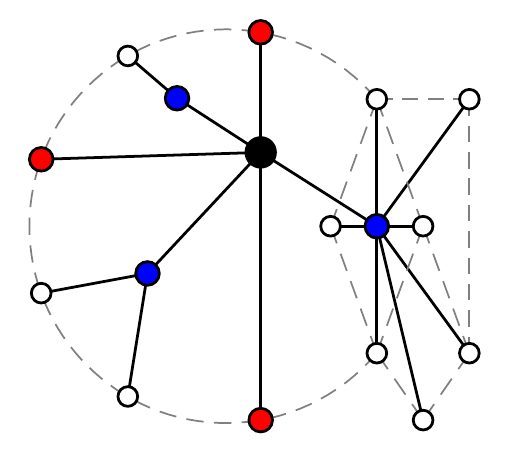}}
  \caption{The underlying forest we build from the graph on Figure \ref{fig_forest_a}. The “internal
  bipartite” vertices are depicted in blue and the “odd cycle” vertex is black.}
  \label{fig_forest_b}
 \end{subfigure}
 \caption{The last form of SECT: ``internal bipartite'' vertices}
 \label{fig_forest}
\end{figure}

Observe that, because labeled vertices are only introduced by this underlying forest, to each
labeled vertex $v$, we can associate a nontrivial 2-connected component $C$: the one that resulted
in the creation of $v$. Observe also that for a path $P$ between two unlabeled vertices, if it contains
a labeled vertex, then it contains a vertex of its associated component before it on $P$ and another
vertex of its associated component after it on $P$.

We now introduce reduction rules that allow us to maintain a simplified description of underlying
forests relatively to a set of \emph{active} vertices. Vertices that are not active are called
\emph{inactive}. Theses rules and this terminology are derived from \cite{bergougnoux2020close}.

\begin{definition}\label{def:red-forest}
Given a \sq-cycle-free graph $G$, its underlying forest $F(G)$ and subset of active vertices
$X$, a reduced underlying forest $F_r$ is obtained by applying exhaustively the following rules
on $F(G)$:
\begin{itemize}
	\item Delete inactive vertices of degree at most one.
	\item For each maximal path $P$ with internal inactive vertices of degree 2, we replace it with
	a path $P'$ with same endpoints, such that $P'$ contains exactly one occurrence of each label
	present in $P$ and a single $S$-vertex if $P$ contained one, where endpoints are considered to
	be contained in $P$ and $P'$.
\end{itemize}
For SECT we add another rule: if a maximal path with internal inactive vertices of degree 2
contains at least 2 vertices labeled “internal bipartite” but no vertex labeled “odd cycle”, we
keep 2 occurences of the label “internal bipartite”.

The set of reduced underlying forests obtained from $F(G)$ with active vertices $X$ is denoted $F(G,X)$.
\end{definition}

Observe that a reduced forest is not unique, however properties that we will show on them will not
depend on the choice of representative. Furthermore, in our dynamic programming states, we use
rooted representations that are not necessarily unique either for each reduced underlying forest.

In the next lemma we show that a reduced forest has bounded size. 
On a maximal path with internal inactive vertices of degree 2 and 2 vertices labeled ``internal
bipartite'', there is no vertex labeled ``odd cycle''. Hence the maximum number of vertices on such
a path after the reduction is also bounded by the number of labels.

\begin{lemma}\label{lem:red-forest-size}
	In a problem using $K$ label symbols (including $S$-membership), $F \in F(G,X)$ has at most
	$(K+1)(2|X|-2)+1$ vertices.
\end{lemma}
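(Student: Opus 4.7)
The plan is to classify the vertices of $F \in F(G,X)$ into three groups: the $|X|$ active vertices; the inactive \emph{branching} vertices of degree at least $3$; and the inactive \emph{path} vertices of degree exactly $2$ that survived rule~2. Rule~1 excludes all inactive vertices of degree at most $1$, so this partition is exhaustive. I will bound each class in terms of $|X|$ and $K$ and then add the three bounds.

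First I would form the \emph{topological forest} $F^t$ by contracting every maximal path of inactive degree-$2$ vertices to a single edge. Its vertex set is exactly the active plus the inactive branching vertices, and degrees of non-contracted vertices are preserved. In particular, every inactive vertex of $F^t$ has degree $\ge 3$, and every leaf of $F^t$ must be active: an inactive leaf would have degree $1$ in $F$ and would have been deleted by rule~1. A standard handshake inequality (in a tree with $L$ leaves, the number of vertices of degree $\ge 3$ is at most $L-2$) then yields that each component of $F^t$ carrying $a$ active vertices contains at most $a-2$ inactive branching vertices, and hence at most $2a-2$ vertices in total (and exactly one when $a=1$).

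Next I would bound the number of path vertices. Each edge of $F^t$ arises from a maximal path of inactive degree-$2$ vertices in $F$, whose length after rule~2 is at most $K$ (one per label, $S$-membership included). The SECT exception is compatible with this cap: retaining two ``internal bipartite'' vertices forces the absence of an ``odd cycle'' vertex, and the total remains bounded by the number of labels, as observed in the paragraph preceding the lemma. Consequently $|V(F)| \le |V(F^t)| + K\cdot|E(F^t)| = (K+1)|V(F^t)| - K c$ for a forest with $c$ components.

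Finally I would combine the two inequalities. Writing $c_0$ and $c_1$ for the counts of single-vertex and multi-vertex components of $F^t$, one has $|V(F^t)| \le c_0 + \sum_i(2a_i - 2) = 2|X| - c_0 - 2c_1$, and substitution gives $|V(F)| \le 2(K+1)|X| - (2K+1)c_0 - (3K+2)c_1$. The maximum of this linear objective over $(c_0,c_1)$ with $c_0+c_1 \ge 1$ is attained when that sum is exactly $1$, namely at $(1,0)$ (forcing $|X|=1$) or $(0,1)$, and in both cases one obtains $|V(F)| \le 2(K+1)|X| - (2K+1) = (K+1)(2|X|-2)+1$, as desired. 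The one genuinely non-routine step is justifying the uniform per-edge cap of $K$ across all four problems, and especially checking the SECT case; once that is settled, the argument is a pure tree-counting exercise.
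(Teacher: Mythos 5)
Your proof is correct and follows essentially the same route as the paper's: contract the maximal paths of inactive degree-$2$ vertices, bound the contracted forest by roughly $2|X|$ via the fact that all leaves are active (a leaf-counting/handshake argument), and then charge at most $K$ internal vertices to each contracted edge. Your accounting of single-vertex versus multi-vertex components is a bit more explicit than the paper's, but it yields the identical bound $(K+1)(2|X|-2)+1$ by the same decomposition.
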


\begin{proof}
	By the first reduction rule, all leaves of $F$ are active vertices. We use the following
	result that can be proved by induction.
	\begin{claim}
		A non-empty tree with $p$ leaves and internal degree at least 3 has at most $2p-1$ vertices.
	\end{claim}
	Consider the forest $F'$ obtained from $F$ by replacing maximal paths with inactive
	internal vertices of degree 2 by edges. Let $p$ denote the number of leaves of $F'$ and $q$
	denote the number of active vertices of degree 2. We deduce from the claim that $F'$ has at most
	$2p+q-1$ vertices. $p+q \leq |X|$ because $p$ and $q$ are cardinals of disjoint parts of $X$, so
	$F'$ has at most $2|X|-1$ vertices. Then $F'$ has at most $2|X|-2$ edges which correspond to
	paths in $F$ with at most $K$ internal vertices. Summing up, $F$ has at most
	$K(2|X|-2) + 2|X| -1 = (K+1)(2|X|-2)+1$ vertices.
\end{proof}

The crucial property preserved by a reduced forest is the following.
\begin{claim}\label{claim:ect-red-prop}
For $F \in F(G,X)$, for each pair of active vertices $u$ and $v$, there is a path between them in
$F(G)$ if and only if there is a path between them in $F$. For each label symbol, the path in
$F(G)$ contains a vertex with this symbol if and only if the path in $F$ contains a vertex with
this symbol.
\end{claim}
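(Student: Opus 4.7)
The plan is to prove both parts of the claim by induction on the number of reduction steps used in going from $F(G)$ to $F$. The base case $F = F(G)$ is immediate, and in the inductive step it suffices to show that a single application of a reduction rule to a candidate forest $F'$ preserves, for every pair of active vertices $u, v$, two invariants: (i) whether $u$ and $v$ lie in the same connected component, and (ii) when they do, the set of label symbols appearing on the unique $u$--$v$ path. Once these are established, they propagate through the entire reduction sequence and give the statement.

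For the first reduction rule—deletion of an inactive vertex $w$ of degree at most one—I would argue that $w$ cannot be an internal vertex of any path in a forest, since its degree is strictly less than $2$, and it cannot be an endpoint of a $u$--$v$ path either, since $u$ and $v$ are active while $w$ is not. Hence the deletion affects neither the existence of the $u$--$v$ path nor any label on it.

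For the second reduction rule, which replaces a maximal path $P$ whose internal vertices are inactive of degree $2$ with a path $P'$ having the same endpoints and exactly the same set of label symbols (respecting the special treatment of the $S$-vertex, and for SECT the ``internal bipartite'' multiplicity), the key structural observation is that because every internal vertex of $P$ has degree exactly $2$ in $F'$, any path in $F'$ that meets the interior of $P$ must enter through one endpoint of $P$ and leave through the other, traversing the whole internal sequence. Consequently the unique $u$--$v$ path in $F'$ either avoids the interior of $P$ altogether and is literally unchanged in $F$, or contains $P$ as a contiguous subpath and is modified only on that subpath—where the set of labels is preserved by construction of $P'$. The SECT-specific strengthening that keeps two ``internal bipartite'' occurrences fits exactly the same template and only adds information.

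The main point that I expect to require some care is the structural observation that ``touching the interior of $P$ forces traversing all of $P$,'' which depends jointly on $F'$ being a forest and on internal vertices of $P$ having degree precisely~$2$; once this is set down, the rest is bookkeeping across the (up to four) rule types, and both invariants follow immediately.
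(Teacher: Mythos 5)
Your proof is correct. For the record, the paper itself offers no proof of this claim at all: it is stated as an immediate consequence of the reduction rules and followed directly by the lemma it implies. Your induction on the number of reduction steps is exactly the right way to make the omitted argument explicit. The two cases are handled correctly: an inactive vertex of degree at most one can be neither an internal vertex nor an endpoint of a path between two active vertices, and for the path-replacement rule the decisive observation is precisely the one you isolate, namely that since the internal vertices of $P$ are inactive and of degree exactly two in a forest, the unique $u$--$v$ path either misses the interior of $P$ entirely or contains $P$ as a contiguous subpath, so the replacement by $P'$ (which by construction carries the same set of label symbols, the same $S$-membership information, and in the SECT case only a harmless extra occurrence of ``internal bipartite'') preserves both connectivity and the set of symbols on the path. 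Note also that your induction correctly accommodates the exhaustive, order-independent application of the rules, including the situation where a later application of the second rule acts on a path that already contains a previously introduced $P'$.
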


We immediately deduce the following lemma.
\begin{lemma}\label{lem:ect-red-prop}
For $F \in F(G,X)$, for each pair of active vertices $u$ and $v$, there is a path between them in
$G$ if and only if there is a path between them in $F$. For each type of nontrivial 2-connected
component, a $u$--$v$ path in $G$ goes through at least one such component if and only if the $u$--$v$ path
in $F$ contains a vertex with the corresponding label symbol (“internal bipartite” counts for the
bipartite subcomponent but also the $S$-cycle containing it). 
There exists a $u$--$v$ path in $G$ containing an $S$-vertex if and only if there exists a $u$--$v$ path
in $F$ containing an $S$-vertex or a vertex labeled “internal bipartite”.
If there is a $u$--$v$ path in $F$, every unlabeled vertex that is on the $u$--$v$ path in $F$ is also on
all $u$--$v$ paths in $G$.
\end{lemma}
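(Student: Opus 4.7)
The plan is to deduce Lemma~\ref{lem:ect-red-prop} by combining Claim~\ref{claim:ect-red-prop} with a direct analysis of the construction in Definition~\ref{def:underlying-forest}. Claim~\ref{claim:ect-red-prop} already tells us that connectivity of two active vertices in $F$ and the set of label symbols appearing on the unique $u$--$v$ path agree with the corresponding quantities in $F(G)$; hence it suffices to prove the four assertions with $F$ replaced by the underlying forest $F(G)$. First I would observe that $F(G)$ is indeed a forest: the block graph of any graph is tree-like, and Definition~\ref{def:underlying-forest} replaces every nontrivial 2-connected block of $G$ by an acyclic gadget (a star in cases 1--3 and a depth-two tree rooted at the ``odd cycle'' vertex in the fourth SECT case).

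For assertion one, connectivity is preserved because each block $B$ of $G$ is replaced by a gadget whose introduced vertices are attached only to the original vertices of $B$; hence $u$ and $v$ are connected in $G$ if and only if they are connected in $F(G)$. Moreover, a simple $u$--$v$ path in $G$ visits a well-defined sequence of blocks, and the unique $u$--$v$ path in $F(G)$ enters each such block through an original vertex and leaves it through another, visiting in between the labeled vertex associated with that block (and, in SECT case~4, possibly an ``internal bipartite'' vertex and/or the ``odd cycle'' vertex). Assertion two is then immediate, since Definition~\ref{def:underlying-forest} assigns the label of each introduced vertex to match the structural type of its associated block.

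Assertion three is the most delicate and is the main obstacle. Original $S$-vertices of $G$ are preserved as $S$-vertices in $F(G)$, and in SOCT as well as in the first three cases of SECT the labeled vertex of a block containing an $S$-vertex is itself marked $S$, so the equivalence is direct. In the final SECT case, by Lemma~\ref{lem:2-cc-charac}(4) the block $B$ is a cycle that alternates between $S$-vertices and bipartite subcomponents; a $u$--$v$ path in $G$ using $B$ that passes through an $S$-vertex of the cycle corresponds in $F(G)$ to a path through the ``odd cycle'' $S$-vertex, while a $u$--$v$ path in $G$ that merely enters a bipartite subcomponent corresponds in $F(G)$ to a path through an ``internal bipartite'' vertex. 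For the converse, any $F(G)$-path through an ``internal bipartite'' vertex can be lifted, using Lemma~\ref{lem:2-cc-charac}(4), to a $u$--$v$ detour in $G$ around the enclosing cycle that visits an $S$-vertex, which is exactly why the ``or internal bipartite'' witness in the statement is correct. Finally, assertion four follows because any unlabeled vertex $w$ strictly interior to the $u$--$v$ path in $F(G)$ has degree at least two in $F(G)$, which forces $w$ to be a cut vertex of $G$ that separates $u$ from $v$, so $w$ lies on every $u$--$v$ path in $G$.
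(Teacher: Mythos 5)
Your proposal is correct and follows exactly the route the paper intends: the paper offers no explicit proof, merely stating that the lemma is ``immediately deduced'' from Claim~\ref{claim:ect-red-prop}, and your write-up simply makes that deduction explicit by reducing to the underlying forest $F(G)$ and then reading off each assertion from the block-by-block construction of Definition~\ref{def:underlying-forest}. One small polish for the last assertion: the operative fact is not that $w$ has degree at least two in $F(G)$, but that $w$ is an internal vertex of the \emph{unique} $u$--$v$ path in the forest $F(G)$ and hence separates $u$ from $v$ there, which transfers to $G$ because the gadgets connect any two vertices of a block without using its other original vertices.
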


A property that is not preserved by a reduced forest is the length of paths. Since we are only
interested in parity, we maintain a $\mathbb{F}_2$-labeling $\alpha$ of edges.
We say that $\alpha$ is a \emph{valid} $\mathbb{F}_2$-labeling of $F \in F(G,X)$ if, there exists
$\beta$ a $\mathbb{F}_2$-labeling of the edges of $F(G)$ such that edges incident to vertices
labeled ``bipartite'' or ``internal bipartite'' are labeled $0$ for one side of the bipartition
and $1$ for the other side, edges incident to other labeled vertices are labeled $0$, and edges
between unlabeled vertices are labeled $1$, and for each edge $uv$ of $F$, its label is the sum of
labels on the edges of the $u$--$v$ path in $F(G)$.
During the application of reduction rules, each edge is given as its label the sum of labels of the
path that was connecting its endpoints.

\begin{lemma}\label{lem:red-color}
For $F \in F(G,X)$, for each pair of active vertices $u$ and $v$ connected in $G$, all $u$--$v$ paths in
$G$ have same parity if and only if the path between $u$ and $v$ in $F$ contains no vertex with
label symbol “odd cycle”, “not bipartite” or “internal bipartite”.
Furthermore, when this condition is satisfied, the parity of the paths in $G$ is given by the sum of
labels on the edges of $F$.
\end{lemma}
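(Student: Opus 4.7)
The plan is to route the argument through the intermediate labeled graph $F(G)$ with its $\mathbb{F}_2$-labeling $\beta$, since $\alpha$-labels on edges of $F$ are by definition $\beta$-sums along the corresponding $F(G)$-paths; it then suffices to compare parities of $G$-walks with $\beta$-sums along walks through $F(G)$.

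First, I will set up a walk correspondence: any $u$--$v$ walk $W$ in $G$, with $u,v$ active (hence unlabeled in $F(G)$), induces a walk $\widehat{W}$ in $F(G)$ by replacing each maximal sub-walk inside a nontrivial 2-connected component $C$ (entering at $a$, exiting at $b$) with a walk through $C$'s labeled representative(s), and conversely every $F(G)$-walk of this form lifts to some $G$-walk. Second, I will analyze per-component parity using \Cref{lem:2-cc-charac}: for the safe labels (``bipartite'', the unlabeled SFVS representative, or an $S$-vertex whose 2-cc is bipartite), the component $C$ is bipartite, so every $a$--$b$ walk in $C$ has parity dictated by the bipartition, which equals $\beta(aw_C)+\beta(w_C b)$ by construction; for the bad labels ``not bipartite'', ``odd cycle'', and (via the outer odd $S$-cycle provided by \Cref{lem:2-cc-charac}(4)) ``internal bipartite'', the component $C$ contains an odd cycle, so it admits $a$--$b$ walks of both parities.

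Combining both observations yields the lemma. If the $F$-path from $u$ to $v$ avoids bad labels, every $G$-walk from $u$ to $v$ has parity equal to the $\beta$-sum along $\widehat{W}$, which equals the $\alpha$-sum along the $F$-path by the definition of $\alpha$; this yields the backward direction together with the ``furthermore'' clause, and the conclusion for walks transfers to simple paths. For the converse, if the $F$-path visits some bad-labeled vertex $w_C$, I take any $u$--$v$ walk through $C$ and swap its $a$--$b$ sub-walk inside $C$ for one of opposite parity, producing two $u$--$v$ walks of different parities, which contract to two simple $u$--$v$ paths of different parities.

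The main obstacle is the SECT ``internal bipartite'' case, where a single 2-cc $C$ is split among several labeled vertices (one ``odd cycle'' plus one ``internal bipartite'' per bipartite subcomponent): I must verify that $\beta$-sums on edges incident to ``internal bipartite'' vertices correctly encode the parity of same-subcomponent $a$--$b$ sub-walks so that these behave bipartitely, and that the outer odd $S$-cycle is always available as a parity-flipping detour regardless of how the $G$-walk enters and exits $C$, including the case where both endpoints of the sub-walk lie in the same bipartite subcomponent.
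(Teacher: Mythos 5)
The paper states Lemma~\ref{lem:red-color} without a written proof (it is presented as a consequence of the definition of valid $\mathbb{F}_2$-labelings, Claim~\ref{claim:ect-red-prop}, and the block structure from Lemma~\ref{lem:2-cc-charac}), so there is no authorial argument to compare against line by line; your proposal is essentially the intended argument made explicit. The overall structure is sound: factor the $\alpha$-sum on $F$ through the $\beta$-sum on $F(G)$, observe that every $u$--$v$ path in $G$ traverses the same sequence of bridges and nontrivial $2$-connected components (so it projects to the unique $u$--$v$ path in the forest $F(G)$, and hence in $F$), and then argue per block: a bipartite block forces a unique traversal parity equal to $\beta(aw_C)+\beta(w_Cb)$, while a block carrying a bad label contains an odd cycle and therefore offers traversals of both parities. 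Your attention to the SECT ``internal bipartite'' case is the right place to be careful, and the resolution you sketch (the whole $2$-connected component $C$, not just the bipartite subcomponent, is non-bipartite, so it serves as the parity-flipping detour) is correct.

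Two points need tightening. First, the final step ``producing two $u$--$v$ walks of different parities, which contract to two simple $u$--$v$ paths of different parities'' is not valid as stated: contracting a walk to a path does not preserve parity, so two walks of distinct parities need not yield two paths of distinct parities. Run the argument on paths directly: any $u$--$v$ path $P$ meets the bad block $C$ in a single segment between two distinct cut vertices $a,b$ (once $P$ leaves $C$ through a cut vertex it cannot return without repeating it), and that segment lies entirely in $C$; since $C$ is $2$-connected and contains an odd cycle, the classical fact that such a graph joins any two distinct vertices by paths of \emph{both} parities lets you swap the segment and obtain a genuine second $u$--$v$ path of the other parity, with no contraction needed. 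Second, listing ``the unlabeled SFVS representative'' among the safe labels is wrong in the sense claimed --- an $S$-free block need not be bipartite --- but this case should simply be excluded, since the paper computes no $\mathbb{F}_2$-labeling for \textsc{SFVS} (cf.\ Lemma~\ref{lem:merge-F}) and the parity statement is never invoked there. With these two repairs the proof is complete.
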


\subsection{Reduced forest joins}

The main technical engine of our algorithms is the following join operation.
\begin{lemma}\label{lem:merge-F}
There exists a polynomial-time algorithm that, for every
pair of \sq-cycle-free graphs $G_1$ and $G_2$ with $V(G_1) \cap V(G_2) = X$, 
given on input two reduced forests with valid $\mathbb{F}_2$-labelings $(F_1,\alpha_1)$
and $(F_2,\alpha_2)$, with $F_1 \in F(G_1,X)$ and $F_2 \in F(G_2,X)$, decides whether $G_1 \cup G_2$
is \sq-cycle-free and, in case of a positive answer, computes a reduced forest $F \in F(G_1 \cup G_2, X)$
and, except for the \textsc{SFVS} problem, a valid $\mathbb{F}_2$-labeling $\alpha$.
\end{lemma}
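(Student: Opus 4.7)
The plan is to glue $F_1$ and $F_2$ along their shared active set $X$, analyze the blocks of the resulting multigraph, validate each new block against Lemma~\ref{lem:2-cc-charac}, and then apply the gadget replacement of Definition~\ref{def:underlying-forest} and the reduction rules of Definition~\ref{def:red-forest} to produce the output.

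Concretely, first form the multigraph $H := F_1 \cup F_2$ by identifying the copies of vertices of $X$ present in both forests. By two applications of Lemma~\ref{lem:ect-red-prop}, for every $u,v \in X$ the set of $u$--$v$ paths in $H$ encodes exactly which label symbols can be collected along $u$--$v$ paths in $G_1 \cup G_2$, and $\alpha := \alpha_1 \cup \alpha_2$ captures their parity modulo the known caveats for the labels ``odd cycle'', ``not bipartite'' and ``internal bipartite''. Compute the block decomposition of $H$ in linear time; every block lying entirely inside $F_1$ (resp.\ inside $F_2$) is inherited and already certified by the input. The new nontrivial 2-connected components of $G_1 \cup G_2$ are in bijection with the set $\mathcal{B}$ of blocks of $H$ that use edges of both sides, and the features of the corresponding component (which labels appear, whether it meets $S$, parities of its cycles) can be read from the block together with $\alpha$ and the vertex labels inherited from $F_1,F_2$.

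Second, for each $B \in \mathcal{B}$ apply the relevant item of Lemma~\ref{lem:2-cc-charac}. For \SFVS{} merely check that no vertex of $B$ is an $S$-vertex, either from $X$ or from an inherited label. For \ECT{} check that $B$ is a simple cycle in $H$, that no previously labeled vertex lies on it, and that the cycle sums to $1$ under $\alpha$. For \SOCT{} verify the three forms using the ``bipartite''/``not bipartite'' labels and a potential on $\alpha|_B$, the existence of such a potential being exactly bipartiteness by the linear-time procedure recalled in Section~\ref{sec:prelims}. For \SECT{} check the three forms of the fourth item; the critical third form amounts to: every $S$-vertex of $B$ has degree~$2$ in $B$, each connected component of $B-S$ (augmented with any already-existing ``internal bipartite'' vertex it contains) forms a single path between two $S$-vertices of $B$, carries a valid potential under $\alpha$ (bipartiteness), and at least one resulting $S$-cycle of $B$ is odd. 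If any $B$ fails its criterion, report that $G_1 \cup G_2$ is not \sq-cycle-free.

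Third, construct the underlying forest of $G_1 \cup G_2$ by replacing each $B \in \mathcal{B}$ by the labeled gadget prescribed in Definition~\ref{def:underlying-forest} (a single inactive labeled vertex adjacent to the active vertices of $B$ for items (1)--(3), or the ``odd-cycle $S$-vertex with internal-bipartite leaves'' construction for the third form of item~(4)); keep all vertices of $F_1 \cup F_2$ not absorbed by any $B \in \mathcal{B}$. Apply the reduction rules of Definition~\ref{def:red-forest} exhaustively; compute the new $\mathbb{F}_2$ labels by summing along the collapsed paths and by assigning $0/1$ to the freshly created edges incident to ``bipartite''/``internal bipartite'' vertices according to the side of the bipartition, which is read from the potentials computed in the previous step. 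This yields the required $F \in F(G_1 \cup G_2, X)$ together with the valid $\mathbb{F}_2$-labeling $\alpha$ (the latter is not needed for \SFVS{}).

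The main obstacle is the \SECT{} third form: a new block in $H$ can glue together ``internal bipartite'' vertices already present in the $F_i$ with fresh bipartite pieces that arise only through the merge, and one has to verify that such a block is correctly shaped as a ``cycle of bipartite subcomponents and $S$-vertices with one odd $S$-cycle'' and to emit the corresponding gadget with consistent bipartition labels on the new edges. All other ingredients (block decomposition, potential computation for $\mathbb{F}_2$ consistency, and the reduction rules themselves) are routine linear-time operations on a graph of size $\Oh(|X|)$ by Lemma~\ref{lem:red-forest-size}, giving the claimed polynomial running time.
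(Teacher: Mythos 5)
Your overall architecture (glue $F_1$ and $F_2$ along $X$, analyse the resulting 2-connected structure, validate against Lemma~\ref{lem:2-cc-charac}, then re-apply Definition~\ref{def:underlying-forest} and the reduction rules) is the same as the paper's, but there is a genuine gap in the second step: you claim that the new nontrivial 2-connected components of $G_1\cup G_2$ are in bijection with the \emph{blocks} of $H=F_1\cup F_2$ and you run the per-problem checks block by block. This is false, and the per-block checks can accept instances that must be rejected. The point is that a \emph{labeled} vertex $u$ of $H$ may be a cutvertex of $H$ separating two blocks $B_1,B_2$, while the nontrivial 2-connected component $C_u$ of (say) $G_1$ that $u$ represents is itself 2-connected and meets the lift of each $B_i$ in two distinct vertices; hence $B_1$, $B_2$ and $C_u$ all lie in a single 2-connected component of $G_1\cup G_2$. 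Concretely, for \textsc{SOCT}: let $C_u$ be bipartite and $S$-free with four active vertices $a,b,c,d$, and let $G_2$ attach an $S$-path from $a$ to $b$ closing an even cycle and a path from $c$ to $d$ closing an odd cycle. The two resulting blocks of $H$ share only $u$; each passes your test (one is bipartite and contains an $S$-vertex, the other is non-bipartite and $S$-free), yet the single 2-connected component of $G_1\cup G_2$ containing both has an $S$-vertex and an odd cycle, hence an odd $S$-cycle by Lemma~\ref{lem:2-cc-charac}. Even on accepting instances, emitting one gadget per block produces two labeled vertices for one component of $G_1\cup G_2$, so the output is not in $F(G_1\cup G_2,X)$. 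The paper repairs exactly this by grouping the 2-connected components of $H$ into \emph{blobs} (maximal unions of nontrivial 2-connected components joined by paths all of whose cutvertices are labeled) and proving the two directions of the correspondence at the blob level via lifted and contracted walks; the per-problem checks and the gadget emission must be done per blob, not per block.

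A second, smaller issue concerns the \textsc{SECT} third form, which you yourself flag as the main obstacle but leave unresolved: you propose to \emph{augment} the components of $B-S$ with pre-existing ``internal bipartite'' vertices and then test the cycle shape. The paper instead rejects outright any blob containing both an $S$-vertex and an ``internal bipartite'' vertex, because such a vertex represents a bipartite subcomponent of an odd-$S$-cycle component that already forms a closed cycle of subcomponents and $S$-vertices, so any new attachment to it necessarily violates the degree and outdegree conditions of Lemma~\ref{lem:2-cc-charac}. The criterion you sketch would accept configurations that must be rejected, so this case needs the paper's stricter treatment (or a proof that your augmented-component test is equivalent to it, which you do not supply).
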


This section is devoted to the proof of Lemma~\ref{lem:merge-F}. 
Denote $H = F_1 \cup F_2$. Note that the common vertices of $F_1$ and $F_2$ are exactly the vertices
of $X$ and they are unlabeled vertices.

We start with some claims about the correspondence between $H$ and $G_1 \cup G_2$.
We partition 2-connected components of $H$ into \emph{blobs}: a blob is a maximal union of
nontrivial 2-connected components of $H$ such that every two 2-connected components of a blob can be
connected in $H$ with a path whose all cutvertices are labeled vertices of $H$.
The reason behind this is that although a labeled vertex may be a cutvertex in $H$, since it is
labeled, it corresponds to a nontrivial 2-connected component of $G_1$ or $G_2$ by definition of
the underlying forest.
Observe that if a labeled vertex is in a cycle of $H$ then in the
union of underlying forests (not reduced) the cycle obtained by expanding contracted paths contains
two vertices of the labeled vertex' corresponding component, meaning that vertices along this cycle
are in the same 2-connected component in $G_1 \cup G_2$. 

Given a path $P$ between active vertices in $F_i$, we define a \emph{lift} of this path in $G_i$ as follows.
Because there is a path between the endpoints of $P$ in $F_i$, by Lemma \ref{lem:ect-red-prop}, there is
a path $P'$ between the two vertices in $G_i$. Note that the nontrivial 2-connected components and
cut vertices on the path are already determined. Inside the nontrivial 2-connected components, the
path is completed arbitrarily except for the case of odd cycles in SECT where a path without
$S$-vertices is chosen if it exists. The labeled vertices of $P$ are not useful to define $P'$,
however by construction of $F_i$, they give information about the lift.

\begin{claim}\label{claim:lift-walk}
	Let $W$ be a closed walk in a graph $G$, such that for every vertex $v$ on $W$, at most one visit of $W$
	in $v$ has the property that the in- and out-edge belong to distinct 2-connected components.
	Then, all edges of $W$ lie in one 2-connected component of $G$.
\end{claim}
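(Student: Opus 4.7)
The plan is to proceed by contradiction using the block-cut tree of $G$. Suppose $W$ uses edges from at least two distinct $2$-connected components. The core idea is that a closed walk that switches between blocks must, by the tree structure of the block-cut tree, revisit at least one cut vertex with a transition, which the hypothesis forbids.

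First, I would set up the block-cut tree $T$ of $G$: its nodes are the blocks of $G$ together with the cut vertices of $G$, with an edge between a block $B$ and a cut vertex $c$ whenever $c \in B$. Since every edge of $G$ lies in a unique block, each edge of $W$ can be assigned the block containing it; the ``transition visits'' singled out by the hypothesis are exactly those occurrences of a vertex $v$ on $W$ where the in-edge and out-edge lie in distinct blocks, which forces $v$ to be a cut vertex. Parametrizing $W$ cyclically as $e_0, e_1, \ldots, e_{m-1}$ avoids any boundary issues: a visit is just an index $i$ with $e_{i-1} \cap e_i = \{v\}$, and two visits at $v$ with different $(e_{i-1}, e_i)$ are genuinely distinct.

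Next, let $T^{*} \subseteq T$ be the subgraph spanned by the block-nodes whose block contains at least one edge of $W$ and the cut-vertex-nodes at which $W$ makes at least one transition. A routine verification shows $T^{*}$ is a connected subtree of $T$, and that every cut-vertex-node $c$ in $T^{*}$ has degree at least $2$ in $T^{*}$, since a single transition at $c$ already witnesses two distinct blocks around $c$ that both lie in $T^{*}$. Consequently every leaf of $T^{*}$ is a block-node. Because $W$ is assumed to use edges from at least two blocks, $T^{*}$ has at least two block-nodes, so it possesses a leaf block $B^{*}$ with a unique $T^{*}$-neighbour $c^{*}$.

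The decisive step is then to show that $c^{*}$ receives at least two transition visits of $W$. First I would observe that every transition of $W$ involving $B^{*}$ takes place at $c^{*}$: any such transition at a vertex $v$ identifies $v$ with a cut vertex in $B^{*} \cap B'$ for another block $B'$ used by $W$, making $(B^{*}, v)$ an edge of $T^{*}$, which forces $v = c^{*}$ by leafness of $B^{*}$. Viewing $W$ cyclically, decompose it into maximal arcs of consecutive edges lying in $B^{*}$; at least one such arc exists because $W$ uses edges of $B^{*}$, and the arc is proper because $W$ also uses edges outside $B^{*}$. The arc's entry visit (in-edge outside $B^{*}$, out-edge inside $B^{*}$) and exit visit (in-edge inside $B^{*}$, out-edge outside $B^{*}$) are both transitions at $c^{*}$, and they are distinct because their in-edges lie on different sides of the partition. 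This contradicts the hypothesis that at most one visit of $W$ to any given vertex is a transition, proving that all edges of $W$ lie in a single $2$-connected component.

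The main obstacle I anticipate is a bookkeeping one rather than a conceptual one: making sure that ``visits'' and ``transitions'' are defined so that entry and exit events of an arc provably correspond to different indices in the cyclic parametrization, even for arcs of length one, and that cut-vertex-nodes of $T^{*}$ really have degree at least $2$. Both are handled cleanly by working with the cyclic index representation of $W$ and by noting that a transition visit at $c$ contributes two distinct $T$-neighbours of $c$ to $T^{*}$.
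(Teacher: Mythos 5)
Your proof is correct, and it rests on the same underlying fact as the paper's: a closed walk that uses edges of two distinct blocks must make at least two ``transition'' visits to some cut vertex. The execution differs, though. The paper's proof is three sentences long: it takes an arbitrary cut vertex $v$ on $W$ at which a transition from block $C_1$ to block $C_2$ occurs, and asserts that since $v$ is a cut vertex and $W$ is closed, $W$ must visit $v$ a second time ``going back to $C_1$'', yielding a second transition visit at the same $v$; the separation property of $G-v$ that justifies the return through $v$ itself is left implicit. You instead pass to the block-cut tree, restrict to the subtree $T^*$ spanned by the blocks used by $W$ and the transition cut vertices, observe that every cut-vertex node of $T^*$ has degree at least two so that every leaf is a block, and then localize the double-crossing argument at the unique cut vertex $c^*$ attached to a leaf block $B^*$: every entry into and exit from $B^*$ is forced through $c^*$, and a proper maximal $B^*$-arc of the cyclic walk supplies two distinct transition visits there. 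What your route buys is that the vertex receiving two transitions is pinned down structurally (leafness of $B^*$ makes ``all transitions involving $B^*$ happen at $c^*$'' immediate), and the cyclic-index bookkeeping makes the two visits provably distinct even for arcs of length one; what it costs is the extra scaffolding of $T^*$ and the verification that it is a connected subtree. Both arguments are sound; yours is the more self-contained write-up of the same idea.
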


\begin{proof}
	Assume towards a contradiction that $W$ contains edges of at least two 2-connected components, then it contains a cut
	vertex $v$ of $G$. $v$ is a vertex of $W$, so it is visited once by $W$ going from 2-connected
	component $C_1$ to $C_2$. Since it is a cut	vertex of $G$ and $W$ is a closed walk, there must
	be a second visit going back to $C_1$. We conclude that $v$ violates the property that at most
	one visit of $W$ in $v$ has in- and out-edges belonging to distinct 2-connected components.	
\end{proof}

\begin{claim}\label{claim:H-to-G}
	If $C$ is a blob of $H$, then there exists $C'$ a nontrivial 2-connected component of $G_1 \cup
	G_2$ that contains all unlabeled vertices of $C$ and vertices in components represented by
	labeled vertices of $C$.
\end{claim}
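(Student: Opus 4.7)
The plan is to show that all vertices described in the statement lie on a common closed walk in $G_1 \cup G_2$, and then to invoke Claim~\ref{claim:lift-walk} to conclude that all these vertices lie in a single nontrivial 2-connected component $C'$ of $G_1 \cup G_2$. It suffices to do this pairwise: for any two vertices $x,y$ from the set described in the statement, I would produce a closed walk of $G_1 \cup G_2$ through both $x$ and $y$ that satisfies the hypothesis of Claim~\ref{claim:lift-walk}.

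First, using the blob structure, I would extract a closed walk $W_H$ of $H$ that stays within $C$ and visits, for each of $x$ and $y$, either the vertex itself (if it is unlabeled and already in $C$) or the labeled vertex of $C$ representing the 2-connected component $D_v$ that contains it. Such a $W_H$ exists because the nontrivial 2-connected components composing $C$ are each 2-connected, and any two of them are linked in $H$ by a path whose internal cutvertices are all labeled; one can therefore stitch together two internally disjoint paths across each such component to form a closed walk touching any prescribed pair of locations. Second, I would lift $W_H$ to a closed walk $W$ of $G_1 \cup G_2$ as follows: each edge of $W_H$ lying in $F_i$ is replaced by the corresponding lift in $G_i$, and each visit of $W_H$ to a labeled vertex $v$ with represented component $D_v$ is replaced by an internal path of $D_v$ joining the attachment vertices of the two surrounding lifts. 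If $x$ or $y$ lies inside some $D_v$ visited by $W_H$, I would route that internal path through it, exploiting the 2-connectedness of $D_v$.

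Third, I would verify that $W$ satisfies the hypothesis of Claim~\ref{claim:lift-walk}: at every vertex $u$ of $G_1 \cup G_2$, at most one visit of $W$ has in- and out-edges in distinct 2-connected components of $G_1 \cup G_2$. The only candidates for violation are cutvertices of $G_1 \cup G_2$; those that lie inside some $D_v$ never cause a violation because all edges of $W$ inside $D_v$ belong to a single 2-connected component of $G_1 \cup G_2$, while any cutvertex of $G_1 \cup G_2$ outside every $D_v$ corresponds to an unlabeled vertex of $H$, and the blob structure together with the definition of the lift prevents $W_H$ from crossing between distinct 2-connected components at such a vertex more than once. The conclusion then follows from Claim~\ref{claim:lift-walk}.

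The main obstacle will be this last verification, in particular the treatment of a labeled cutvertex $v$ of $H$ that is visited several times by $W_H$: one must select the multiple internal paths through $D_v$ in a coordinated way, using the 2-connectedness of $D_v$ (for instance via Menger's theorem or an ear decomposition), so that no cutvertex of $G_1 \cup G_2$ inside $D_v$ ends up being "crossed" between distinct 2-connected components by more than one of these excursions. Once this is arranged, Claim~\ref{claim:lift-walk} delivers the desired 2-connected component $C'$ containing both chosen vertices, and since $x,y$ were arbitrary, $C'$ contains all vertices enumerated in the statement.
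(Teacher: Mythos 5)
Your high-level idea --- lift a closed walk of $H$ to a closed walk of $G_1\cup G_2$ and feed it to Claim~\ref{claim:lift-walk} --- is the same as the paper's, but the paper is careful to apply Claim~\ref{claim:lift-walk} \emph{only} to lifts of simple cycles of $H$: there, every vertex of $H$ is visited at most once by the cycle, so at any repeated vertex $v$ of the lifted walk all visits except possibly one are internal to a single represented component $D_u$, and the hypothesis of Claim~\ref{claim:lift-walk} is immediate. The paper never applies Claim~\ref{claim:lift-walk} to a walk covering the whole blob; it instead glues the components obtained from individual cycles by elementary arguments (two lifted walks sharing two distinct vertices lie in a common 2-connected component; a 2-connected subgraph $D_u$ meeting a lifted walk in two distinct vertices lies in its component; cycles sharing an edge or a labeled vertex are handled this way).

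The gap in your version is exactly the step you flag as ``the main obstacle'', and your proposed fix does not address it. When $x$ and $y$ sit in different 2-connected components of $H$ glued at a labeled cutvertex, every closed walk $W_H$ through both must pass through that cutvertex (and possibly through unlabeled vertices of $H$) more than once, so you cannot reduce to a simple cycle of $H$. The vertices at which the hypothesis of Claim~\ref{claim:lift-walk} can then fail are the \emph{attachment points} where an excursion of $W$ enters or leaves a represented component $D_v$, together with the repeated unlabeled vertices of $H$: at an attachment point one incident edge of the visit lies inside $D_v$ and the other outside, and if two excursions share an attachment point you get two visits that may both cross between distinct 2-connected components of $G_1\cup G_2$. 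These attachment points are determined by the lifts of the $H$-edges incident to $v$ (the blocks and cutvertices traversed by a lift are forced, cf.\ Lemma~\ref{lem:ect-red-prop}), so choosing the internal paths through $D_v$ ``in a coordinated way'' cannot move them --- coordination of internal routing controls only vertices whose in- and out-edges both lie in $D_v$, which were never a problem. Your assertion that ``the blob structure together with the definition of the lift prevents $W_H$ from crossing between distinct 2-connected components at such a vertex more than once'' is precisely the statement that needs proof, and it is not established. To repair the argument you would either have to prove that distinct excursions never share attachment points (a genuinely new structural claim about reduced forests), or restructure the proof as the paper does: apply Claim~\ref{claim:lift-walk} cycle by cycle, then glue.
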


\begin{proof}
	First we consider a cycle $C$ in $H$, it must consist of a succession of paths in $F_1$ and
	paths in $F_2$ because $F_1$ and $F_2$ are forests. From each such maximal path of $F_i$
	we can deduce a lift, now the succession of lifts defines a closed walk $W$. 
	Let $v$ be a repeated vertex in $W$.
	Because $C$ is a simple cycle in $H$, $v$ is visited at most once in $C$ so other visits can only be
	caused by $C$ visiting a labeled vertex $u$ associated with the nontrivial 2-connected
	component containing $v$. In such other visit, the vertices from which $W$ enter the
	2-connected component represented by $u$ are distinct from $v$ because it cannot be visited more
	than once in $C$. This means that the in- and out-edges of such visit are in the nontrivial
	2-connected component represented by $u$. We conclude that conditions of Claim
	\ref{claim:lift-walk} are satisfied and deduce that $W$ is contained in one 2-connected
	component of $G_1 \cup G_2$.

	Now if two cycles $C_1$ and $C_2$ of $H$ are not edge disjoint, then there exist two distinct
	vertices $a,b$ contained by both walks $W_1$ and $W_2$ obtained by lifts from $C_1$ and $C_2$.
	This means that the 2-connected components of $G_1 \cup G_2$ containing $W_1$ and $W_2$ have two
	distinct vertices in common, so it must be the same 2-connected component.

	If a labeled vertex $u$ is in a cycle $C$ of $H$, then all of the vertices that are contained
	in the nontrivial 2-connected component $B$ represented by $u$ are inside the same maximal
	2-connected component of $G_1 \cup G_2$ as $W$ the walk obtained by lifts from $C$. Indeed, if
	$u$ is in $C$, then $W$ must contain two distinct vertices of $B$, and since $B$ is 2-connected
	it must be contained in the same 2-connected component of $G_1 \cup G_2$ as $W$.

	Finally, if two cycles $C_1$ and $C_2$ of $H$ are edge disjoint but share a labeled vertex $u$,
	then consider the walks $W_1$ and $W_2$ in $G_1 \cup G_2$ obtained by lifts from $C_1$ and
	$C_2$. Each walk must contain two distinct vertices of the 2-connected subgraph $B$ represented
	by $u$. Hence, the 2-connected component of $G_1 \cup G_2$ containing $B$ also contains $W_1$
	and $W_2$.
	
	We conclude from the previous points that for $B$ a blob of $H$, all of its vertices and the
	vertices contained in its labeled vertices' associated nontrivial 2-connected component are in
	the same maximal 2-connected component of $G_1 \cup G_2$.
\end{proof}

For a cycle $C$ in $G_1 \cup G_2$ that contains edges of both $G_1$ and $G_2$, we construct a walk $W$ in $H$ as follows.
Since $C$ contains edges of both $G_1$ and $G_2$, it can be decomposed into paths $P_1,P_2,\ldots,P_\ell$, with $\ell$ even,
and paths $P_1,P_3,P_5,\ldots$ lying in $G_1$ and $P_2,P_4,P_6$ lying in $G_2$; the path $P_i$ has both endpoints being active vertices,
one being also the endpoint of $P_{i-1}$ and one being also the endpoint of $P_{i+1}$.
By Lemma~\ref{lem:ect-red-prop}, for each $P_i$ there is a corresponding path $P_i'$ between the same endpoints in $F_1$ or $F_2$.
The concatenation of the paths $P_i'$ is a closed walk in $H$ which we call \emph{the contracted walk of $C$}. 

\begin{claim}\label{claim:G-to-H}
	If $C$ is a cycle of $G_1\cup G_2$ that contains at least one edge of $G_1$ and at least one edge of $G_2$,
  then its contracted walk in $H$ visits more than once only labeled vertices. Furthermore, it is contained in a single blob of $H$.
\end{claim}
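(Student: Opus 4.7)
The plan divides along the two sentences of the claim. For the first part, I would show that if $v$ is unlabeled and visited more than once on $W$, then $C$ cannot be simple. The key tool is Lemma~\ref{lem:ect-red-prop}: if an unlabeled vertex lies on a reduced-forest path $P_i'$ in $F_k$, then it must lie on every $G_k$-path between the endpoints of $P_i$, and in particular on $P_i$ itself. I would classify each visit of $W$ to $v$ as either an \emph{interior visit} (strictly inside some $P_i'$) or a \emph{junction visit} (at the shared endpoint of consecutive $P_i'$ and $P_{i+1}'$, counted once). Suppose $v$ is unlabeled and has at least two distinct visits. Two interior visits on different $P_i', P_j'$ force $v$ to be an interior vertex of both $P_i$ and $P_j$, giving two occurrences of $v$ on $C$; two junction visits place $v$ at two distinct seams of the cyclic decomposition, again two occurrences on $C$; an interior visit in $P_k'$ combined with a junction visit at $P_{i-1}'/P_i'$ either places $v$ at two distinct positions of $C$ (when $k \notin \{i-1,i\}$) or makes $v$ simultaneously internal to and an endpoint of the same $P_k$. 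Every case contradicts $C$ being a simple cycle.

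For the second part, I would adapt the reasoning of Claim~\ref{claim:lift-walk}. Having just shown that every unlabeled vertex is visited at most once by the closed walk $W$, suppose $W$ makes a transition between two distinct $2$-connected components $B_1 \neq B_2$ of $H$ through an unlabeled cut vertex $v$. Since $v$ separates $B_1$ from $B_2$ in $H$ and $W$ is closed, $W$ must later return from the $B_2$-side to the $B_1$-side, which is possible only by revisiting $v$; this contradicts $v$ being visited at most once. Therefore all inter-component transitions of $W$ occur at labeled cut vertices of $H$, which is exactly the condition placing the traversed $2$-connected components in a single blob.

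The main obstacle I anticipate is the bookkeeping in the first part: carefully defining ``visit'' so that a junction of consecutive $P_i'$ and $P_{i+1}'$ is treated as one visit (otherwise every seam in $X$ would trivially be counted twice), and handling the subcase where an interior visit in $P_k'$ coincides with the same path's own endpoint. Once the first part is settled, the blob-containment statement follows in one short step from the cut-vertex argument above.
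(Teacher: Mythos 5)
Your proposal is correct and takes essentially the same route as the paper's (much terser) proof: the paper asserts in one sentence that unlabeled vertices of the contracted walk are vertices of $C$ ``appearing as many times,'' which your interior/junction visit bookkeeping justifies in detail, and then notes that only labeled vertices can be cut vertices on the closed walk, which is exactly your revisiting argument. The extra care you take with the seams of the decomposition and with Lemma~\ref{lem:ect-red-prop} is precisely what the paper leaves implicit, so nothing is missing.
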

\begin{proof}
	Because vertices of $C$ appear at most once and vertices of the contracted walk $W$ that are not
	labeled are vertices of $C$ appearing as many times, we conclude that only labeled vertices
	can be visited more than once. Now only labeled vertices can be cut vertices in $W$ and by
	definition of blobs they do not separate distinct blobs.
\end{proof}

We now move to the description of the algorithm of Lemma~\ref{lem:merge-F}. 
For all problems, start by computing the 2-connected components of $H$ and blobs.
Then, proceed as follows, depending on the problem at hand. 

\begin{description}
\item[SFVS.] Check that no blob of $H$ contains an $S$-vertex.
Then, return any element of $F(H,X)$ as the desired forest. (There is no need to compute $\alpha$ for \textsc{SFVS}.)
\item[ECT.]
For each blob $B$ of $H$, proceed as follows.
Reject if $B$ is not a cycle or contains a vertex with label “odd cycle”.
Also reject if the sum of labels of edges of the cycle is $0$. 
Otherwise, the edges of $B$ are removed and a vertex adjacent to all vertices of $B$ is added,
with a label “odd cycle”. The new edges are given label $0$.
\item[SOCT.]
For each blob $B$ of $H$, proceed as follows.
Check if $B$ contains a cycle that does not sum to $0$, if not, compute the
bipartition of vertices based on their potential.
Reject if $B$ contains an $S$-vertex and either a cycle that does not sum to $0$ or a vertex labeled “not bipartite”.
Otherwise, the internal edges of $B$ are removed, a vertex adjacent to unlabeled vertices of $B$ is
added, labeled vertices of $B$ are identified to it. The new vertex is an $S$-vertex if $B$ contains
an $S$-vertex and is labeled “bipartite” if $B$ contained no cycle that does not sum to $0$ nor
vertex labeled “not bipartite”, otherwise, it is labeled “not bipartite”.
If we add a vertex labeled ``bipartite'', two cases arise for edges incident to it. First, if the
other endpoint is in $B$, it is labeled with the computed potential of this endpoint. Second, if the
other endpoints is not in $B$, it corresponds to an edge that was incident to a former labeled
vertex $v$ in $B$, we add the potential of $v$ to the previous label of the edge.
If we add a vertex labeled ``not bipartite'', we label all edges with $0$.
\item[SECT.]
For each blob $B$ of $H$, proceed as follows.
First, we consider the case when $B$ contains no $S$-vertex. 
Check with depth-first search if $B$ contains a cycle that does not sum to $0$, if not, compute the
bipartition of vertices based on their potential.
Reject if there is more than one vertex labeled ``internal bipartite'' in $B$, 
or there is a vertex labeled ``internal bipartite'' and a cycle that does not sum to $0$,
or there is a vertex labeled ``internal bipartite'' and a vertex labeled ``not bipartite''. 
Otherwise, add a vertex, make it adjacent to unlabeled vertices
of $B$ and identify labeled vertices of $B$ to it. 
This additional vertex is labeled ``bipartite'' if $B$ contained no cycle that does not sum to $0$
nor vertex labeled ``not bipartite'' or ``internal bipartite'', it is labeled ``internal bipartite''
if it contained a vertex labeled ``internal bipartite'', and it is labeled ``not bipartite''
otherwise.
If we add a vertex labeled ``bipartite'' or ``internal bipartite'', two cases arise for edges
incident to it. First, if the other endpoint is in $B$, it is labeled with the computed potential of
this endpoint. Second, if the other endpoints is not in $B$, it corresponds to an edge that was
incident to a former labeled vertex $v$ in $B$, we add the potential of $v$ to the previous label of
the edge. If the additional vertex is labeled differently, we label all edges with $0$.

Consider now the case when $B$ contains an $S$-vertex.
Reject if one of the following cases occur: $B$ contains a vertex labeled ``not bipartite'', ``odd cycle'', or ``internal bipartite'', 
there is an $S$-vertex in $B$ that is of degree more than $2$, or 
there is a connected component of $B-S$ where the number of edges with exactly one endpoint in the
said component is more than $2$.
Otherwise, pick some $S$-vertex $v$ in $B$. We define $\widetilde{B}$ as the graph obtained from $B$
by subdividing $v$ in two vertices separated by a new edge labeled $1$ and each adjacent to one of
the two edges that were adjacent to $v$.
Check with depth-first search if $\widetilde{B}$ contains a cycle that does not sum to $0$, if not,
compute the bipartition of vertices based on their potential.
If we did not reject, for each connected component $B'$ of of $B-S$, 
a vertex adjacent to unlabeled vertices of $B'$ and labeled
``internal bipartite'' is added and labeled vertices of $B'$ are identified with it.
Then edges of $B$ incident to its $S$-vertices are also removed and a vertex labeled ``odd cycle''
adjacent to $S$-vertices of $C$ and to the new vertices labeled ``internal bipartite'' is added.
For an edge $e$ incident to a vertex labeled ``internal bipartite'' but not to a vertex labeled ``odd
cycle'', either $e$ is labeled with the computed potential of its other endpoint in $B$, or the
other endpoint is not in $B$, so $e$ used to be incident to labeled vertex $v$ in $B$, then we add
the potential of $v$ to the edge's previous label. Other edges are labeled $0$.
\end{description}

We conclude by reducing the forest obtained from $H$ and denote by $F$ the resulting forest.

\begin{claim}\label{claim:join-reject}
	The join processing rejects if and only if $G_1 \cup G_2$ is not \sq-cycle-free.
\end{claim}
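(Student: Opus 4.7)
The plan is to use the correspondence between blobs of $H = F_1 \cup F_2$ and nontrivial 2-connected components of $G_1 \cup G_2$ established in Claims \ref{claim:H-to-G} and \ref{claim:G-to-H}, and to check problem by problem that the rejection conditions exactly detect the existence of a \sq-cycle.

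First I observe that since each of $G_1$ and $G_2$ is already \sq-cycle-free by assumption, any \sq-cycle $C$ in $G_1 \cup G_2$ must use at least one edge from each side. By Claim \ref{claim:G-to-H}, its contracted walk lies inside a single blob $B$ of $H$, and by Claim \ref{claim:H-to-G}, $B$ is contained in a single 2-connected component $C^\ast$ of $G_1 \cup G_2$ that contains every unlabeled vertex of $B$ together with the vertices of every nontrivial 2-connected component represented by a labeled vertex of $B$. Conversely, every nontrivial 2-connected component of $G_1 \cup G_2$ that is not entirely contained in $G_1$ or in $G_2$ arises in this way from a blob. Therefore $G_1 \cup G_2$ is \sq-cycle-free if and only if, for every blob $B$, the associated 2-connected component $C^\ast$ is \sq-cycle-free, which by Lemma \ref{lem:2-cc-charac} becomes a purely structural condition.

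The rest of the proof proceeds by case analysis on the four problems. For each, I verify that the processing rejects precisely when $C^\ast$ fails the characterization of Lemma \ref{lem:2-cc-charac}. The SFVS case follows directly because $S$-membership is preserved under the formation of the underlying forest, so a blob contains an $S$-vertex iff $C^\ast$ does. For ECT, Lemma \ref{lem:ect-red-prop} ensures that the cycle structure of $C^\ast$ is faithfully reflected by $B$: if $B$ is not a cycle or carries an ``odd cycle'' label, then $C^\ast$ contains two independent cycles whose symmetric difference yields an even cycle; otherwise Lemma \ref{lem:red-color} together with the valid $\mathbb{F}_2$-labeling reads off the parity of the unique cycle. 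The SOCT case reduces, via Lemma \ref{lem:2-cc-charac}.3, to testing bipartiteness of $C^\ast$ whenever an $S$-vertex is present, which is exactly what the potential computation on $B$ combined with the ``not bipartite'' label check performs.

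The most delicate case is SECT. When $B$ contains no $S$-vertex, I still need to exclude merging two ``internal bipartite'' subcomponents or combining one with a non-bipartite cycle, because by a symmetric difference argument in the spirit of Claim \ref{claim:theta} each such merge creates an even $S$-cycle through the $S$-vertex preserved elsewhere in the reduced forest. When $B$ contains an $S$-vertex, the rejection conditions encode exactly the degree and outdegree constraints of the last form of Lemma \ref{lem:2-cc-charac}.4, and the subdivision trick adding a weight-$1$ edge at some $S$-vertex converts the requirement that one, hence all, $S$-cycles be odd into the homogeneous condition that no cycle of $\widetilde{B}$ sums to $0$, since subdividing a degree-$2$ $S$-vertex shifts the parity of every $S$-cycle by exactly $1$. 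The main obstacle will be bookkeeping in this final case: verifying that the edge labels produced by the join extend to a valid $\mathbb{F}_2$-labeling of the reduced forest $F$ in the sense of the paragraph preceding Lemma \ref{lem:red-color}, so that subsequent applications of that lemma remain sound.
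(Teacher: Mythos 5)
Your proposal follows essentially the same route as the paper's proof: relate blobs of $H=F_1\cup F_2$ to nontrivial 2-connected components of $G_1\cup G_2$ via Claims~\ref{claim:H-to-G} and~\ref{claim:G-to-H}, then check, problem by problem, that the rejection conditions match the characterization of Lemma~\ref{lem:2-cc-charac}, using Lemma~\ref{lem:red-color} to read off parity. One structural caveat: your intermediate statement that ``$G_1\cup G_2$ is \sq-cycle-free iff for every blob $B$ the associated component $C^\ast$ is \sq-cycle-free'' is stronger than what the two cited claims actually give. Claim~\ref{claim:H-to-G} only places each blob inside \emph{some} 2-connected component; it does not rule out that a single offending component meets several blobs, in which case no individual blob need violate the per-blob tests. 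The paper sidesteps this by proving the completeness direction cycle-by-cycle: take an offending \sq-cycle, use Claim~\ref{claim:G-to-H} to place its contracted walk in a \emph{single} blob, and show that this walk forces one of that blob's rejection conditions to fire. Since you already invoke the contracted walk at the start, you should carry the argument through at that level rather than via a blob-to-component correspondence you have not established.

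The second issue is in the SECT case: your ``homogeneous condition'' has the parity inverted. After subdividing the chosen degree-2 $S$-vertex and inserting an edge labeled $1$, the sum of every $S$-cycle shifts by $1$, so odd $S$-cycles now sum to $0$ and even $S$-cycles sum to $1$, while cycles inside bipartite subcomponents still sum to $0$. The correct acceptance condition is therefore that \emph{every} cycle of $\widetilde{B}$ sums to $0$ (the algorithm rejects upon finding one that does not), not that ``no cycle of $\widetilde{B}$ sums to $0$''; as written, your condition would reject exactly the instances that should be accepted. The remaining SECT bookkeeping (degree and outdegree constraints, exclusion of merges involving ``internal bipartite'' labels, and the validity of the produced $\mathbb{F}_2$-labeling) matches the paper's treatment.
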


\begin{proof}
	For SFVS, if the join processing rejects then there is a blob of $H$ containing an $S$-vertex,
	by Claim \ref{claim:H-to-G}, there is a nontrivial 2-connected component containing an
	$S$-vertex in $G_1 \cup G_2$ so there is an $S$-cycle.  

	Conversely, if $G_1 \cup G_2$ is not $S$-cycle-free, since $G_1$ and $G_2$ are $S$-cycle-free,
	it contains an $S$-cycle $C$. We consider the contracted walk of $C$ in $H$. By Claim
	\ref{claim:G-to-H}, this walk is contains an $S$-vertex in $H$ and is contained in a blob of $H$,
	causing the join to reject.

	For ECT, if the join processing rejects one the following happens. 
	\begin{itemize}
	\item There is a blob $B$ that is not a cycle in $H$, then by Claim \ref{claim:H-to-G}, there is a
	nontrivial 2-connected component $C$ of $G_1 \cup G_2$ containing its vertices and because it
	each path in $B$ can be lifted in a walk in $C$, there is more than one cycle in $C$. 
	\item There is a blob $B$ that contains a vertex labeled “odd cycle”, then by Claim
	\ref{claim:H-to-G}, there is a 2-connected component of $G_1 \cup G_2$ containing an odd cycle
	and a vertex that is not in this cycle.
	\item The sum of labels on edges of the cycle is $0$ in a blob of $H$ that is a cycle, then
	there must be a non trivial 2-connected component containing its vertices in $G_1 \cup G_2$ by
	Claim \ref{claim:H-to-G} and it must be bipartite, so there is an even cycle in $G_1 \cup G_2$.
	\end{itemize}
	In all cases, we conclude that $G_1 \cup G_2$ is not even-cycle-free with Lemma
	\ref{lem:2-cc-charac}.

	Conversely, if there exists an even cycle in $G_1 \cup G_2$, since it is not contained in $G_1$
	or $G_2$, it contains edges of both, by Claim \ref{claim:G-to-H}, its contracted walk $W$ visits
	more than once only labeled vertices. First, if $W$ contains a labeled vertex it causes a
	reject. Otherwise, $W$ must be a cycle of $H$. Consider the blob containing $W$, either it is
	not a cycle, causing a reject, or it is a cycle, hence it is exactly $W$ which cannot sum to $0$
	because it represents an even cycle, this also causes a reject. 	

	For SOCT, if the join processing rejects then there must exist a blob $B$ of $H$ that contains an
	$S$-vertex $v$ and either a cycle that does not sum to $0$ or a vertex labeled “not bipartite”,
	by Claim \ref{claim:H-to-G} there is a 2-connected component $C$ of $G_1 \cup G_2$ containing
	$v$ but also an odd cycle: if $B$ contains a vertex labeled “not bipartite”, then $C$ contains
	an odd cycle contained in $G_1$ or $G_2$, otherwise $B$ contains a cycle that does not sum to
	$0$ which means that $C$ is not bipartite by Lemma \ref{lem:red-color}. From Lemma
	\ref{lem:2-cc-charac} we conclude that $G_1 \cup G_2$ is not odd-$S$-cycle-free.

	Conversely, if $G_1 \cup G_2$ contains an odd $S$-cycle, since it is not contained in $G_1$ or
	$G_2$, by Claim \ref{claim:H-to-G}, there must be a corresponding contracted walk $W$ in $H$,
	which is contained by a blob $B$ of $H$. If the blob $B$ contains a vertex labeled “not
	bipartite”, it causes a reject, otherwise the walk $W$ in $H$ encodes parity correctly 
	by Lemma \ref{lem:red-color}, so a cycle that does not sum to $0$ will be found, causing a reject.

	For SECT, if the join processing rejects then there must exist a blob $B$ of $H$
	that does not respect a condition, let $C$ denote the nontrivial 2-connected component of $G_1
	\cup G_2$ containing its vertices (Claim \ref{claim:H-to-G}). 
	We first suppose that $B$ contains no $S$-vertex.
	\begin{itemize}
	\item If $B$ contains two vertices labeled “internal bipartite”, then either $C$ contains only one
	odd-$S$-cycle component but then we added an $S$-free path between two of its bipartite
	subcomponents, or $C$ contains two odd-$S$-cycle
	components. In both cases, $C$ is not of a form that is possible for $G_1 \cup G_2$
	even-$S$-cycle-free.
	\item If $B$ contains a vertex labeled “internal bipartite” and a vertex labeled “not bipartite”
	or a cycle that does not sum to $0$, then $C$ contains an odd-$S$-cycle component and an odd
	cycle (either from the contained component that is not bipartite, or the cycle that does not sum
	to $0$). In both cases, $C$ is not of a form that is possible for $G_1 \cup G_2$
	even-$S$-cycle-free. 
	\end{itemize}
	We now suppose that $B$ contains an $S$-vertex.
	\begin{itemize}
	\item If $B$ contains a vertex labeled “not bipartite” or “odd cycle”, then $C$ cannot be of the form
	that is possible for $G_1 \cup G_2$ even-$S$-cycle-free.
	\item If $B$ contains a vertex labeled internal bipartite then either we have added a path containing
	an $S$-vertex between two vertices of the same bipartite subcomponent of an odd-$S$-cycle
	component contained in $G_1$ or $G_2$, or we have added a path between two existing bipartite
	subcomponents. In both cases $C$ cannot be of the form that is possible for $G_1 \cup G_2$
	even-$S$-cycle-free.
	\item If $B$ contains an $S$-vertex that is of degree more than two, or a connected component of
	$B-S$ has more than 2 edges with exactly one endpoint in the said component, then the same holds
	in $C$. This contradicts the form of a nontrivial 2-connected
	component containing an $S$-vertex for $G_1 \cup G_2$ even-$S$-cycle free
	\item If $\widetilde{B}$ contains a cycle that does not sum to $0$, then there must exist an even
	$S$-cycle in $G$ by construction of $\widetilde{B}$ and Lemma \ref{lem:red-color}. This
	immediately implies that $G_1 \cup G_2$ is not even-$S$-cycle-free.
	\end{itemize}
	
	Conversely, if $G_1 \cup G_2$ contains an even $S$-cycle $C$, then because $G_1$ and $G_2$ don't
	contain it, by Claim \ref{claim:G-to-H}, the corresponding contracted walk $W$ in $H$ is in a
	single blob $B$ of $H$. $B$ contains an $S$-vertex from containing $W$. Either, $B$ contains
	labeled vertices that are not labeled bipartite or does not respect the degree conditions,
	causing a reject, or it contains only labeled vertices with label “bipartite” which means that
	the parity of paths in $B$ is correctly encoded (Lemma \ref{lem:red-color}), and all $S$-cycles
	contain all $S$-vertices from degree properties, in particular, we can deduce that any
	$S$-vertex chosen to be subdivided in $\widetilde{B}$ would be in $C$, and then the cycle has
	even length which will cause a the existence of a cycle that does not sum to $0$ in
	$\widetilde{B}$, causing a reject.
\end{proof}

\begin{claim}\label{claim:join-forest}
	The computed forest is in $F(G_1 \cup G_2,X)$ and the computed $\mathbb{F}_2$-labeling $\alpha$
	of its edges is valid for it, if $G_1 \cup G_2$ is \sq-cycle-free.
\end{claim}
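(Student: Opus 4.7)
The plan is to show that, provided the join processing does not reject (so by Claim~\ref{claim:join-reject} the graph $G_1 \cup G_2$ is \sq-cycle-free), the graph constructed from $H$ prior to the final reduction is a concrete instance of the underlying forest $F(G_1 \cup G_2)$ of Definition~\ref{def:underlying-forest} equipped with a valid $\mathbb{F}_2$-labeling; the closing reduction step then yields an element of $F(G_1 \cup G_2, X)$ with valid labeling, as required.

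First I would pin down the correspondence between blobs of $H$ and nontrivial 2-connected components of $G_1 \cup G_2$. By Claim~\ref{claim:H-to-G}, each blob $B$ of $H$ is contained in a single nontrivial 2-connected component $C_B$ of $G_1 \cup G_2$, in the sense that the unlabeled vertices of $B$ and the vertices of the components associated to the labeled vertices of $B$ all lie in $C_B$. Conversely, Claim~\ref{claim:G-to-H} implies that any cycle of $G_1 \cup G_2$ using edges from both $G_1$ and $G_2$ contracts to a walk confined to a single blob of $H$. Together with the fact that nontrivial 2-connected components of $G_1 \cup G_2$ wholly contained in $G_1$ or $G_2$ are already represented by labeled vertices of $F_1$ or $F_2$, this gives a bijection between nontrivial 2-connected components of $G_1 \cup G_2$ and labeled vertices of the intermediate graph produced by the algorithm.

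Next, problem by problem, I would verify that the replacement performed on each blob yields the Definition~\ref{def:underlying-forest}-style encoding of $C_B$ with the correct label. For \SFVS{} this is immediate. For \ECT{} it follows from Lemma~\ref{lem:2-cc-charac} combined with Lemma~\ref{lem:red-color}: the non-rejection conditions guarantee that $B$ is a single cycle with no ``odd cycle'' label and not summing to $0$, which translates to $C_B$ being a genuine odd cycle. The \SOCT{} case is similar, with Lemma~\ref{lem:red-color} used to translate the ``no cycle unbalanced'' check into bipartiteness of $C_B$, while $S$-membership is inherited in the obvious way. For \SECT{} when $B$ is $S$-free the analysis mirrors \SOCT{}, with extra care to propagate a single ``internal bipartite'' label if one is already present in $B$. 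When $B$ contains an $S$-vertex, the degree constraints imposed by non-rejection force $C_B$ into the cyclic form of Lemma~\ref{lem:2-cc-charac}, and the subdivision trick on a chosen $S$-vertex in $\widetilde{B}$ exploits that all $S$-cycles of $C_B$ share the same parity to reduce ``some $S$-cycle is odd'' to a single unbalanced-cycle check.

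Finally, I would argue validity of $\alpha$ by exhibiting a witness $\beta$ on $F(G_1 \cup G_2)$. On edges inherited unchanged from $H$, validity descends from the input validity of $\alpha_1$ and $\alpha_2$. For each blob $B$ in which a ``bipartite'' or ``internal bipartite'' vertex $w$ is introduced, the assignment of computed potentials as new edge labels, together with the additive update on edges that used to be incident to a former labeled vertex of $B$, realises exactly the transformation demanded by the validity definition. For ``not bipartite'' and ``odd cycle'' vertices, all new edges are labeled $0$, again in agreement with the definition. The closing reduction step, which propagates labels along contracted paths as sums, preserves validity by construction. The hard part will be the \SECT{} case with $S$-vertices in $B$: here the interplay between ``internal bipartite'' vertices possibly inherited from $F_1$ or $F_2$, the connected components of $B - S$ interpreted as bipartite subcomponents of $C_B - S$, and the single-subdivision parity witness in $\widetilde{B}$ all have to fit together to reconstruct the third form of Definition~\ref{def:underlying-forest}; this is a matter of careful bookkeeping rather than new ideas, once the blob/component correspondence and Lemmas~\ref{lem:2-cc-charac} and~\ref{lem:red-color} are in hand.
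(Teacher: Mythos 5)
Your proposal is correct and follows essentially the same route as the paper's proof: establish the correspondence between blobs of $H$ and nontrivial $2$-connected components of $G_1 \cup G_2$ via Claims~\ref{claim:H-to-G} and~\ref{claim:G-to-H}, verify per problem that the labels attached to the new vertices match the forms of Definition~\ref{def:underlying-forest} using Lemmas~\ref{lem:2-cc-charac}, \ref{lem:ect-red-prop} and~\ref{lem:red-color}, and check that the potential-based edge relabeling yields a valid $\mathbb{F}_2$-labeling that survives the final reduction. If anything, your plan is more explicit than the paper's rather terse argument, particularly in spelling out the witness $\beta$ for validity and the delicate \textsc{SECT} case with $S$-vertices.
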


\begin{proof}
	Consider a nontrivial 2-connected component $C$ of $G_1 \cup G_2$, it may be contained in $G_1$
	or $G_2$, in this case its active vertices have already been connected to a labeled vertex
	representing $C$ so there is nothing more to do. If $C$ contains edges from both $G_1$ and $G_2$
	then there must be a blob $B$ in $H$ that contains part of the active vertice of $C$. By
	construction, active vertices of $C$ will be adjacent to the labeled vertex representing their
	component: they can only be in $B$ or adjacent to a labeled vertex in $B$ and labeled vertices
	were identified to the new vertex. 
	Furthermore, one can check that the choice of labels correspond to the forms of the components
	of $G_1 \cup G_2$ based on the information stored in the labels of $F_1$ and $F_2$, and no
	information can be missing (Lemmata \ref{lem:ect-red-prop} and \ref{lem:red-color}).
	
	The vertices that were removed in $F(G_1)$ and in $F(G_2)$ to obtain $F_1$ and $F_2$ must still
	be removed in $F(G_1 \cup G_2)$. In particular when a path of inactive vertices of degree 2
	leads to the creation of a cycle, its inactive vertices become leaves and can then all be
	simplified. Because after producing a forest we reduce it, there cannot be additional reductions
	to be performed, hence the computed forest is in $F(G_1 \cup G_2,X)$.

	It is straightforward to check that $\alpha$ is a valid $\mathbb{F}_2$-labeling of $F$. 
\end{proof}

We conclude the proof of Lemma~\ref{lem:merge-F} with an observation 
that it is straightforward to implement the discussed algorithm to run
in time polynomial in the input size.
Finally, note that the input size is of size $\Oh(|X|)$. 

\subsection{Algorithm}

We now describe our dynamic programming algorithm on a nice tree decomposition $(T,\{X_t\}_{t \in
V(T)})$ of graph $G$. It consists of a bottom-up computation with states $(t,Y,F,\alpha)$ where $t
\in V(T)$, $Y\subseteq X_t$ is the set of undeleted vertices of $X_t$, $F$ is a labeled forest
description with active vertices $Y$, and $\alpha$ is a $\mathbb{F}_2$-labeling of edges of $F$. We
denote by $d[t,Y,F,\alpha]$ the cell of the table corresponding to this state. We call a state
\emph{reachable} if its cell is updated at least once by a transition.

We call a state $(t,Y,F,\alpha)$ \emph{admissible} if there exists $U \subseteq V(G_t)$ such that
$Y=X_t \setminus U$, $G_t-U$ is \sq-cycle-free, $F$ is a forest description of a member of
$F(G_t-U,Y)$, $\alpha$ is a valid 2-labeling of $F$ and $d[t,Y,F,\alpha]=c(U)$.

To prove the correctness of the algorithm we will prove that reachable states are admissible and
that for each $t \in V(T)$ and $U \subseteq V(G_t)$, if $G_t-U$ is \sq-cycle-free there exists a
state with value at most $c(U)$. The optimal transversal weight will be in
$d[r,\varnothing,\varnothing,\varnothing]$ where $r$ is the root of the decomposition.

We now describe the computations for each node $t$ of the nice tree decomposition based on its type.
\begin{enumerate}
	\item \textbf{Leaf node.} We set $d[t,\varnothing,\varnothing,\varnothing]=0$

	\item \textbf{Introduce vertex node.} Let $t'$ denote the child node of $t$ and $v$ be the
	introduced vertex. 
	For each reachable state $(t',Y',F',\alpha')$, we have two transitions representing the choice
	of deleting the vertex or not: $$d[t,Y',F',\alpha'] \leftarrow d[t',Y',F',\alpha'] + c(v)$$
	$$d[t',Y'\cup\{v\},F,\alpha'] \leftarrow d[t',Y',F',\alpha']$$ where $F$ is
	obtained from $F'$ by adding an isolated active vertex $v$.

	\item \textbf{Forget vertex node.} Let $t'$ denote the child node of $t$ and $v$ be the
	forgotten vertex. For each reachable state $(t',Y',F',\alpha')$, if $v \notin Y'$ then the
	transition is simply: $$d[t,Y',F',\alpha'] \leftarrow d[t',Y',F',\alpha']$$ 
	If $v \in Y'$, we perform a join processing of $(F',\alpha')$ and $(H,\beta)$ where $H$ is the
	union of a star graph with internal vertex $v$ and leaves $N_G(v) \cap Y'$ and isolated vertices
	for $Y'\setminus N_G[v]$, and $\beta$ its edges to $1$. 
	If the join processing does not reject and returns $(\widetilde{F},\widetilde{\alpha})$, we
	obtain $(F,\alpha)$ from $(\widetilde{F},\widetilde{\alpha})$ by making $v$ inactive and
	applying reduction rules, then have transition: $$d[t,Y' \setminus \{v\},F,\alpha] \leftarrow
	d[t',Y',F',\alpha']$$

	\item \textbf{Join node.} Let $t_1$ and $t_2$ denote the two children of $t$. For each pair of
	reachable states $(t_1,Y,F_1,\alpha_1)$ and $(t_2,Y,F_2,\alpha_2)$, we perform a join process of
	$(F_1,\alpha_1)$ and $(F_2,\alpha_2)$. If the join isnt rejected, we obtain $(F,\alpha)$ and
	have transition $d[t,Y,F,\alpha] \leftarrow d[t_1,Y,F_1,\alpha_1] + d[t_2,Y,F_2,\alpha_2]$.
\end{enumerate}

\begin{lemma}\label{lem:ect-sound}
	All reachable states are admissible.
\end{lemma}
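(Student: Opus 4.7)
The plan is to proceed by induction on the node $t$ of the nice tree decomposition, bottom up. Since each cell $d[B]$ is updated by $\min$ over all propagations that write into it, it suffices to show that every individual propagation $d[B] \leftarrow f(\mathcal{A})$ produces a value of the form $c(U)$ for some $U \subseteq V(G_t)$ satisfying all the admissibility conditions at $B$; the final $d[B]$ is then $c(U^\star)$ for the cheapest such witness. The base case, a leaf $t$, is immediate: the unique propagation sets $d[t,\varnothing,\varnothing,\varnothing] = 0$, witnessed by $U = \varnothing$ since $G_t$ is empty. For the inductive step I assume admissibility of all reachable states at the child(ren) of $t$.

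At an introduce node with new vertex $v$, the vertex $v$ is isolated in $G_t$ (edges incident to $v$ appear only at the subsequent forget step); the delete transition is witnessed by $U = U' \cup \{v\}$ with $G_t - U = G_{t'} - U'$, and the keep transition is witnessed by $U = U'$, the updated forest being $F'$ with $v$ adjoined as an isolated active vertex. At a forget node with forgotten $v$, the new edges of $G_t$ relative to $G_{t'}$ are exactly those from $v$ to $N_G(v) \cap X_{t'}$. If $v$ was deleted at the child ($v \notin Y'$), these edges have a deleted endpoint, so $G_t - U' = G_{t'} - U'$ and $(F', \alpha')$ carries over with $U = U'$. If $v \in Y'$, the algorithm performs join processing of $(F', \alpha')$ with the star $(H, \beta)$ encoding precisely these new edges; by Lemma~\ref{lem:merge-F} it either rejects (in which case $G_t - U'$ is not \sq-cycle-free and there is no admissibility obligation) or returns $(\widetilde{F}, \widetilde{\alpha}) \in F(G_t - U', Y')$ with valid $\widetilde{\alpha}$, and deactivating $v$ followed by the reduction rules yields a member of $F(G_t - U, Y)$ with $U = U'$.

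At a join node with children $t_1, t_2$ and admissible inputs with witnesses $U_1, U_2$, both $U_j$ restrict on the bag to $X_t \setminus Y$, so $V(G_{t_1} - U_1) \cap V(G_{t_2} - U_2) = X_t \setminus (U_1 \cup U_2) = Y$. Lemma~\ref{lem:merge-F} then either rejects or returns $(F, \alpha) \in F(G_t - U, Y)$ with valid $\alpha$ for $U = U_1 \cup U_2$, which is exactly the admissibility assertion at the propagated cell. The step I expect to be the main obstacle is not the case analysis itself but the verification, in the forget-keep and join cases, that the forest obtained after join processing and reduction genuinely lies in $F(G_t - U, Y)$ as a set-valued object (recall from Definition~\ref{def:red-forest} that $F(G_t - U, Y)$ is defined only up to the nondeterministic application of the reduction rules); granted Lemma~\ref{lem:merge-F}, this reduces to the compatibility of the reduction rules with the blob-contraction procedure, captured essentially by Claim~\ref{claim:ect-red-prop}. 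The cost bookkeeping matches the stated transitions in each of the four node types and is routine.
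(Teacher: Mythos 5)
Your proposal is correct and follows essentially the same route as the paper: induction over the nice tree decomposition with a case analysis by node type, delegating the forget-keep and join cases to the join-processing machinery (the paper invokes its Claims~\ref{claim:join-reject} and~\ref{claim:join-forest}, which are the two halves of Lemma~\ref{lem:merge-F} that you cite). Your explicit checks that the witnesses agree on the bag (so that $V(G_{t_1}-U_1)\cap V(G_{t_2}-U_2)=Y$) and that a minimum over admissible propagated values remains admissible are details the paper leaves implicit, but they do not change the argument.
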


\begin{proof}
	By induction on $T$.
\begin{enumerate}
	\item If $t$ is a leaf node, then choosing $U=\varnothing$ we have that
	$(t,\varnothing,\varnothing,\varnothing)$ is admissible.
	
	\item If $t$ is an introduce vertex node with child $t'$, introducing vertex $v$, then
	for $(t,Y,F,\alpha)$ a reachable state, there exists the state that gave it optimal value
	$(t',Y',F',\alpha')$. By induction hypothesis applied to $(t',Y',F',\alpha')$, there exists $U'$
	such that $Y'= X_{t'}\setminus U'$, $G_{t'}-U'$ is \sq-cycle-free, $F'$ is a forest description
	of a member $F(G_{t'}-U',Y')$, $\alpha'$ is a valid $\mathbb{F}_2$-labeling of $F'$ and
	$d[t',Y',F',\alpha']=c(U')$.
	If $v \notin Y$, then we can deduce the transition that was used, we set $U= U' \cup \{v\}$,
	since $Y'=X_{t'} \setminus U'$ and $X_t=X_{t'} \cup \{v\}$, we have $Y=X_t \setminus U$. We also
	have $d[t,Y,F,\alpha']=d[t',Y',F',\alpha']+c(v)=c(U')+c(v)=c(U)$. $G_t-U = G_{t'}-U'$, $F'=F$,
	$Y'=Y$, so $G_t-U$ is \sq-cycle-free, $F$ is a forest description of a member of $F(G_t-U,Y)$
	and $\alpha$ is a valid $\mathbb{F}_2$-labeling of $F$.
	If $v \in Y$, then we deduce the transition and set $U=U'$. then $Y=Y' \cup \{v\}$ and since
	$Y'=X_{t'}\setminus U'$ and $X_t=X_{t'} \cup \{v\}$, we have $Y = X_t \setminus U$. We also have
	$d[t,Y,F,\alpha]=d[t',Y',F',\alpha']=c(U')=c(U)$. $G_t-U$ is \sq-cycle-free because $G_{t'}-U'$
	is and $v$ is isolated. $F$ is a forest description of $F(G_t-U,Y)$ by construction, and $\alpha$
	is a valid $\mathbb{F}_2$-labeling of $F$ because $v$ is an isolated vertex.

	\item If $t$ is a forget vertex node with child $t'$, forgetting vertex $v$, then for
	$(t,Y,F,\alpha)$ a reachable state, there exists the state that gave it optimal value
	$(t',Y',F',\alpha')$. By induction hypothesis applied to $(t',Y',F',\alpha')$, there exists $U$
	such that $Y'= X_{t'}\setminus U$, $G_{t'}-U$ is \sq-cycle-free, $F'$ is a forest description
	of a member of $F(G_{t'}-U,Y')$, $\alpha'$ is a valid $\mathbb{F}_2$-labeling of $F'$ and
	$d[t',Y',F',\alpha']=c(U)$. In all cases, $d[t,Y,F,\alpha]=c(U)$.
	If $v \notin Y'$, then we can deduce the transition that was used, and there is nothing to show
	since $Y=Y'$, $G_t-U=G_{t'}-U$, $F=F'$ and $\alpha=\alpha'$.
	If $v \in Y'$, then we can deduce the transition that was used. We have $Y=Y'\setminus \{v \} =
	X_t \setminus U$. $G_t-U = G_{t'}-U \cup H$ so, by Claim \ref{claim:join-reject}, $G_t-U$ is
	\sq-cycle-free. By Claim \ref{claim:join-forest}, $F$ is a forest description of a member of
	$F(G_t-U,Y)$ and $\alpha$ is a valid $\mathbb{F}_2$-labeling of $F$.

	\item If $t$ is a join node with children $t_1$ and $t_2$, then for $(t,Y,F,\alpha)$ a
	reachable state, there exist states that gave it the optimal value $(t_1,Y,F_1,\alpha)$ and
	$(t_2,Y,F_2,\alpha_2)$. By induction hypothesis applied to these states, for $i \in \{1,2\}$,
	there exists $U_i$ such that $Y=X_{t_i} \setminus U_i$, $G_{t_i}-U_i$ is \sq-cycle-free, $F_i$
	is a forest description of a member of $F(G_{t_i}-U_i,Y)$, $\alpha_i$ is a valid
	$\mathbb{F}_2$-labeling of $F_i$ and $d[t_i,Y,F_i,\alpha_i]=c(U_i)$. 
	We set $U = U_1 \cup U_2$, then $G_t-U = G_{t_1}-U_1 \cup G_{t_2} - U_2$ and since
	$X_t=X_{t_1}=X_{t_2}$, we have $Y=X_t \setminus U$.
	By Claim \ref{claim:join-reject}, $G_t-U$ is \sq-cycle-free. By Claim \ref{claim:join-forest},
	$F$ is a forest description of a member of $F(G_t-U,Y)$ and $\alpha$ is a valid
	$\mathbb{F}_2$-labeling of $F$.\qedhere
\end{enumerate}
\end{proof}

\begin{lemma}\label{lem:ect-complete}
	For every node $t \in V(T)$, every $U \subseteq V(G_t)$, if $G_t-U$ is \sq-cycle-free then there
	exists $F,\alpha$ such that $F$ is a forest description of a member of $F(G_t - U,X_t \setminus
	U)$, $\alpha$ is a valid $\mathbb{F}_2$-labeling of $F$, $(t,X_t \setminus U,F,\alpha)$ is
	reachable, and $d[t,X_t \setminus U,F,\alpha] \leq c(U)$. 
\end{lemma}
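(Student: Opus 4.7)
}
The proof proceeds by bottom-up induction on the nice tree decomposition, mirroring the case analysis of Lemma~\ref{lem:ect-sound}. The plan is, at each node $t$ and each admissible $U$, to exhibit a specific $(F,\alpha)$ together with a derivation showing that the state $(t, X_t\setminus U, F, \alpha)$ is reached by the algorithm with value at most $c(U)$. The base case is the leaf, where $X_t=\emptyset$ forces $U=\emptyset$ and the state $(t,\emptyset,\emptyset,\emptyset)$ is initialized with value $0=c(\emptyset)$.

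For an introduce vertex node $t$ with child $t'$ introducing $v$, I split according to whether $v\in U$. Since $v$ is isolated in $G_t$ (all its incident edges lie in $E(G[X_t])$ and are therefore absent from $G_t$), the deletion or non-deletion of $v$ decouples cleanly from the rest. If $v\in U$, I set $U':=U\setminus\{v\}\subseteq V(G_{t'})$ and note $G_{t'}-U'=G_t-U$, so induction yields a reachable $(t', X_t\setminus U, F', \alpha')$ of value at most $c(U')$; the ``delete $v$'' transition then reaches the desired state with value at most $c(U')+c(v)=c(U)$. If $v\notin U$, I set $U':=U$ and use the ``keep $v$'' transition, which appends $v$ as an isolated active vertex to $F'$; since $v$ is isolated in $G_t-U$ and active, it survives the reduction rules of Definition~\ref{def:red-forest}, so the produced $F$ is indeed in $F(G_t-U, X_t\setminus U)$, and $\alpha'$ remains valid because no new edges are introduced.

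For a forget vertex node $t$ with child $t'$ forgetting $v$, the only change between $G_{t'}$ and $G_t$ is the appearance of the star edges from $v$ to its neighbors in $X_{t'}$. If $v\in U$, these edges are immediately removed by deleting $v$, so $G_t-U=G_{t'}-U$ and the trivial transition copies the value unchanged. If $v\notin U$, I invoke Lemma~\ref{lem:merge-F} on $(F',\alpha')$ and the star $(H,\beta)$ used in the forget transition; the fact that $G_t-U$ is \sq-cycle-free guarantees the join does not reject, and the resulting $(\widetilde F, \widetilde\alpha)$ lies in $F(G_t-U, Y')$ by Claim~\ref{claim:join-forest}. Deactivating $v$ and reducing yields the required $(F,\alpha)\in F(G_t-U,X_t\setminus U)$ with valid labeling, and the transition propagates the value $\leq c(U)$.

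At a join node $t$ with children $t_1,t_2$, I set $U_i:=U\cap V(G_{t_i})$, so $U=U_1\cup U_2$, $X_{t_i}\setminus U_i=X_t\setminus U$, and $(G_{t_1}-U_1)\cup(G_{t_2}-U_2)=G_t-U$ with $V(G_{t_1}-U_1)\cap V(G_{t_2}-U_2)=X_t\setminus U$. Since \sq-cycle-freeness is monotone under subgraphs, each $G_{t_i}-U_i$ is \sq-cycle-free, so induction delivers reachable states $(t_i, X_t\setminus U, F_i,\alpha_i)$ of value at most $c(U_i)$; Lemma~\ref{lem:merge-F} then ensures the join transition succeeds and produces a suitable $(F,\alpha)$, reaching $(t,X_t\setminus U,F,\alpha)$ with the intended value. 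The only delicate point I anticipate is the accounting of weights at join nodes: naively $c(U_1)+c(U_2)=c(U)+c(U\cap X_t)$, which over-counts deleted bag vertices; this is resolved by the standard convention that the cost of deleting bag vertices is paid exactly once (e.g.\ by charging $c(v)$ only on one side at join nodes, or equivalently by a compensating $-c(X_t\setminus Y)$ term absorbed into the transition), and I expect this bookkeeping, rather than any structural argument, to be the only nontrivial check in the join case.
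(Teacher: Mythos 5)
Your proposal follows essentially the same route as the paper's proof: bottom-up induction over the nice tree decomposition, with the introduce, forget, and join cases discharged via the corresponding transitions and the join operation of Lemma~\ref{lem:merge-F} (through Claims~\ref{claim:join-reject} and~\ref{claim:join-forest}). The one point where you diverge is also the one point where you are more careful than the paper: in the join case the paper simply writes $c(U_1)+c(U_2)=c(U)$, which fails whenever $U\cap X_t\neq\varnothing$, since $V(G_{t_1})\cap V(G_{t_2})=X_t$ forces $U_1\cap U_2=U\cap X_t$ and hence $c(U_1)+c(U_2)=c(U)+c(U\cap X_t)$; your remark that deleted bag vertices must be charged only once (equivalently, that the join transition needs a compensating $-c(X_t\setminus Y)$ term) is a genuine correction to the bookkeeping rather than a mere verification, and with that standard fix the rest of your argument matches the paper's.
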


\begin{proof}
	By induction on $T$.
\begin{enumerate}
	\item If $t$ is a leaf node, then for $U\subseteq V(G_t)$, we have $U=\varnothing$ and $G_t-U$
	is the empty graph which is \sq-cycle-free. we have a reachable state
	$(t,\varnothing,\varnothing,\varnothing)$ with $d[t,\varnothing,\varnothing,\varnothing]= 0 \leq
	c(U)$.

	\item If $t$ is an introduce vertex node with child $t'$ introducing vertex $v$, then for
	$U\subseteq V(G_t)$ such that $G_t-U$ is \sq-cycle-free, we set $U'=U \setminus \{v\} = U \cap
	V(G_{t'})$. $G_{t'}-U'$ is an induced subgraph of $G_t-U$, hence it is \sq-cycle-free. By
	induction hypothesis, there exist $F',\alpha'$ such that $(t',X_{t'} \setminus U', F',\alpha')$
	is reachable and $d[t',X_{t'} \setminus U',F',\alpha'] \leq c(U')$.
	There are two cases, either $v \in U$ or $v\notin U$ and a transition for each case so there
	exist $F,\alpha$ such that $(t,X_t \setminus U,F,\alpha)$ is reachable and $d[t,X_t \setminus
	U,F,\alpha] \leq c(U)$.
	
	\item If $t$ is a forget vertex node with child $t'$ forgetting vertex $v$, then for $U
	\subseteq V(G_t)$ such that $G_t-U$ is \sq-cycle-free, $G_{t'}-U$ is a subgraph of $G_t-U$
	so it is \sq-cycle-free. By induction hypothesis, there exist $F',\alpha'$ such that
	$(t',X_{t'}\setminus U,F',\alpha')$ is reachable and $d[t',X_{t'}\setminus U,F',\alpha'] \leq
	c(U)$.
	By Claim \ref{claim:join-reject}, there is a transition producing $F,\alpha$ such that $(t,X_t
	\setminus U,F,\alpha)$ is reachable and $d[t,X_t \setminus U,F,\alpha] \leq c(U)$.

	\item If $t$ is a join node with children $t_1$ and $t_2$, then for $U \subseteq V(G_t)$ such
	that $G_t-U$ is \sq-cycle-free, we set $U_1=U \cap V(G_{t_1})$ and $U_2=U \cap V(G_{t_2})$.
	Hence, $G_{t_1}-U_1$ and $G_{t_2}-U_2$ are induced subgraphs of $G_t-U$ so they are
	\sq-cycle-free. By induction hypothesis, for $i \in \{1,2\}$, there exist $F_i,\alpha_i$ such
	that $(t_i,X_{t_i} \setminus U_i, F_i,\alpha_i)$ is reachable and $d[t_i,X_{t_i} \setminus U_i,
	F_i,\alpha_i]\leq c(U_i)$.
	By Claim \ref{claim:join-reject}, there is a transition producing $F,\alpha$ such that $(t,X_t
	\setminus U,F,\alpha)$ is reachable and $d[t,X_t \setminus U,F,\alpha] \leq
	d[t_1,X_{t_1}\setminus U_1,F_1,\alpha_1] + d[t_2,X_{t_2}\setminus U_2,F_2,\alpha_2] \leq c(U_1)
	+ c(U_2)=c(U)$.
\end{enumerate}
\end{proof}

\begin{lemma}\label{lem:ect-val}
	The final value of $d[r,\varnothing,\varnothing,\varnothing]$ is the weight of an optimal
	transversal.
\end{lemma}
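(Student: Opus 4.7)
The plan is to conclude \cref{lem:ect-val} by combining the soundness statement of \cref{lem:ect-sound} with the completeness statement of \cref{lem:ect-complete}, specialized to the root node $r$ of the nice tree decomposition, where $X_r = \varnothing$.

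The first step is to argue that at the root there is only one legal $(Y,F,\alpha)$-triple, namely $(\varnothing,\varnothing,\varnothing)$. Since $X_r = \varnothing$, the only admissible $Y$ is $\varnothing$. Then, for every set $U \subseteq V(G_r) = V(G)$, a forest description associated with the root must have active vertex set $X_r \setminus U = \varnothing$. Invoking the first reduction rule of \cref{def:red-forest} — which deletes inactive vertices of degree at most one — exhaustively on any forest whose vertices are all inactive reduces it to the empty forest (a forest always has a leaf or an isolated vertex, so exhaustive application peels it entirely). Hence $F(G-U,\varnothing) = \{\varnothing\}$, and the accompanying $\mathbb{F}_2$-labeling is the empty one.

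With this observation, the upper bound $d[r,\varnothing,\varnothing,\varnothing] \le c(U^\star)$ for an optimal transversal $U^\star$ follows immediately from \cref{lem:ect-complete}: the lemma guarantees the existence of some $F,\alpha$ for which $(r,\varnothing,F,\alpha)$ is reachable with value at most $c(U^\star)$, and by the previous paragraph this state is necessarily $(r,\varnothing,\varnothing,\varnothing)$. Conversely, the lower bound $d[r,\varnothing,\varnothing,\varnothing] \ge c(U^\star)$ follows from \cref{lem:ect-sound}: the state $(r,\varnothing,\varnothing,\varnothing)$ is reachable (as just established) and hence admissible, so there exists $U \subseteq V(G)$ with $G - U$ being $\sq$-cycle-free and $d[r,\varnothing,\varnothing,\varnothing] = c(U)$, which is at least $c(U^\star)$ by optimality of $U^\star$. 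Combining the two inequalities yields the equality claimed by the lemma.

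The only mild subtlety — really the only thing beyond quoting the two preceding lemmas — is the small verification that the admissible/reachable state at the root is uniquely the empty-forest state; once this is observed, the proof is a one-line sandwich of the lower and upper bounds. There is no serious obstacle here: both halves of the correctness argument have already been established in \cref{lem:ect-sound} and \cref{lem:ect-complete}, and the final lemma just extracts the conclusion for $t = r$.
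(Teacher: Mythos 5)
Your proof is correct and follows essentially the same route as the paper: sandwiching $d[r,\varnothing,\varnothing,\varnothing]$ between the upper bound from Lemma~\ref{lem:ect-complete} and the lower bound from Lemma~\ref{lem:ect-sound}. The extra verification that the root state is uniquely the empty-forest one (since all vertices are inactive and the reduction rules peel such a forest to nothing) is left implicit in the paper but is a harmless and correct addition.
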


\begin{proof}
	By applying Lemma \ref{lem:ect-complete} with $U$ a \sq-cycle transversal, because $V(G)$ is
	always a transversal, we conclude that state $(r,\varnothing,\varnothing,\varnothing)$ is
	reachable, $d[r,\varnothing,\varnothing,\varnothing] \leq c(U)$.
	By applying Lemma \ref{lem:ect-sound} to reachable state
	$(r,\varnothing,\varnothing,\varnothing)$, there exists $U^*$ a \sq-cycle transversal
	such that $d[r,\varnothing,\varnothing,\varnothing]=c(U^*)$.
	It has optimal weight due to the previous inequality.
\end{proof}

\begin{theorem}
	\textsc{Subset Feedback Vertex Set}, \textsc{Subset Odd Cycle Transversal}, and \textsc{Subset
	Even Cycle Transversal}, even in the weighted setting, can be solved in time $2^{\Oh(k \log k)}
	\cdot n$ on $n$-vertex graphs of treewidth $k$.
\end{theorem}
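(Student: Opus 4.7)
The plan is to combine the dynamic programming algorithm of Section~\ref{sec:ect} with a running time analysis; correctness of the algorithm across SFVS, SOCT, and SECT is already established by Lemmas~\ref{lem:ect-sound}, \ref{lem:ect-complete}, and~\ref{lem:ect-val}, so the remaining task is to bound the number of reachable states per node of the tree decomposition and the cost of each transition.

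First, I would count the admissible states at any node $t$. By Lemma~\ref{lem:red-forest-size}, applied with $K = \Oh(1)$ label symbols (the count is specific to each of the three problems but always a constant), every reduced forest description that arises in the table has at most $(K+1)(2|Y|-2)+1 = \Oh(k)$ vertices. Rooted labeled forest descriptions on $m$ vertices, encoded as a parent function $f:[m]\to[m]$ together with labels from a constant-size alphabet (including the active/inactive marker), number at most $m^{\Oh(m)} = 2^{\Oh(m \log m)}$; together with the choice of $Y \subseteq X_t$ this yields $2^{\Oh(k \log k)}$ choices for $(Y,F)$. The $\mathbb{F}_2$-edge-labeling $\alpha$ contributes only $2^{\Oh(k)}$, which is absorbed, so in total there are $2^{\Oh(k \log k)}$ reachable states per bag.

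Next, I would account for the cost of transitions. Each transition performs the join processing of Lemma~\ref{lem:merge-F} on input of size $\Oh(k)$, which runs in $k^{\Oh(1)}$ time; the subsequent reductions, rerooting, and forest-description equality test are likewise $k^{\Oh(1)}$. Leaf, introduce, and forget nodes perform $\Oh(1)$ transitions per child state. The only node type whose transitions enumerate pairs of states is the join node, which contributes at most $\bigl(2^{\Oh(k \log k)}\bigr)^2 = 2^{\Oh(k \log k)}$ transitions.

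Finally, I would assemble the bounds. A nice tree decomposition of width $k$ with $\Oh(k n)$ nodes is computed in time $\Oh(k^2 n)$~\cite{kloks1994treewidth}, and the DP then runs in total time $\Oh(k n) \cdot 2^{\Oh(k \log k)} \cdot k^{\Oh(1)} = 2^{\Oh(k \log k)} \cdot n$. The same argument works verbatim in the weighted setting since only sums and minima of vertex weights are manipulated, without changing the state structure. I expect the main conceptual obstacle to be already resolved: it lies in the forest-size bound of Lemma~\ref{lem:red-forest-size} combined with the join processing of Lemma~\ref{lem:merge-F}, which are exactly what makes a $2^{\Oh(k \log k)}$ state space and a $k^{\Oh(1)}$ per-state transition cost simultaneously achievable and match the ETH lower bound of Bergougnoux et al.
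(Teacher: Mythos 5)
Your proposal is correct and follows essentially the same route as the paper: correctness via Lemmas~\ref{lem:ect-sound}--\ref{lem:ect-val}, a $2^{\Oh(k\log k)}$ state count per bag from the $\Oh(k)$ forest-size bound of Lemma~\ref{lem:red-forest-size}, and $k^{\Oh(1)}$-time transitions via Lemma~\ref{lem:merge-F}. The only point to fix is the source of the decomposition: the theorem takes just a graph of treewidth $k$ as input, so before invoking~\cite{kloks1994treewidth} (which only converts an \emph{already given} decomposition into a nice one) the paper first runs a $2^{\Oh(k)}\cdot n$-time approximation algorithm to produce a tree decomposition of width $\Oh(k)$, a step your write-up omits.
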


\begin{proof}
	We use an approximation algorithm to compute a tree decomposition of width $\Oh(k)$ in time
	$2^{\Oh(k)} \cdot n$.
	We have $2^{\Oh(k \log k)}\cdot n$ states and transitions. Since transitions are computed in time
	$k^{\mathcal{O}(1)}$, the values of all states are computed in time $2^{\Oh(k \log k)}\cdot n$.
	The solution to the problem instance is correctly computed, by Lemma \ref{lem:ect-val}.
\end{proof}

\section{Subset Feedback Vertex Set in graphs of bounded cliquewidth}\label{sec:sfvs}
\def\cw{\mathbf{cw}}
\def\poly{poly}
\def\degg{d}
\def\SFVS{\textsc{Subset Feedback Vertex Set}}
\def\wt#1{{\widetilde{#1}}}
\def\wh#1{{\widehat{#1}}}
\def\P{\mathcal{P}}

We describe a dynamic programming algorithm to solve \SFVS{} on clique-width expressions.
With a bottom-up computation, it builds small labeled forests that describe the graphs that can be
obtained by vertex deletion.

A state of our dynamic programming will consist of a node of the $k$-expression, a partially labeled forest, and a label state assignment $\mathcal{P}:[k] \rightarrow \mathcal{Q}$, with $\mathcal{Q} = \{Q_\varnothing, Q_1, Q_1^*, Q_2, Q_w, Q_w^*,Q_f\}$ the set of label states. 
State $Q_\varnothing$ is assigned to labels that are completely contained in the current deletion set.
States $Q_1$ and $Q_1^*$ are assigned to labels consisting of a single non-$S$-vertex, or a single $S$-vertex, respectively. 
States $Q_w$ and $Q_w^*$ are called \emph{waiting states}: they are assigned to labels for which we have guessed that they will be joined (only once) to a non-$S$-vertex from a label in state $Q_1$, or to an $S$-vertex from a label in state $Q_1^*$, respectively. 
State $Q_2$ is assigned to labels having at least two vertices 
not in $S$: it is assigned to labels for which we have guessed that they will be joined (potentially
several times) to either a vertex from a label in state $Q_1$, or to vertices from a label in state $Q_2$.
These guessing tricks can be seen as a form of what is
called ``expectation from the outside'' in \cite{Bui-XuanSTV13}.
We point that guessing these joins implies that labels in states $Q_w,Q_w^*,Q_2$ will eventually be connected---this is detailed below.
At last, state $Q_f$ is called \emph{final state}: it will contain vertices that will not be joined anymore, and hence that may be unlabeled. 
To summarize, states in $\mathcal{Q}$ express the following constraints on joins:
\begin{itemize}
	\item joins with a label in state $Q_\varnothing$ will be ignored; 
	\item no join with a label in state $Q_f$ will be performed;
	\item labels in state $Q_w$ (resp.~$Q_w^*$) will only be joined with those in state $Q_1$
	(resp.~$Q_1^*$); and
	\item labels in state $Q_2$ will never be joined with those in state $Q_1^*$.
\end{itemize}
 
Now, considering an $S$-cycle-free graph $\wt{G}$ obtained by vertex deletion, we will say that a label $i$ is \emph{compatible} with label state:
\begin{itemize}
	\item $Q_\varnothing$ if no vertex of $\wt{G}$ is labeled $i$;
	\item $Q_1$ if exactly one vertex of $\wt{G}$ is labeled $i$, and it is not in $S$;
	\item $Q_1^*$ if exactly one vertex of $\wt{G}$ is labeled $i$, and it is in $S$;
	\item $Q_2$ if at least two vertices of $\wt{G}$ are labeled $i$, they are not in $S$, and
	no $S$-path in $\wt{G}$ has both its endpoints labeled $i$;
	\item $Q_w$ if at least two vertices of $\wt{G}$ are labeled $i$, at least one $S$-vertex is labeled $i$, and no $S$-path in $\wt{G}$  has both its endpoints labeled $i$;
	\item $Q_w^*$ if at least two vertices of $\wt{G}$ are labeled $i$, no path in $\wt{G}$ has both its
	endpoints labeled $i$; and
	\item $Q_f$ if at least two vertices of $\wt{G}$ are labeled $i$.
\end{itemize}
These conditions, together with the constraints on joins that are expressed above, aim to capture cases for which a join between labels of pairs of label states will not create $S$-cycles---this will be explicited in proofs and illustrated in Figure \ref{figjoin}.
In the following, we say that a label state assignment $\mathcal{P}$ is \emph{compatible} with $\wt{G}$ if each label is compatible
with its state in this graph. 
Note that looking at the properties of vertices in a label in part
gives the label state assignment that it should have: the conflicts are for choosing between $Q_f$,
$Q_w^*$ and, based on the presence or not of an $S$-vertex, either $Q_w$ or $Q_2$. 
This is expected because these states contain the information on a guess on what will later be added to the graph.

\begin{figure}[!h]
 \begin{subfigure}{0.29\linewidth}
  \centering{\includegraphics{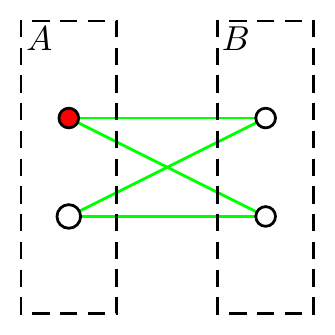}}
  \caption{Case 1: $|A|\geq 2$, $|B|\geq 2$,\\ and $|A\cap S|+|B\cap S|\geq 1$.}
  \label{fig_join_a}
 \end{subfigure}
 \begin{subfigure}{0.39\linewidth}
  \centering{\includegraphics{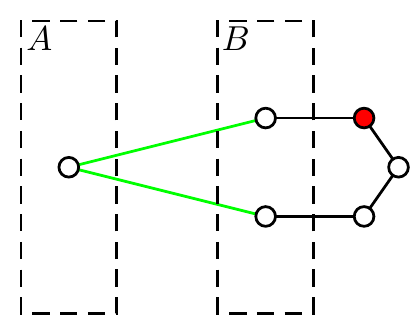}}
  \caption{Case 2: $|A|\geq 1$, $|B|\geq 2$, and there\\ is an $S$-path with endpoints in $B$.}
  \label{fig_join_b}
 \end{subfigure}
 \begin{subfigure}{0.29\linewidth}
  \centering{\includegraphics{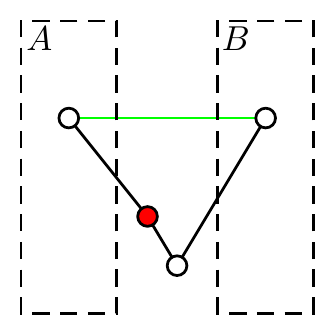}}
  \caption{Case 3: $|A|\geq 1$, $|B|\geq 1$,\\ and there is an $S$-path connecting a vertex of $A$ to one of $B$.}
  \label{fig_join_c}
 \end{subfigure}
  \caption{The three cases when a join (depicted in green) creates an $S$-cycle. The figures illustrates the smallest number of vertices of $S$ required. Thus, up to symmetry, the vertices depicted in red have to be in $S$ while the vertices in white may or may not be in $S$.}
 \label{figjoin}
\end{figure}

Let us now introduce an auxiliary partially labeled graph which will conveniently represent the connectedness implied by guesses we made so far when assigning labels to label states, while simplifying the manipulation of labels.
We point that this auxiliary graph will \emph{not} be computed by the algorithm: it shall only be used in the proofs.
Given a labeled graph $\wt{G}$ and a label state assignment $\mathcal{P}$, we denote by $H(\wt{G},\mathcal{P})$ the partially labeled graph obtained from $\wt{G}$, by conducting the following modifications for each label $i$:
\begin{itemize}
	\item if $i$ is in state $Q_2$ or $Q_w$, we add a vertex labeled $i$, connect it to other vertices labeled $i$, and unlabel these vertices, making the added vertex the only vertex labeled $i$;
	\item if $i$ is in state $Q_w^*$, we add an $S$-vertex labeled $i$, connect it to other vertices
	labeled $i$, and unlabel these vertices, making the added vertex the only vertex labeled $i$; and
	\item if $i$ is in state $Q_f$, we unlabel vertices labeled $i$.
\end{itemize}
Note that in the auxiliary graph, we add vertices that are not part of the original graph.
The role of these vertices---for states $Q_2$, $Q_w$, and $Q_w^*$---is to represent the label $i$ as if it was connected (which will eventually be the case as we guessed a later join), as well as manipulating nonempty labels as single vertices: for $\wt{G}$ compatible with $\mathcal{P}$, each nonempty label $i$ contains exactly one vertex in $H(\wt{G},\mathcal{P})$, which we call \emph{representative} of $i$ in $H(\wt{G},\mathcal{P})$, and that we denote by $h(i)$.

Recall that, when $\mathcal{P}$ is compatible with $\wt{G}$, some connectedness conditions are
satisfied by label states.
We say that a partially labeled multigraph $\widehat{F}$ \emph{expresses the connectedness} in $H(\wt{G},\mathcal{P})$, for $\wt{G}$ and $\mathcal{P}$ compatible, if:
\begin{itemize}
	\item for each label $i$, there is at most one vertex labeled $i$ in $\widehat{F}$;
	\item to every vertex $h(i)$ in $H(\wt{G},\mathcal{P})$ corresponds a vertex $r(i)$
	labeled $i$ in $\widehat{F}$: we call it the \emph{representative} of label $i$ in $\wh{F}$, and $r(i)$ is an $S$-vertex if and only if $h(i)$ is an
	$S$-vertex; and
	\item for any two vertices $h(i),h(j)$ in $H(\wt{G},\mathcal{P})$, there exists a
	$h(i)$--$h(j)$ path in $H(\wt{G},\mathcal{P})$ if and only if there exists a $r(i)$--$r(j)$ path in $\widehat{F}$, and there
	exists a $h(i)$--$h(j)$ $S$-path in $H(\wt{G},\mathcal{P})$ if and only if there exists a $r(i)$--$r(j)$ $S$-path in $\widehat{F}$.
\end{itemize}

We are now ready to introduce reduction rules which, when applied on the multigraph $\widehat{F}$
expressing the connectedness in $H(\wt{G},\mathcal{P})$, will produce the aforementioned partially labeled forest.
The idea behind this forest is that, to check the existence of ($S$-)paths linking representatives of labels $i$ and $j$, unlabeled vertices of degree at most two in such ($S$-)paths may be ``contracted'' as long as we do not remove all ($S$-)vertices on these paths. 
In the following for a partially labeled multigraph $\widehat{F}$, we denote by $\Red(\widehat{F})$
the forest obtained from $\widehat{F}$ by applying the following reduction rules:
\begin{itemize}
	\item for each nontrivial $2$-connected component $C$, we introduce an unlabeled vertex, call it \emph{central vertex} of $C$, connect it to 
	vertices of $C$, and remove all other edges inside the component;
	\item we iteratively remove unlabeled vertices of degree at most one;
	\item for each maximal $S$-path with internal unlabeled vertices of degree two, we replace it by
	connecting the endpoints to a single new unlabeled $S$-vertex; and
	\item for each maximal path with internal unlabeled vertices of degree two that is not an $S$-path, we replace it by a
	single edge between its endpoints.
\end{itemize}

It is easily seen that the produced graph is indeed a forest as the graph of nontrivial
$2$-connected components of any graph is a tree, and each nontrivial
$2$-connected component is replaced by a star.

\begin{claim}\label{claim:sfvs-forest-size}
	$\Red(\widehat{F})$ has $\Oh(k)$ vertices.
\end{claim}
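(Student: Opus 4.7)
The plan is to bound $|V(\Red(\widehat{F}))|$ by counting three disjoint classes of vertices: the labeled vertices, the unlabeled vertices of degree at least $3$, and the unlabeled vertices of degree exactly $2$ (note that reduction rule~2 ensures there are no unlabeled vertices of degree at most $1$).

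First I would argue that the number of labeled vertices is at most $k$. This is immediate from the first bullet in the definition of ``expressing the connectedness'': for each label $i \in [k]$, $\widehat{F}$ has at most one vertex labeled $i$, and the reduction rules never introduce new labeled vertices (the central vertex in rule~1, and the replacement vertices in rules~3 and~4, are all unlabeled). Moreover, since rule~2 removes unlabeled leaves exhaustively, every leaf of the resulting forest is labeled, so the number of leaves is at most $k$.

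Next I would bound the unlabeled vertices of degree at least $3$. In any forest on $n$ vertices with $\ell$ leaves and $c$ connected components, the standard degree-sum identity $\sum_v \deg(v) = 2(n-c)$ combined with $\deg(v) \geq 3$ on the non-leaf, non-degree-$2$ vertices yields that the number of vertices of degree at least $3$ is at most $\ell - 2c \leq \ell \leq k$. So there are at most $k$ unlabeled vertices of degree at least $3$.

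Finally, I would bound the unlabeled degree-$2$ vertices using reduction rules~3 and~4: on any maximal path of $\Red(\widehat{F})$ whose internal vertices are unlabeled and of degree $2$, there is at most one internal vertex (exactly one new unlabeled $S$-vertex if the path is an $S$-path, and zero otherwise). Contracting each such maximal path to a single edge yields a forest $F''$ whose vertex set is exactly the labeled vertices together with the unlabeled vertices of degree at least $3$, so $|V(F'')| \leq 2k$ and $|E(F'')| \leq 2k - 1$. Since each edge of $F''$ accounts for at most one unlabeled degree-$2$ vertex of $\Red(\widehat{F})$, there are at most $2k - 1$ such vertices. Summing gives $|V(\Red(\widehat{F}))| \leq k + k + (2k-1) = \Oh(k)$. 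The only slightly delicate point, which I would verify explicitly, is that the reduction rules do indeed leave at most one internal unlabeled degree-$2$ vertex per maximal such path — this follows immediately from the description of rules~3 and~4, which fully replace each such maximal path in a single step.
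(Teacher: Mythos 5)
Your proof is correct and follows essentially the same route as the paper, which omits the argument and points to the proof of Lemma~\ref{lem:red-forest-size}: both count at most $k$ labeled vertices (hence at most $k$ leaves after the unlabeled-leaf removal rule), bound the branching vertices by the number of leaves, and charge the remaining degree-two unlabeled vertices to the $\Oh(k)$ edges of the path-contracted forest, using that rules~3 and~4 leave at most one internal vertex per contracted path. The only cosmetic difference is that you invoke the degree-sum identity directly where the paper cites the auxiliary claim that a tree with $p$ leaves and internal degree at least $3$ has at most $2p-1$ vertices.
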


The proof of Claim~\ref{claim:sfvs-forest-size} follows the lines of the one of Lemma \ref{lem:red-forest-size}: it is omitted here.

\begin{lemma}\label{lem:red-con}
	Let $\wt{G}$ be a labeled graph, and $\mathcal{P}$ be a label state assignment compatible with $\wt{G}$. 
	If $H(\wt{G},\mathcal{P})$ is $S$-cycle-free and $\widehat{F}$ expresses the connectedness in
	$H(\wt{G},\mathcal{P})$, then $F=\Red(\widehat{F})$ expresses the connectedness in
	$H(\wt{G},\mathcal{P})$.
\end{lemma}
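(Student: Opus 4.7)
The plan is to verify that each of the four reduction rules individually preserves the three defining properties of ``expressing the connectedness'' with respect to $H(\wt{G},\mathcal{P})$, then chain them together. Since the rules only add, remove, or insert unlabeled vertices and never touch vertices carrying a label in $[k]$, the representatives $r(i)$ together with their $S$-memberships persist through the entire reduction. Consequently, the first two conditions (at most one vertex per label, and representatives correctly present with correct $S$-status) are inherited automatically from $\widehat{F}$. The substantive content is the third condition: for every pair of labels $i,j$ with $h(i),h(j)\in H(\wt{G},\mathcal{P})$, there must be a path (resp.\ an $S$-path) between $r(i)$ and $r(j)$ in $F$ if and only if there is one in $\widehat{F}$. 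It suffices to check preservation of inter-representative path and $S$-path connectivity under each individual rule.

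Rules 2--4 are straightforward. For rule 2, an iteratively removed unlabeled vertex of degree at most one is never an internal vertex of any path, and is not itself a representative, so its removal affects neither paths nor $S$-paths between representatives. For rules 3 and 4, the replacement is strictly local: a maximal path whose internal vertices are unlabeled of degree two attaches to the rest of the graph only at its two endpoints, so any inter-representative path traversing the original subpath can be rerouted through its replacement and vice versa. Rule 3 plants a new unlabeled $S$-vertex precisely when the contracted subpath was an $S$-path, so $S$-path status is preserved; rule 4 applies only to non-$S$-paths, so no $S$-vertex information needs to be carried across.

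The main obstacle lies in rule 1, where one must show that replacing each nontrivial $2$-connected component $C$ of $\widehat{F}$ by a star through a fresh unlabeled central vertex preserves not just connectivity but also $S$-path connectivity between representatives. Plain paths between two vertices of $C$, and between a vertex of $C$ and one outside $C$, clearly survive the replacement via the new center. The delicate point is $S$-paths: one must argue, using that $H(\wt{G},\mathcal{P})$ is $S$-cycle-free and that $\widehat{F}$ expresses its connectedness, that no $S$-path between representatives in $\widehat{F}$ essentially requires traversing an $S$-vertex internal to some $C$. The structural input is that $2$-connectedness of $C$ provides two internally vertex-disjoint paths between any chosen representative of $C$ and any $S$-vertex of $C$; combining them yields a cycle through an $S$-vertex in $\widehat{F}$, which under the expressing-connectedness correspondence would lift to a closed $S$-walk, and hence to an $S$-cycle, in $H(\wt{G},\mathcal{P})$ --- contradicting the hypothesis. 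A case analysis on whether $C$ contains representatives, and on how external $S$-paths enter and leave $C$, then shows that every $S$-path between representatives in $\widehat{F}$ either avoids the interior of $C$ or can be substituted by one that traverses $C$ only through the new central vertex. Chaining the four rules in the order prescribed by the definition of $\Red$ yields the lemma.
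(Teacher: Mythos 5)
Your proof follows essentially the same route as the paper's: the first two conditions are inherited because the reduction rules neither create nor destroy labeled vertices, rules 2--4 are dispatched by the same local rerouting observations, and rule 1 is handled by replacing the traversal of a nontrivial $2$-connected component $C$ by the three-vertex path through its central vertex. Where you go beyond the paper is in explicitly worrying about an $S$-path between representatives whose only $S$-vertex is \emph{internal} to some block $C$ --- the paper's proof is silent on this case. However, the argument you sketch to exclude it does not follow from the lemma's hypotheses: the definition of ``expresses the connectedness'' only relates $r(i)$--$r(j)$ paths in $\widehat{F}$ to $h(i)$--$h(j)$ paths in $H(\widetilde{G},\mathcal{P})$, so an $S$-cycle sitting inside a block of $\widehat{F}$ need not ``lift'' to any closed $S$-walk of $H(\widetilde{G},\mathcal{P})$. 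Indeed, one can take $\widehat{F}$ to be $r(1)\,a\,s\,b\,r(2)$ plus the edge $ab$, with $s$ an unlabeled $S$-vertex, expressing the connectedness of the $S$-cycle-free graph $h(1)\,s'\,h(2)$; here $\Red$ turns the triangle into a star with an unlabeled non-$S$ center, then deletes $s$ as a degree-one unlabeled vertex, and the $S$-path between the representatives is lost. What actually rules this case out in the paper is an invariant of the dynamic program (edges are only added, and representatives only identified, when they are not yet joined by an $S$-path, so the blocks of $\widehat{F}$ that arise carry no $S$-vertices), not the stated hypotheses of the lemma. So your instinct that rule 1 is the delicate point is correct --- and sharper than the paper's own write-up --- but the contradiction you propose does not close it from the lemma's hypotheses alone.
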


\begin{proof}
	Note that according to the reduction rules, no labeled vertex is removed nor added to $\widehat{F}$ when computing $F=\Red(\widehat{F})$.
	Hence the first two conditions on expressing the connectedness in $H(\wt{G},\mathcal{P})$ are fulfilled by $F$, whenever they are by $\widehat{F}$.
	It remains to show that the existence of ($S$-)paths between labeled vertices of $\widehat{F}$ is preserved in $F$.
	First, note that any path $P$ of $\widehat{F}$ going through a nontrivial $2$-connected
	component $C$ can be turned into a $3$-vertex path of $F$ going through the central vertex of the component, and having the first and last vertices of $P$ restricted to $C$ as endpoints.
	Hence the first reduction rule preserves the desired property.
	Second, no unlabeled vertex of degree at most one lies on a path linking two labeled vertices.
	Hence the second reduction rule preserves the desired property, and the same conclusion is straightforward for the last two reduction rules.
\end{proof}

A \emph{state} of the dynamic programming algorithm is a tuple $(t,F,\mathcal{P})$, where $t \in V(T)$, 
$F$ is a partially labeled forest, and $\mathcal{P} : [k] \to \mathcal{Q}$ is a
label state assignment. 
We say that $(t,F,\mathcal{P})$ is \emph{admissible} if there exists $X\subseteq V(G)$ such that
$\mathcal{P}$ is compatible with $G_t-X$, $H(G_t-X,\mathcal{P})$ is $S$-cycle-free, and $F$
expresses the connectedness in $H(G_t-X,\mathcal{P})$.
Our dynamic programming algorithm will not consider all possible states, but compute a value $d[t,F,\mathcal{P}]$ for some states $(t,F,\mathcal{P})$. 
We call \emph{reachable} a state that is considered by the algorithm.
We will show that reachable states are admissible, that for every $t \in V(T)$, for
each $X \subseteq V(G_t)$, if $G_t-X$ is $S$-cycle-free, then there exists a reachable state
$(t,F,\mathcal{P})$ such that $d[t,F,\mathcal{P}]\leq |X|$, and that the optimal value for SFVS on
the given instance is the minimum of values $d[r,F,\mathcal{P}]$ where $r$ is the root of the
$k$-expression.

First, let us slightly modify our clique-width expression in order to simplify the description of our
computations. We double the set of labels, denoting them by $\{1,...,k,1',...,k'\}$, and replace
each disjoint union node $t$ with children $t_1,t_2$ by the following subexpression:
$\rho_{1' \rightarrow 1}(\dots\rho_{k' \rightarrow k}(G_{t_1} \oplus (\rho_{1 \rightarrow
1'}(\dots\rho_{k \rightarrow k'}(G_{t_2})))))$. This gives the property that in disjoint union
nodes, each label is used by at most one of the children nodes.

We now describe the bottom-up computation of reachable states for each possible type of node in the
clique-width expression.

\begin{description}
	\item[Leaf node.] If $t$ is a leaf node with $G_t=i(v)$, two cases arise. 
	Either $v$ is deleted which is described by state $(t,F_\varnothing,\mathcal{P}_\varnothing)$
	initialized with value $c(v)$, where $F_\varnothing$ is the empty graph, and $\mathcal{P}_\varnothing$ is the function that maps every $i\in [k]$ to $Q_\varnothing$.
	Otherwise we keep $v$, which is described by state $(t,F,\mathcal{P})$
	where $F$ consists of the isolated vertex $v$, $\mathcal{P}(i)=Q_1^*$ if $v \in S$,
	$\mathcal{P}(i)=Q_1$ otherwise, and, for all $j \neq i$, $\mathcal{P}(j)=Q_\varnothing$.

	\item[Join node.] 
	Let $t$ be a join node with $G_t=\eta_{i \times j}(G_{t'})$. 
	For each reachable state $(t',F',\mathcal{P}')$, we proceed as follows.
	If the representatives of $i$ and $j$ are connected by an $S$-path in $F'$, we do nothing. 
	Otherwise, we will construct states
	$(t,F,\mathcal{P})$ defined in the following cases, depending on $\mathcal{P}'(i)$ and
	$\mathcal{P}'(j)$, starting with $F:=F'$ and $\mathcal{P}:=\mathcal{P}'$:
	\begin{itemize}
		\item if one of $i$ and $j$ is in state $Q_\varnothing$, we do not modify $F$ nor $\P$;
		\item if $i$ and $j$ are in states $Q_1$ or $Q_1^*$, we add an edge between the
		representatives of $i$ and $j$ in $F$;
		\item if $i$ and $j$ are in states $Q_1$ or $Q_2$, we add an edge between the
		representatives of $i$ and $j$ in $F$, and if $i$ or $j$ are in state $Q_2$ they are allowed
		to change to $Q_f$ in $\mathcal{P}$, if they do we also unlabel their representative: we
		enumerate all possibilities here;
		\item if $i$ and $j$ are in states $Q_1^*$ and $Q_w^*$, we identify their representative in $F$:
		the resulting vertex has its label in state $Q_1^*$, and the label in state $Q_w^*$
		is assigned state $Q_f$ in $\mathcal{P}$; and
		\item if $i$ and $j$ are in states $Q_1$ and $Q_w$, we identify their representative in $F$: the
		resulting vertex has its label in state $Q_1$, and the label in state $Q_w$ is
		assigned state $Q_f$ in $\mathcal{P}$.
	\end{itemize}
	For each such cases, we reduce $F$ and propagate the
	value $d[t',F',\mathcal{P}']$ to the states $(t,F,\mathcal{P})$, where $\mathcal{P}$ is the
	modified label state assignment.

	\item[Renaming label node.] Let $t$ be a renaming label node with $G_t=\rho_{i \rightarrow j}(G_{t'})$.
	For each reachable state $(t',F',\mathcal{P}')$, 
	we construct states $(t,F,\mathcal{P})$ starting with $\mathcal{P}:=\mathcal{P}$ and $F:=F'$ by
	first setting $\mathcal{P}(i)=Q_\varnothing$, and proceeding as follows depending on
	$\mathcal{P}'(i)$ and $\mathcal{P}'(j)$:
	\begin{itemize}
			\item if $i$ and $j$ are in a state among $\{Q_f,Q_1,Q_1^*\}$, we unlabel the representatives of $i$ and $j$ in $F'$, and set $\mathcal{P}(j)=Q_f$;
			\item if one of $i$ and $j$ is in state $Q_\varnothing$, then either $i$ is in state
			$Q_\varnothing$ and we do nothing, or $j$ is in state $Q_\varnothing$, we assign it to
			the other label state, and the vertex of $F$ labeled $i$ is relabeled $j$;
			\item if $i$ and $j$ are in state $Q_1$, and the representatives of $i$ and $j$ are not
			connected by a path in $F'$, in $F$, we add an $S$-vertex labeled $j$, connect it to these
			vertices, and unlabel them. 
			Label $j$ is then assigned state $Q_w^*$ in $\mathcal{P}$;
			\item if one of $i$ and $j$ is in state $Q_1^*$, the other is in state $Q_1$ or $Q_1^*$,
			and the representatives of $i$ and $j$ are not connected by a path in $F'$, we consider
			two possibilities depending on whether they will be joined to a vertex of $S$, or to a
			vertex of $V(G) \setminus S$. 
			First, in $F$, we add a new vertex labeled $j$, connect it to the representatives of $i$
			and $j$, and unlabel the representatives of $i$ and $j$. 
			Then, if the new vertex is chosen to be in $S$, $j$ is assigned state $Q_w^*$ in
			$\mathcal{P}$.
			Otherwise, $j$ is assigned state $Q_w$ in $\mathcal{P}$;
			\item if $i$ and $j$ are in states $Q_\alpha$ and $Q_\beta$, for $\alpha,\beta \in \{1,2,w\}$,
			and the representatives of $i$ and $j$ are not connected by an $S$-path in $F'$, in $F$, we
			identify the representatives of $i$ and $j$: the resulting vertex is of label $j$, and
			$j$ is assigned state $Q_\delta$ in $\mathcal{P}$ with $\delta:=\max \limits_{1<2<w}
			\{2,\alpha,\beta\}$;
			\item if one of $i$ and $j$ is in state $Q_1^*$, the other is in state $Q_w$ or $Q_2$, and the
			representatives of $i$ and $j$ are not connected by a path in $F'$, in $F$, we add an
			edge between the representatives of $i$ and $j$, and the representative of the label in
			state $Q_1^*$ becomes unlabeled, while the other vertex is given label $j$. Label $j$ is
			assigned state $Q_w$ in $\mathcal{P}$; and
			\item if one of $i$ and $j$ is in state $Q_w^*$, the other is in state $Q_1$ or $Q_1^*$,
			and the representatives of $i$ and $j$ are not connected by a path in $F'$, in $F$, we add an
			edge between the representatives of $i$ and $j$, and the representative of the label
			in state $Q_1$ or $Q_1^*$ becomes unlabeled, while the other vertex is given label $j$.
			Label $j$ is assigned state $Q_w^*$ in $\mathcal{P}$.
		\end{itemize}
		For each such cases, we reduce $F$, and we propagate the value $d[t',F',\mathcal{P}']$ to
		the state $(t,F,\mathcal{P})$, where $\mathcal{P}$ is the modified label state assignment.

	\item[Disjoint union node.] If $t$ is a disjoint union node with $G_t=G_{t_1} \oplus G_{t_2}$,
		for each pair of reachable states $(t_1,F_1,\mathcal{P}_1)$, $(t_2,F_2,\mathcal{P}_2)$,
		since they use disjoint sets of labels, we can simply define $F = F_1 \oplus F_2$. 
		The label state assignment $\mathcal{P}$ is defined by $\mathcal{P}(i)=\mathcal{P}_1(i)$ for
		$i \in [k]$ and $\mathcal{P}(i')=\mathcal{P}_2(i')$ for $i' \in \{1',\dots,k'\}$.
		The value $d[t_1,F_1,\mathcal{P}_1]+d[t_2,F_2,\mathcal{P}_2]$ is propagated to
		state $(t,F,\mathcal{P})$.
\end{description}

We now prove the correctness of the algorithm.

\begin{lemma}\label{lem:adm}
	For each reachable state $(t,F,\mathcal{P})$, there exists $X \subseteq V(G_t)$ such that
	$\mathcal{P}$ is compatible with $G_t-X$, $H(G_t-X,\mathcal{P})$ is $S$-cycle-free, $F$ expresses the
	connectedness in $H(G_t-X,\mathcal{P})$, and ${d[t,F,\mathcal{P}]=c(X)}$.
\end{lemma}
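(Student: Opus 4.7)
The plan is to prove the statement by structural induction on the node $t$ of the modified $k$-expression, mirroring the four types of nodes for which reachable states are produced. For each transition that creates $(t,F,\mathcal{P})$, I would isolate the predecessor state(s), apply the induction hypothesis to obtain deletion set(s) witnessing admissibility there, build the witness $X$ for the current node, and then verify the four required properties: (i) $\mathcal{P}$ is compatible with $G_t - X$, (ii) $H(G_t - X,\mathcal{P})$ is $S$-cycle-free, (iii) $F$ expresses the connectedness of $H(G_t - X,\mathcal{P})$, and (iv) $d[t,F,\mathcal{P}] = c(X)$. The cost equality in (iv) is immediate from how values are propagated, so the real work is in (i)--(iii).

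For the base case (leaf node $t$ with $G_t = i(v)$), the two reachable states correspond to $X = \{v\}$ with $\mathcal{P} = \mathcal{P}_\varnothing$ and $F = \varnothing$, or $X = \varnothing$ with $F$ an isolated vertex $v$ carrying label $i$. Both verifications are direct: $H$ has no edges, $F$ is exactly $H$, and $\mathcal{P}$ records the correct multiplicity and $S$-status of label $i$. For a disjoint union node, since the modification of the expression forces the label sets of $G_{t_1}$ and $G_{t_2}$ to be disjoint, one simply takes $X := X_1 \cup X_2$ with $X_i$ given by induction, and observes that $H(G_t - X,\mathcal{P})$ is the disjoint union of $H(G_{t_i} - X_i,\mathcal{P}_i)$; compatibility, $S$-cycle-freeness, and the connectedness condition on $F = F_1 \oplus F_2$ all follow componentwise.

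The heart of the argument is the case analysis for join nodes $\eta_{i \times j}$ and renaming nodes $\rho_{i \to j}$. For a join node, I would take $X$ to be the same deletion set $X'$ given by the induction hypothesis for $(t',F',\mathcal{P}')$. The key point is that the algorithm rejects precisely when adding the join edges in $H$ would close an $S$-cycle; Figure~\ref{figjoin} enumerates the three forbidden configurations, and one checks, case by case on $(\mathcal{P}'(i),\mathcal{P}'(j))$, that the remaining cases exactly avoid those configurations. Here the role of the representative vertex $h(\ell)$ for $\ell$ in state $Q_2, Q_w, Q_w^*$ is essential: all real vertices of that label are already connected through $h(\ell)$ in $H$, so every actual join edge factors through the path in $H$ using $h(i)$ and $h(j)$. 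Using that $F'$ expresses the connectedness of $H(G_{t'} - X',\mathcal{P}')$, checking that no $S$-path between representatives of $i$ and $j$ already exists in $F'$ is equivalent to ruling out the three bad configurations. The resulting modifications to $F$ (adding an edge, identifying two representatives, or possibly transitioning a label from $Q_2$ to $Q_f$ and unlabeling) preserve the connectedness correspondence with the new $H(G_t - X,\mathcal{P})$, and Lemma~\ref{lem:red-con} guarantees that reduction afterwards does not break this property.

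The renaming node is analogous but more intricate because two labels collapse into one and this can change the label-state of $j$. I would again keep $X := X'$ and run through each of the subcases listed in the algorithm, verifying in every case that: the new $\mathcal{P}(j)$ genuinely records the multiplicity, $S$-status and ``guessed future join'' of the merged label in $G_t - X$; the $S$-cycle-freeness of $H$ is preserved because we rejected whenever the merge would connect representatives already linked by an $S$-path in $F'$ (which, by the connectedness property of $F'$, is the same as being connected by an $S$-path in $H(G_{t'}-X',\mathcal{P}')$); and the updated $F$, possibly after a reduction, still expresses the connectedness of the new auxiliary graph. The main obstacle I anticipate is the book-keeping for these renaming subcases, especially the ones where a label in state $Q_1$ or $Q_1^*$ is renamed onto a label already in a waiting state $Q_w$ or $Q_w^*$: here one must argue that collapsing representatives in $F$ correctly simulates the effect of identifying the vertex $h(i)$ with $h(j)$ in the new $H$, and that the resulting assignment of state $Q_w, Q_w^*$ or $Q_f$ to $j$ continues to match the compatibility definition. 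Once all subcases are dispatched in this mechanical way, all four conclusions of the lemma follow, and the induction closes.
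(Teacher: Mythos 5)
Your plan matches the paper's proof essentially step for step: structural induction on the expression tree, taking $X$ from the predecessor state (or the union of the two predecessors' sets at a disjoint union node), and a case analysis on the label states at join and renaming nodes that uses the equivalence between ($S$-)paths in $F'$ and in $H(G_{t'}-X,\mathcal{P}')$ together with Lemma~\ref{lem:red-con} to justify the final reduction. The key ingredients you identify --- the representative vertices $h(\ell)$ absorbing all real vertices of a label, and the fact that the non-rejected cases are exactly those avoiding the configurations of Figure~\ref{figjoin} --- are precisely the ones the paper uses, so this is the same argument.
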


\begin{proof}
	We proceed by induction on $T$. 
	For convenience in the following, let us denote by $H$ and $H'$ the graphs $H(G_t-X,\mathcal{P})$ and
	$H(G_{t'}-X,\mathcal{P}')$, by $h(i)$ and $h'(i)$ the representative of label $i$ in $H$
	and~$H'$, and by $r(i)$ and $r'(i)$ the representative of label $i$ in $F$ and $F'$,
	respectively. 
	By an abuse of notation, we will denote by $F$ both the graph expressing the connectedness in
	$H$ and its reduced forest $\Red(\widehat{F})$; we may do so as a reduction of $F$ is computed
	at each transition of our algorithm, and by Lemma \ref{lem:red-con}, the obtained forest
	expresses the connectedness in $H$.
	\begin{enumerate}	
		\item If $t$ is a leaf node, the property is trivial.

		\item Let $t$ be a join node with $G_t=\eta_{i \times j}(G_{t'})$ and $(t,F,\mathcal{P})$ be
		a reachable state.
		Then there exists a reachable state $(t',F',\mathcal{P}')$ which gave the
		optimal value to $(t,F,\mathcal{P})$, i.e., such that $d[t,F,\mathcal{P}] = d[t',F',\mathcal{P}']$. 
		By induction hypothesis, there exists $X$ such that $\mathcal{P}'$ is
		compatible with $G_{t'}-X$, $H'$ is $S$-cycle-free, $F'$ expresses the connectedness in $H'$, and
		$d[t',F',\mathcal{P}'] = c(X)$. 
		Hence $d[t,F,\mathcal{P}]=c(X)$.
		It remains to show that the properties for $\mathcal{P}$, $H(G_t-X,\mathcal{P})$, and $F$
		hold for each of the transitions, depending on the label states assigned to $i$ and $j$ in
		$\mathcal{P'}$.
		The following observation will be convenient.

		\begin{observation}\label{obs:H-G-link}
			Let $\wt{G}$ be a graph, $\P$ be a label state assignment, $i$ be a label such that $\P(i)\in \{Q_2,Q_w\}$, and $j$ be a label such that $\P(j)=Q_w^*$.
			Then there is an $S$-cycle containing $h(i)$ in $H(\wt{G},\mathcal{P})$ whenever there is a $S$-path having elements of $i$ as its endpoints in $\wt{G}$, and then there is an $S$-cycle containing $h(j)$ in $H(\wt{G},\mathcal{P})$ whenever there is a path having elements of $j$ as its endpoints in $\wt{G}$.
		\end{observation}

		\begin{itemize}
			\item If one label is in state $Q_\varnothing$, by compatibility of $\mathcal{P}'$ and
			$G_{t'}-X$, we know that the join added no edge to the graph, i.e., $G_t-X = G_{t'}-X$. 
			By the described transitions, $\mathcal{P}=\mathcal{P}'$ and $F=F'$.
			Hence $H=H'$, and the conclusion follows.

			\item We now deal with the two next cases: these cases do not involve waiting states. 
			From the transition we know that there is no $S$-path between $i$ and $j$ in $F'$, and
			since $F'$ expresses the connectedness in $H'$, there is no $S$-path between
			representatives of $i$ and $j$ in $H'$. 
			Let us show that $H$ is $S$-cycle-free.
			Consider the graph $H'+A$, where $A=E(G_t-X) \setminus E(G_{t'}-X)$ is the set of edges
			added by the join.
			We note that, according to the transitions, either $\P=\P'$, or $\P'$ differs from $\P$
			by assigning one of $i$ and $j$ to $Q_f$.
			In both cases, $H$ is a subgraph of $H'+A$.
			Recall that $H'$ is $S$-cycle-free, and no $S$-path connects the representatives of $i$
			and $j$.
			Hence if $H'+A$ contains an $S$-cycle, it must be contained in $A$, which is
			excluded by the compatibility of $\P'$ and $G_{t'}-X$, more particularly by the
			constraints induced by compatibility on the two joined labels: a join between labels of
			$Q_1$ and $Q_1^*$, or labels of $Q_1$ and $Q_2$, cannot create a cycle.  Hence $H$ is
			$S$-cycle-free.
			The claim that $G_t-X$ and $\P$ are compatible follows from Observation~\ref{obs:H-G-link} for the connectivity constraints, and on the fact that either $\P=\P'$ or $\P$ assigns one of $i$ and $j$ to $Q_f$, for the vertex constraints.
			Finally, note that any path having its endpoints in $i$ and $j$ after the join has its ``connectedness'' expressed by $F$ after adding the edge $r(i)r(j)$.
			This concludes that case.

			\item The other transitions involve a label in state $Q_1$ or $Q_1^*$, joined with a
			label in a waiting state $Q_w$ or $Q_w^*$, respectively. 
			W.l.o.g., let us assume that $i$ is in
			state $Q_1$ or $Q_1^*$ and $j$ in state $Q_w$ or $Q_w^*$.
			Let $v=h'(i)$ and $w=h'(j)$.
			Consider the graph $H'_{/vw}$, and denote by $u$ the vertex obtained by identification
			of $v$ and $w$, where $u$ has label $i$.
			We show that $H=H'_{/vw}$.
			By the transitions, $i$ stays in his state, and $j$ is assigned state $Q_f$ in $\mathcal{P}$. 
			Hence in $H$, label $i$ consists of the single element $v$, and label $j$ has no
			representative.
			Thus $V_i(H)=V_i(H'_{/vw})$ and $V_j(H)=V_j(H'_{/vw})$. 
			Now, the performed join on $G_t'-X$ adds every edge between $v$ and $V_j(G_{t'})$.
			Since $N_{H'}(w)=V_j(G_{t'})$, these added edges are exactly those incident to $u$ in $H'_{/vw}$ and that were not already present in $H'$.
			We conclude that $H=H'_{/vw}$ as desired.
			Since $H'$ is $S$-cycle-free, an $S$-cycle of $H$ must contain one of the identified vertices.
			However these vertices are not connected by an $S$-path, by the first condition expressed in the decription of the transitions. 
			Consequently $H$ is $S$-cycle-free, and the compatibility of $G_t-X$ and $\P$ then follows from Observation~\ref{obs:H-G-link} for the connectivity constraints, and from the fact that $\P$ only differs on $\P'$ on the fact that it assigns $j$ to $Q_f$, for the vertex constraints.
			Clearly, identifying the the representatives in $F'$ preserves expressing the
			connectedness in $H'_{/vw}$, and the conclusion follows.
		\end{itemize}		

		\item Let $t$ be a relabel node with $G_t=\rho_{i \rightarrow j}(G_{t'})$, and $(t,F,\mathcal{P})$ be a reachable state.
		Then there exists a reachable state $(t',F',\mathcal{P}')$ which gave the
		optimal value to $(t,F,\mathcal{P})$, i.e., such that $d[t,F,\mathcal{P}]=d[t',F',\mathcal{P}']$. 
		By induction hypothesis, there exists $X$ such that $\mathcal{P}'$ is compatible with
		$G_{t'}-X$, $H'$ is $S$-cycle-free, $F'$ expresses the connectedness in $H'$, and
		$d[t',F',\mathcal{P}']=c(X)$. 
		Hence $d[t,F,\mathcal{P}]=c(X)$.
		We show that the properties for $\mathcal{P}$, $H(G_t-X,\mathcal{P})$, and $F$ hold for each of the transitions, depending on the label types of $i$ and $j$.
		We stress that in the following, $G_t-X$ and $G_{t'}-X$ are isomorphic: they only differ by their label assignments.
		
		\begin{itemize}
			\item We consider the first case where $i$ and $j$ are in a state among $\{Q_f,Q_1,Q_1^*\}$.
			Since $\mathcal{P}'$ is compatible with $G_{t'}-X$, and the transition assigns $i$ and $j$ to $Q_\varnothing$ and $Q_f$, it is easily seen that $\mathcal{P}$ is compatible with $G_t-X$: after the relabeling, $V_i(G_t-X)$ is empty, and $|V_j(G_t-X)|\geq 2$.
			Furthermore, as labels $i$ and $j$ have no representative in $H$, and the representative of $i$ and $j$ in $H'$, if they exist, are vertices of $G_t-X$, we conclude that $H$ and $H'$ are isomorphic. 
			Consequently $H$ is $S$-cycle-free.
			At last, as $F$ only differs with $F'$ on $i$ and $j$ being unlabeled, and $H$ has no representative for labels $i$ and $j$, $F$ expresses the connectedness in $H$.

			\item The transition with one of $i$ and $j$ being in state $Q_\varnothing$ either corresponds to doing nothing, or to
			swapping labels $i$ and $j$ in $G_t-X$ and $F$: the conclusion follows.

			\item We now consider the transition with both $i$ and $j$ in state $Q_1$.
			Since their representatives in $F'$ are not connected by a path, their representatives in
			$H'$ are not connected either.
			Since $G_{t'}-X$ and $\P'$ are compatible, labels $i$ and $j$ are nonempty in
			$G_{t'}-X$, and hence $|V_j(G_t-X)|\geq 2$.
			Consequently, $j$ is compatible with state $Q_w^*$ in $G_t-X$ after relabeling. 
			As $V_i(G_t-X)$ is empty, we conclude that $\P$ and $G_t-X$ are compatible.
			By construction, $H$ contains one representative for label $j$, and it is an $S$-vertex.
			The same holds for $F$ from the transition.
			Since $F'$ expresses the connectedness in $H'$, and given that $h(j)$ and $r(j)$ are
			connected to $V_j(G_t-X)$, and to $r'(i)$ and $r'(j)$, respectively, it is easily
			checked that for any two representatives $h(\alpha),h(\beta)$ in $H$, there is a
			$h(\alpha)$--$h(\beta)$ path going through $h(j)$ in $H$ if and only if there is a
			$r(\alpha)$--$r(\beta)$ path going through $r(j)$ in $F$.
			We deduce that $F$ expresses the connectedness in $H$.
			At last, $H$ is $S$-cycle-free because $H'$ is $S$-cycle-free, and no path
			of $H'$ links two vertices labeled $j$ in $G_{t'}-X$.

			\item We now consider the transition with a label in state $Q_1^*$, and the other in state $Q_1$ or $Q_1^*$. 
			Since the representatives of $i$ and $j$ are not connected by a path in $F'$, and since $F'$ expresses the connectedness in $H'$, they are not connected in $G_t-X$.
			As $V_i(G_{t'}-X)$ and $V_j(G_{t'}-X)$ are nonempty, we deduce that $j$ is compatible with state $Q_w^*$ or $Q_w$, depending on whether $j$ is a $S$-vertex or not. 
			As $V_i(G_t-X)$ is empty, we conclude that $G_t-X$ and $\P$ are compatible.
			The fact that $F$ expresses the connectedness in $H$ follows from the same arguments as in the previous case: the new representative of $j$ in $F$
			corresponds exactly to the new representative of $j$ in $H$. 
			Similarly, we deduce that $H$ is $S$-cycle-free because $H'$ is
			$S$-cycle-free, and no path of $H'$ links two vertices labeled $j$ in $G_{t'}-X$.

			\item We now consider the transition with labels in states $Q_\alpha$ and $Q_\beta$, for
			$\alpha,\beta \in \{1,2,w\}$.
			From the transition, we also know that the representatives of $i,j$ in $F'$ are not connected by an $S$-path.
			By the compatibility of $\mathcal{P}'$ and $G_{t'}-X$, labels $i$ and $j$ have a
			representative in $H'$, and no pair of vertices in such labels in $G_t-X$ are connected
			by an $S$-path.  Hence after relabeling, no pair of vertices labeled $j$ in $G_t-X$ is
			connected by an $S$-path.
			If the new label state is $Q_2$, then the previous states were
			forcing $j$ to contain no $S$-vertex. 
			If the new label state is $Q_w$, then one of the
			previous states was $Q_w$, which also requires $i$ to contain at least one $S$-vertex.
			Now as $V_i(G_t-X)$ is empty, $\mathcal{P}$ and $G_t-X$ are compatible.
			Note that the vertices that were put to label $j$ when relabeling are now connected in $H$.
			Identifying $r'(i)$ and $r'(j)$ in $F'$ thus produces the desired result of expressing the connectivity in $H$.
			At last, $H$ is $S$-cycle-free because the new connections induced by the relabeling are between vertices that were not connected by $S$-paths in $H'$.

			\item We now consider the transition with a label in state $Q_1^*$, and the other in state $Q_w$ or $Q_2$. 
			Representatives of $i,j$ are not connected by a path in $F'$. 
			As one label is in state $Q_1^*$, by compatibility of $\mathcal{P}'$ and $G_{t'}-X$, it
			contains an $S$-vertex.
			From $F'$ expressing the connectedness in $H'$, we conclude that $j$ is compatible with $Q_w$.
			Since $V_i(G_t-X)$ is empty, we have that $\mathcal{P}$ and $G_{t}-X$ are compatible.
			Then, the fact that $F$ expresses the connectedness in $H$ follows from the fact that new connections created by $h(j)$ in $H$ are expressed by an edge between $r'(i)$ and $r'(j)$ in $F$.
			At last, $H$ is $S$-cycle-free because $H'$ is $S$-cycle-free, and vertices labeled
			$i,j$ in $G_{t'}-X$ were not connected by a path in $H'$.

			\item We now consider the transition with a label in state $Q_w^*$, and the other in $Q_1$ or $Q_1^*$.
			We in addition know that the representatives of $i,j$ in $F'$ are not connected by a
			path, and because $F'$ expresses the connectedness in $H'$, the same holds in $H'$.
			As one label is assigned $Q_w^*$ in $\mathcal{P}'$, we deduce from compatibility that
			there is no path in $H'$ between its vertices of $G_{t'}-X$.
			Since $V_i(G_t-X)$ is empty, we get that $\mathcal{P}$ and $G_t-X$ are compatible.
			The same argument as in the previous case show that $F$ expresses the connectedness in
			$H$, and that $H$ is $S$-cycle-free: new connections created by $h(j)$ in $H$ are
			expressed by an edge between $r'(i)$ and $r'(j)$ in $F$.  
		\end{itemize}

		\item Let $t$ be a disjoint union node with $G_t=G_{t_1} \oplus G_{t_2}$, and $(t,F,\mathcal{P})$
		be a reachable state.
		Then $(t,F,\mathcal{P})$ was constructed by a transition from some $(t_1,F_1,\mathcal{P}_1)$
		and $(t_2,F_2,\mathcal{P}_2)$. By induction hypothesis applied to $(t_1,F_1,\mathcal{P}_1)$
		and $(t_2,F_2,\mathcal{P}_2)$ and the fact that the union is disjoint, there exist $X_1$ and $X_2$ such that,
		for $X:=X_1 \cup X_2$, $H(G_t-X,\mathcal{P})$ is $S$-cycle-free,
	 	$\mathcal{P}$ is compatible with $G_t-X$, $F$ expresses the
		connectedness in $H(G_t-X,\mathcal{P})$, and $d[t,F,\mathcal{P}]= c(X_1)+c(X_2)=c(X)$.
		Hence the conclusion.
		\qedhere
	\end{enumerate}
\end{proof}

In the following, we say that graph $G$ is \emph{less connected} than graph $G'$ if $E(G) \subseteq
E(G)$. For auxiliary graphs $H=H(G,\mathcal{P})$ and $H'=H(G',\mathcal{P})$ with $V(G)=V(G')$,
denote by $R$ and $R'$ their accessibility relation restricted to $V(G)=V(G')$ (paths through other
vertices are allowed), we say that $H$ is less connected than $H'$ if $R \subseteq R'$.

\begin{lemma}\label{lem:cplt}
	For each $t \in V(T)$, for each $X \subseteq V(G_t)$, and for each label state assignment
	$\mathcal{P}$ that is compatible with $G_t-X$ and such that $H(G_t-X,\mathcal{P})$ is
	$S$-cycle-free, there exists $F$ such that $F$ expresses the connectedness in $H(G_t-X,\P)$,
	$(t,F,\mathcal{P})$ is reachable and $d[t,F,\mathcal{P}] \leq |X|$.
\end{lemma}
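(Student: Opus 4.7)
The plan is to proceed by induction on the structure of the clique-width expression tree $T$, mirroring the case analysis of Lemma~\ref{lem:adm}. For a leaf node with $G_t = i(v)$, either $v \in X$ and the empty-forest state with $\mathcal{P}_\varnothing$ is reachable with value $c(v)$, or $v \notin X$ and the singleton state assigning $Q_1^*$ or $Q_1$ to label $i$ (according to whether $v \in S$) is reachable with value $0$; in both subcases the match with any compatible $\mathcal{P}$ is forced. For a disjoint union $G_t = G_{t_1}\oplus G_{t_2}$, I set $X_\iota := X \cap V(G_{t_\iota})$ and split $\mathcal{P}$ into $\mathcal{P}_1,\mathcal{P}_2$ along the doubled-label partition of $[k]$ used in the algorithm; the induction hypothesis yields reachable states $(t_\iota,F_\iota,\mathcal{P}_\iota)$ with $d[t_\iota,F_\iota,\mathcal{P}_\iota]\le c(X_\iota)$, and the disjoint-union transition produces the reachable state $(t, F_1 \oplus F_2, \mathcal{P})$ with value at most $c(X_1)+c(X_2)=c(X)$. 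Expressing the connectedness is immediate since the two auxiliary graphs are supported on disjoint vertex sets.

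For a join node $G_t = \eta_{i\times j}(G_{t'})$, the graph $G_{t'}-X$ is obtained from $G_t-X$ by deleting the edges added by the join. Given $\mathcal{P}$ compatible with $G_t-X$, I would exhibit a label state assignment $\mathcal{P}'$ compatible with $G_{t'}-X$ whose join transition produces $\mathcal{P}$, and then apply the induction hypothesis. The key observation is that since $H(G_t-X,\mathcal{P})$ is $S$-cycle-free, the representatives of $i$ and $j$ in the auxiliary graph $H(G_{t'}-X,\mathcal{P}')$ cannot be linked by an $S$-path in the forest delivered by the IH, which is exactly the guard of the relevant transition. I then enumerate cases on $(\mathcal{P}'(i),\mathcal{P}'(j))$: pairs involving $Q_\varnothing$, pairs in $\{Q_1,Q_1^*\}^2$, pairs in $\{Q_1\}\times\{Q_2, Q_f\}$ (possibly relaxing a $Q_f$ in $\mathcal{P}$ back to $Q_2$ in $\mathcal{P}'$ so that the algorithm's enumerated $Q_2\to Q_f$ branch re-applies it), or the waiting-state pairs $(Q_1, Q_w)$ and $(Q_1^*,Q_w^*)$ (recover $\mathcal{P}'$ by moving a $Q_f$ on $j$ in $\mathcal{P}$ back to $Q_w$ or $Q_w^*$). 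For each case, verifying that $\mathcal{P}'$ is compatible with $G_{t'}-X$ and that the reduced forest produced by the transition expresses the connectedness in $H(G_t-X,\mathcal{P})$ follows the same lines as in Lemma~\ref{lem:adm}.

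The renaming node $G_t = \rho_{i\to j}(G_{t'})$ is the main obstacle. After relabeling, $V_i(G_t-X)=\varnothing$ so $\mathcal{P}(i)=Q_\varnothing$, and the shape of $\mathcal{P}(j)$, together with the cardinalities and $S$-memberships of $V_i(G_{t'}-X)$ and $V_j(G_{t'}-X)$ and their connectedness in the prospective auxiliary graph, must be recovered as a pair $(\mathcal{P}'(i),\mathcal{P}'(j))$ so that the algorithm's enumerated renaming transition produces $\mathcal{P}$. The hard part will be handling the waiting states: for instance, $\mathcal{P}(j)=Q_w^*$ may originate from $(Q_1,Q_1)$, $(Q_1,Q_1^*)$, $(Q_1^*,Q_1^*)$, $(Q_w^*,Q_1)$, or $(Q_w^*,Q_1^*)$. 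For each subcase I would check that (i) $\mathcal{P}'$ is compatible with $G_{t'}-X$, using the $S$-path separation required by $\mathcal{P}(j)$'s state to rule out forbidden connections in $H(G_{t'}-X,\mathcal{P}')$; (ii) the transformation from $H(G_{t'}-X,\mathcal{P}')$ to $H(G_t-X,\mathcal{P})$ preserves $S$-cycle-freeness, by an argument parallel to Lemma~\ref{lem:adm}; and (iii) the forest produced by the algorithm's manipulation, after applying $\Red$ and invoking Lemma~\ref{lem:red-con}, still expresses the connectedness in $H(G_t-X,\mathcal{P})$. Together these give a reachable $(t,F,\mathcal{P})$ with $d[t,F,\mathcal{P}]\le c(X)$, completing the induction.
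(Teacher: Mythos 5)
Your overall strategy is the same as the paper's: induct on the expression tree, and for each node type reverse-engineer a child assignment $\mathcal{P}'$ that is compatible with $G_{t'}-X$, keeps $H(G_{t'}-X,\mathcal{P}')$ $S$-cycle-free, and admits a transition producing $\mathcal{P}$; your treatment of the leaf, disjoint-union and renaming nodes matches the paper's. However, your case enumeration at join nodes has a concrete hole: you never treat the situation where \emph{both} $i$ and $j$ contain at least two surviving vertices, i.e.\ $\mathcal{P}(i),\mathcal{P}(j)\in\{Q_2,Q_w,Q_w^*,Q_f\}$ (for instance $\mathcal{P}(i)=\mathcal{P}(j)=Q_f$ after joining two labels of two non-$S$-vertices each, which is perfectly compatible and $S$-cycle-free). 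Every pair you list --- one label in $Q_\varnothing$, both in $\{Q_1,Q_1^*\}$, $\{Q_1\}\times\{Q_2,Q_f\}$, and the waiting pairs $(Q_1,Q_w)$ and $(Q_1^*,Q_w^*)$ --- has at least one empty or singleton label, so none of them covers this case.

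This missing case is in fact the most delicate one in the paper's proof. One must argue, from the $S$-cycle-freeness of $H(G_t-X,\mathcal{P})$ and from the fact that the join creates a complete bipartite graph between $V_i(G_{t'}-X)$ and $V_j(G_{t'}-X)$, that neither label contains an $S$-vertex and that no $S$-path joins two vertices of $i$, two vertices of $j$, or a vertex of $i$ to a vertex of $j$ in $G_{t'}-X$; hence both labels are compatible with $Q_2$ at $t'$, and one sets $\mathcal{P}'(i)=\mathcal{P}'(j)=Q_2$ and uses the $Q_2$--$Q_2$ branch of the join transition (with the optional $Q_2\to Q_f$ relabelings to recover $\mathcal{P}$). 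One must also show that $H(G_{t'}-X,\mathcal{P}')$ is itself $S$-cycle-free, which the paper does by contradiction: an $S$-cycle through the representative of $i$ in $H(G_{t'}-X,\mathcal{P}')$ is rerouted into an $S$-cycle of $H(G_t-X,\mathcal{P})$, either by completing it through a joined edge if it meets $V_j(G_{t'}-X)$, or by substituting an arbitrary vertex of $V_j(G_{t'}-X)$ for the representative otherwise. None of your listed cases supplies this argument, so as written the induction step at join nodes is incomplete. (You should also note, as the paper does, that the remaining pairs such as $Q_1^*$ against $Q_2$, $Q_w$ or $Q_w^*$ cannot yield a $\mathcal{P}$ compatible with $G_t-X$, but that is a minor omission by comparison.)
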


\begin{proof}
	We proceed by induction on $T$.
	\begin{enumerate}
		\item If $t$ is a leaf node, then the statement holds by construction.

		\item Let $t$ be a join node with $G_t=\eta_{i \times j}(G_{t'})$, $X\subseteq V(G_t)$, and $\mathcal{P}$ be a label state assignment such that $\mathcal{P}$ is compatible with $G_t-X$ and $H(G_t-X,\mathcal{P})$ is $S$-cycle-free.
		First note that if label $\ell$ is compatible with $\P(\ell)$ in $G_t-X$, then it is
		compatible with $\P(\ell)$ in $G_{t'}-X$. This allows us for labels $\ell \in
		[k]\setminus\{i,j\}$ to keep label state $\P(\ell)$ in the label state assignments $\P'$ and
		$\P''$ that we will construct, while preserving compatibility. In the following, we will
		only justify the compatibility for $i$ and $j$. For the transitions with a label in state
		$Q_\varnothing$, we immediately conclude that the forest given by the transition expresses
		the connectedness in $H(G_t-X,\mathcal{P})$ from the induction hypothesis. For other
		transitions, by comparing paths that appear when constructing $F$ from $F'$ with paths
		that are in $H(G_t-X,\P)$ but not $H(G_{t'}-X,\P')$, we deduce that $F$ expresses the
		connectedness in $H(G_t-X,\P)$, this is already done in detail in the previous lemma. We
		consider the following cases based on the states of $i$ and $j$:
		
		\begin{itemize}
			\item We first consider the case where $\P(i),\P(j) \in \{Q_2,Q_w,Q_w^*,Q_f\}$. Since
			$\P$ is compatible with $G_t-X$, $|V_i(G_t-X)| \geq 2$, $|V_j(G_t-X)| \geq 2$.
			Then as $H(G_t-X,\mathcal{P})$ is $S$-cycle-free, it must be that none of the vertices
			in $V_i(G_{t'}-X)$ and $V_j(G_{t'}-X)$ are $S$-vertices, that no $S$-path connects a
			vertex of $i$ to a vertex of $j$ in $G_{t'}-X$, and that no $S$-path connects two
			vertices from $i$, or two vertices from $j$, in $G_{t'}-X$.
			Under these constraints, labels $i$ and $j$ are compatible with state $Q_2$ in $G_{t'}-X$.
			We set $\P'(i)=\P'(j)=Q_2$, $\P'$ is compatible with $G_{t'}-X$.
			Let us show that $H(G_{t'}-X,\mathcal{P'})$ is $S$-cycle-free.
			Suppose toward a contradiction that there is an $S$-cycle $C$ in $H(G_{t'}-X,\mathcal{P}')$. 
			Since $C$ is not a subgraph of $H(G_t-X,\mathcal{P})$, it
			must contain at least one of the representatives of $i$ and $j$. 
			Suppose w.l.o.g.~that $C$ contains the representative $i$. 
			Then, if $C$ contains an element of $V_j(G_{t'}-X)$, we deduce the existence of an
			$S$-path connecting $i$ and $j$ in $H(G_{t'}-X,\mathcal{P}')$, which can be completed to
			form an $S$-cycle in $H(G_t-X,\mathcal{P})$ using one of the edges added by the join
			operation. 
			This contradicts the fact that $H(G_t-X,\mathcal{P})$ is $S$-cycle-free.		
			If $C$ does not contain any vertex from $V_j(G_{t'}-X)$, replacing the representative of
			$i$ by an arbitrary vertex in $V_j(G_{t'}-X)$ also yields an $S$-cycle in
			$H(G_t-X,\mathcal{P})$, and leads to the same contradiction.
			We conclude that $H(G_{t'}-X,\mathcal{P}')$ is $S$-cycle-free, as desired.

			By induction hypothesis with, there exists $F'$ such that $F'$ expresses the
			connectedness in $H(G_t-X,\P)$, $(t',F',\mathcal{P}')$ is reachable, and
			$d[t',F',\mathcal{P}'] \leq |X|$.
			Since the join connected vertices of $V_i(G_{t'}-X)$ 
      to vertices of $V_j(G_{t'}-X)$, $G_t-X$ would not be
			compatible with $\mathcal{P}$ if $\mathcal{P}(i)=Q_w^*$ or $\mathcal{P}(j)=Q_w^*$. 
			Since $V_i(G_{t}-X)$ and $V_j(G_{t}-X)$ are of size at least two, $i$ and $j$ are
			assigned $Q_2$ or $Q_f$ in $\P$. 
			Both cases are covered by the transitions. 
			We deduce that there exists $F$ such that $F$ expresses the connectedness in
			$H(G_t-X,\mathcal{P})$, $(t,F,\mathcal{P})$ is reachable, and that $d[t,F,\mathcal{P}]
			\leq d[t',F',\mathcal{P}'] \leq |X|$.

			\item If $\mathcal{P}(i)=Q_\varnothing$ or $\mathcal{P}(j)=Q_\varnothing$, then
			$G_t-X = G_{t'}-X$ from compatibility and join definition. So we can apply induction
			hypothesis with $\mathcal{P}$ and follow the transition corresponding to $Q_\varnothing$
			to conclude.

			\item If $\P(i),\P(j) \in \{Q_1,Q_1^*\}$, $\mathcal{P}$ is compatible
			with $G_{t'}-X$ and $H(G_{t'}-X,\P) \subseteq H(G_t-X,\P)$ so $H(G_t-X,\P)$
			contains no $S$-cycle and no $S$-path between the representatives of $i$ and $j$. The
			induction hypothesis applied with $\mathcal{P}$ provides $F'$ such that $F'$ expresses
			the connectedness in $H(G_{t'}-X,\P')$, $(t',F',\mathcal{P})$ is reachable and
			$d[t',F',\mathcal{P}]\leq |X|$.
			Since $H(G_{t'}-X,\mathcal{P})$ contains no $S$-path between representatives of $i$ and $j$,
			and $F'$ expresses the connectedness in $H(G_{t'}-X,\P)$, we can follow the transition
			to get $F$ such that $F$ expresses the connectedness in $H(G_t-X,\P)$,
			$(t,F,\mathcal{P})$ is reachable and $d[t,F,\mathcal{P}] \leq d[t',F',\mathcal{P}'] \leq
			|X|$.
			
			\item If $\{\P(i),\P(j)\}=\{Q_1,Q_f\}$, w.l.o.g. let us assume that
			$\mathcal{P}(i)=Q_1$ and $\mathcal{P}(j)=Q_f$. We define $\mathcal{P}'$ and
			$\mathcal{P}''$ obtained from $\mathcal{P}$ with $\mathcal{P}'(j)=Q_w$ and
			$\mathcal{P}''(j)=Q_2$.

			If label $j$ contains no $S$-vertex in $G_{t'}-X$ then it is compatible with
			$\mathcal{P}''$, and $H(G_{t'}-X,\mathcal{P}'')$ is $S$-cycle-free because
			$H(G_t-X,\mathcal{P})$ is $S$-cycle-free and we can transform paths of
			$H(G_{t'}-X,\mathcal{P}')$ as described in the key observation. By applying the
			induction hypothesis with $\mathcal{P}''$, we get $F''$ such that $F''$ expresses the
			connectedness in $H(G_{t'}-X,\P'')$, $(t',F'',\mathcal{P}'')$ is reachable and
			$d[t',F'',\mathcal{P}''] \leq |X|$.  We can then follow the transition that accepts
			types $Q_1$ and $Q_2$ and conclude.
			
			Otherwise, $G_{t'}-X$ is compatible with $\mathcal{P}'$ and $H(G_{t'}-X,\mathcal{P}')$
			is $S$-cycle-free because $H(G_t-X,\mathcal{P})$ is $S$-cycle-free and we can transform
			paths of $H(G_{t'}-X,\mathcal{P}')$ as described in the key observation. By applying the
			induction hypothesis with $\mathcal{P}'$, we get $F'$ such that $F'$ expresses the
			connectedness in $H(G_{t'}-X,\P')$, $(t',F',\mathcal{P}')$ is reachable and
			$d[t',F',\mathcal{P}'] \leq |X|$.
			We can then follow the transition that accepts types $Q_1$ and $Q_w$ and conclude.
			
			\item If $\{\mathcal{P}(i),\mathcal{P}(j)\} = \{Q_1^*,Q_f\}$, w.l.o.g.
			let us assume that $\mathcal{P}(i)=Q_1^*$ and $\mathcal{P}(j)=Q_f$. Consider
			$\mathcal{P}'$ obtained from $\mathcal{P}$ by setting $\mathcal{P}'(j)=Q_w^*$. Since
			$H(G_t-X,\mathcal{P})$ is $S$-cycle-free, we get that $\mathcal{P}'$ is compatible with
			$G_t-X$ and that $H(G_{t'}-X,\mathcal{P}')$ is $S$-cycle-free. By induction hypothesis,
			there exists $F'$ such that $F'$ expresses the connectedness in $H(G_{t'}-X,\P')$,
			$(t,F',\mathcal{P}')$ is reachable and $d[t,F',\mathcal{P}'] \leq |X|$. We can follow
			the transition for types $Q_1^*$ and $Q_w^*$ and conclude.

			\item If one of $i$ and $j$ is in state $Q_1$ and the other is in state $Q_w^*$ or
			$Q_w$, the newly added path with endpoints in the label in state $Q_w^*$ makes it
			impossible for $\mathcal{P}$ to be compatible with $G_t-X$.

			\item If $\{\P(i),\P(j)\}=\{Q_1,Q_2\}$, we apply the induction hypothesis with
			$\mathcal{P}$ and follow the transition for types $Q_1$ and $Q_2$ to conclude.
			
			\item If one of $i$ and $j$ is in state $Q_1^*$ and the other is in state $Q_2$, $Q_w$
			or $Q_w^*$, then the join adds an $S$-path in $G_t-X$ with endpoints in the label in state 
			$Q_2$, $Q_w$ or $Q_w^*$ which makes it impossible for $\mathcal{P}$ to be compatible
			with $G_t-X$.
		\end{itemize} 
		
		\item Let $t$ be a renaming label node with $G_t= \rho_{i \rightarrow j}(G_{t'})$, let $X \subseteq
		V(G_t)$ and $\mathcal{P} \in \mathcal{Q}^k$ such that $\mathcal{P}$ is compatible with
		$G_t-X$ and $H(G_t-X,\mathcal{P})$ is $S$-cycle-free.
		Note that $\mathcal{P}(i)=Q_\varnothing$ by compatibility.
		In the following, by comparing paths that appear when constructing $F$ from $F'$ with paths
		that are in $H(G_t-X,\P)$ but not $H(G_{t'}-X,\P')$, we deduce that $F$ expresses the
		connectedness in $H(G_t-X,\P)$, this is already done in detail in the previous lemma.
		We consider the following cases based on $\mathcal{P}(j)$:
		\begin{itemize}	
			\item If $\mathcal{P}(j)=Q_f$, then consider $\mathcal{P}'$
			where $\mathcal{P}'(i)$ and $\mathcal{P}'(j)$ are determined by the type of the vertex
			with the corresponding label if it is unique in $G_{t'}-X$ and are equal to $Q_f$
			otherwise. Thus, $\mathcal{P}'$ is compatible with $G_{t'}-X$ and $H(G_{t'}-X,\mathcal{P}')$
			is $S$-cycle-free because only labels change with $H(G_t-X,\mathcal{P})$. By induction
			hypothesis, we have $F'$ such that $F'$ expresses the connectedness in
			$H(G_{t'}-X,\P')$, $(t',F',\mathcal{P}')$ is reachable and $d[t',F',\mathcal{P}'] \leq
			|X|$. We can thus use the corresponding transition and conclude.

			\item If $\P(j)=Q_1$ or $\P(j)=Q_1^*$, then only one of $i$ and $j$ contains a vertex in
			$G_{t'}-X$. We obtain $\mathcal{P}'$ from $\mathcal{P}$ by setting $\mathcal{P}'(i)$ and
			$\mathcal{P}'(j)$ according to which label contains the vertex (note that one of them
			will be $Q_\varnothing$). 
			Again, $\mathcal{P}'$ is compatible with $G_{t'}-X$ and $H(G_{t'}-X,\mathcal{P}')$ 
			is $S$-cycle-free because only labels change with $H(G_t-X,\mathcal{P})$. 
			By induction hypothesis, we have $F'$ such that $F'$ expresses the connectedness in
			$H(G_{t'}-X,\P')$, $(t',F',\mathcal{P}')$ is reachable and $d[t',F',\mathcal{P}'] \leq
			|X|$. We can follow the transition associated to type $Q_\varnothing$ and conclude.

			\item If $\P(j)=Q_2$, then the vertices labeled $i$ and $j$ in $G_{t'}-X$
			cannot contain $S$-vertices by compatibility. We obtain $\mathcal{P}'$ from
			$\mathcal{P}$ by setting $\mathcal{P}'(i)$ and $\mathcal{P}'(j)$ to 
			$Q_\varnothing$, $Q_1$ or $Q_2$ based on the number of vertices with the 
			corresponding label in $G_{t'}-X$. Again, $\mathcal{P}'$ is compatible with $G_{t'}-X$
			and $H(G_{t'}-X,\mathcal{P}')$ is $S$-cycle-free. There cannot be an $S$-path between
			representatives of $i$ and $j$ in $H(G_{t'}-X,\mathcal{P}')$ because
			$H(G_t-X,\mathcal{P})$ contains no $S$-cycle but has them connected by an additional
			path. By induction hypothesis, we get $F'$ such that $F'$ expresses the connectedness in
			$H(G_{t'}-X,\P')$, $(t',F',\mathcal{P}')$ is reachable and $d[t',F',\mathcal{P}']\leq
			|X|$. We can then follow a transition and conclude. 

			\item If $\P(j)=Q_w$, then consider $\mathcal{P}'$ obtained from $\mathcal{P}$
			by setting $\mathcal{P}'(i)$ and $\mathcal{P}'(j)$ to $Q_\varnothing$, $Q_1$, $Q_1^*$, 
			$Q_2$ or $Q_w$ based on the vertices with corresponding label in $G_{t'}-X$. $\mathcal{P}'$ 
			is compatible with $G_{t'}-X$, $H(G_{t'}-X,\mathcal{P}')$ is $S$-cycle-free because it
			is less connected than $H(G_t-X,\mathcal{P})$ and $H(G_{t'}-X,\mathcal{P}')$ cannot have
			an $S$-path between representatives of $i$ and $j$ because $H(G_t-X,\mathcal{P})$
			is $S$-cycle-free. By induction hypothesis, we get $F'$ such that $F'$ expresses the
			connectedness in $H(G_{t'}-X,\P')$, $(t',F',\mathcal{P}')$ and $d[t',F',\mathcal{P}']
			\leq |X|$. We can then follow a transition and conclude. 

			\item Finally, if $\mathcal{P}(j)=Q_w^*$, then consider $\mathcal{P}'$
			obtained from $\mathcal{P}$ by setting $\mathcal{P}'(i)$ and $\mathcal{P}'(j)$ to
			$Q_\varnothing$, $Q_1$, $Q_1^*$ or $Q_w^*$ based on the vertices with corresponding
			label in $G_{t'}-X$. $\mathcal{P}'$ is compatible with $G_{t'}-X$ because a path
			between its vertices labeled $i$ or $j$ would contradict $H(G_t-X,\mathcal{P})$ being
			$S$-cycle-free, $H(G_{t'}-X,\mathcal{P}')$ is $S$-cycle-free because it is less
			connected than $H(G_t-X,\mathcal{P})$ and there is no path between the representatives
			of $i$ and $j$ in $H(G_{t'}-X,\mathcal{P}')$ because it would also contradict
			$H(G_t-X,\mathcal{P})$ being $S$-cycle-free. By induction hypothesis, we get $F'$ such
			that $F'$ expresses the connectedness in $H(G_{t'}-X,\P')$, $(t',F',\mathcal{P}')$ is
			reachable and $d[t',F',\mathcal{P}'] \leq |X|$. We can then follow a transition and
			conclude.
		\end{itemize}

		\item Let $t$ be a disjoint union node with $G_t = G_{t_1} \oplus G_{t_2}$, let $X \subseteq
		V(G_t)$ and $\mathcal{P} \in \mathcal{Q}^k$ such that $\mathcal{P}$ is compatible with
		$G_t-X$ and $H(G_t-X,\mathcal{P})$ is $S$-cycle-free. 
		We define $X_1 = X \cap V(G_{t_1})$ and $X_2 = X \cap V(G_{t_2})$, and construct
		$\mathcal{P}_1$ and $\mathcal{P}_2$ by taking the value of $\mathcal{P}$ for their
		respective used labels, and taking value $Q_\varnothing$ otherwise.
		We immediately get that, for $i \in \{1,2\}$, $\mathcal{P}_i$ is compatible with
		$G_{t_i}-X_i$ and $H(G_{t_i}-X_i,\mathcal{P}_i)$ because used labels are disjoint.
		By induction hypothesis, we get $F_1$ and $F_2$ such that $F_1$ and $F_2$ express the
		connectedness in $H(G_{t_1}-X,\P_1)$ and $H(G_{t_2}-X,\P_2)$ respectively,
		$(t_1,F_1,\mathcal{P}_1)$ and $(t_2,F_2,\mathcal{P}_2)$ are reachable,
		$d[t_1,F_1,\mathcal{P}_1] \leq |X_1|$ and $d[t_2,F_2,\mathcal{P}_2]\leq |X_2|$.
		We can follow the transition and conclude.\qedhere
	\end{enumerate}
\end{proof}

\begin{lemma}\label{lem:val}
	The minimum value of a reachable state of the root of the expression is equal to the minimum size of a
	subset feedback vertex set.
\end{lemma}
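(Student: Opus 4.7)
The plan is to prove the equality by a two-sided argument that mirrors the strategy of Lemma~\ref{lem:ect-val}: combine the soundness provided by Lemma~\ref{lem:adm} (every reachable state corresponds to a real deletion set that leaves $H$ $S$-cycle-free) with the completeness provided by Lemma~\ref{lem:cplt} (every valid deletion set is witnessed by some reachable state). Concretely, I will show that (a) for an optimal subset feedback vertex set $X^\star$ there is a reachable state at the root with value at most $c(X^\star)$, and (b) for every reachable state at the root, its value equals $c(X)$ for some $S$-cycle-free deletion set $X$.

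For (a), apply Lemma~\ref{lem:cplt} to $t = r$ and $X := X^\star$. To do so, I need a label state assignment $\mathcal{P}^\star$ that is compatible with $G_r - X^\star$ and such that $H(G_r - X^\star, \mathcal{P}^\star)$ is $S$-cycle-free. The natural choice is to set $\mathcal{P}^\star(i) = Q_\varnothing$ if the label $i$ is empty in $G_r - X^\star$, $\mathcal{P}^\star(i) = Q_1$ or $Q_1^\ast$ if $i$ holds a single vertex (depending on whether that vertex lies in $S$), and $\mathcal{P}^\star(i) = Q_f$ otherwise. Compatibility is immediate from the definitions, and because $Q_f$, $Q_1$, $Q_1^\ast$, and $Q_\varnothing$ do not add any vertex in the construction of $H$, the auxiliary graph $H(G_r - X^\star, \mathcal{P}^\star)$ differs from $G_r - X^\star$ only by unlabeling some vertices. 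In particular it is $S$-cycle-free, since $X^\star$ is an SFVS. Lemma~\ref{lem:cplt} then produces a reachable state $(r, F, \mathcal{P}^\star)$ with $d[r, F, \mathcal{P}^\star] \le c(X^\star)$.

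For (b), let $(r, F, \mathcal{P})$ be any reachable state. Lemma~\ref{lem:adm} gives a set $X \subseteq V(G)$ with $d[r, F, \mathcal{P}] = c(X)$ such that $H(G - X, \mathcal{P})$ is $S$-cycle-free. The key observation is that the construction of $H(\wt G, \mathcal{P})$ only adds new vertices and edges incident to those new vertices; it never removes any edge of $\wt G$. Hence $G - X$ embeds as a subgraph of $H(G - X, \mathcal{P})$, and any $S$-cycle of $G - X$ would remain an $S$-cycle in $H(G - X, \mathcal{P})$. Therefore $G - X$ contains no $S$-cycle, meaning $X$ is an SFVS, so $d[r, F, \mathcal{P}] = c(X) \ge c(X^\star)$.

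Putting the two directions together, $\min_{F, \mathcal{P}} d[r, F, \mathcal{P}] = c(X^\star)$, which is the desired equality. The only substantive step is verifying that the chosen $\mathcal{P}^\star$ is compatible and that $H$ stays $S$-cycle-free, which is why the ``final'' state $Q_f$ is crucial: it lets us push every already-finalized label out of the way at the root without creating spurious connectivity, ensuring Lemma~\ref{lem:cplt} applies. The reverse direction relies only on the fact that $H$ is constructed by additions, which is the main reason the connectivity abstraction in the algorithm faithfully over-approximates $S$-cycles in the real graph.
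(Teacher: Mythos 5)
Your proof is correct and follows essentially the same route as the paper: the same choice of $\mathcal{P}^\star$ (using $Q_\varnothing$, $Q_1$, $Q_1^*$, $Q_f$ so that $H(G_r-X^\star,\mathcal{P}^\star)$ coincides with $G_r-X^\star$ up to labels), an application of Lemma~\ref{lem:cplt} for one inequality and of Lemma~\ref{lem:adm} for the other. Your explicit justification that $\wt{G}$ embeds as a subgraph of $H(\wt{G},\mathcal{P})$, so that $S$-cycle-freeness of $H$ transfers to $G-X$, is a detail the paper leaves implicit but is exactly the right observation.
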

\begin{proof}
	For $X^*$ a subset feedback vertex set of minimum size, we define $\mathcal{P}^*$ by
	$\mathcal{P}^*(i)=Q_\varnothing$ is $G_r-X^*$ has no vertex labeled $i$, $\mathcal{P}^*(i)=Q_f$ if
	there are at least 2 vertices labeled $i$ in $G_r-X^*$ and $\mathcal{P}^*(i)=Q_1$ or
	$\mathcal{P}^*(i)=Q_1^*$ based on the type of vertex labeled $a$ if there is exactly one.
	$\mathcal{P}^*$ is compatible with $G_r-X^*$ and $H(G_r-X,\mathcal{P})=G_r-X$ contains no $S$-cycle.
	By Lemma \ref{lem:cplt}, there is a state $(r,F^*,\mathcal{P})$ such that $d[r,F^*,\mathcal{P}]
	\leq |X^*|$. By Lemma \ref{lem:adm}, every reachable state has a value that corresponds to a subset
	feedback vertex set so by minimality of $|X^*|$, we have $d[r,F^*,\mathcal{P}]=|X^*|$ and
	$d[r,F^*,\mathcal{P}^*] \leq d[r,F,\mathcal{P}]$ for any other reachable state
	$(r,F,\mathcal{P})$.
\end{proof}

\begin{theorem}\label{thm:complexity}
	Given a $k$-expression describing graph $G$, \SFVS{} can be solved in time $2^{\mathcal{O}(k
	\log k)} \cdot n$, where $n$ is the size of the given $k$-expression.
\end{theorem}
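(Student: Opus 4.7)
The plan is to establish correctness from Lemma~\ref{lem:val} and then bound the running time by counting reachable states and the per-transition cost. First I would invoke Lemma~\ref{lem:val}: after running the bottom-up dynamic programming, $\min_{F, \mathcal{P}} d[r, F, \mathcal{P}]$ at the root $r$ of the $k$-expression equals the minimum weight of a subset feedback vertex set. Outputting this value (and, if desired, backtracking through the transitions that achieved it to recover an actual solution) completes the algorithmic task.

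For the running time, I would first bound the number of reachable states at each node of the $k$-expression. By Claim~\ref{claim:sfvs-forest-size}, every reachable partially labeled forest has $O(k)$ vertices. Using the rooted encoding $(N, f, \ell)$ from Section~\ref{sec:prelims}, with $N = O(k)$, $f : [N] \to [N]$, and $\ell$ ranging over labels in $[O(k)]$ together with a constant number of auxiliary symbols (such as $S$-membership and activeness), the count of forest descriptions is $k^{O(k)} = 2^{O(k \log k)}$. The number of label state assignments $\mathcal{P} : [k] \to \mathcal{Q}$ with $|\mathcal{Q}| = 7$ is $7^k = 2^{O(k)}$. Hence each node hosts $2^{O(k \log k)}$ states.

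Next I would bound the per-transition work. Leaf, renaming-label, and join nodes each produce a parent state from a single child state via local modifications (edge insertions, vertex identifications, relabelings) followed by a call to $\Red$, all doable in $k^{O(1)}$ time on forests of size $O(k)$. Disjoint union nodes combine pairs of states from the two children; thanks to the preprocessing that disjoins the label sets of the two subtrees, the combined label state assignment is uniquely determined by the pair, and the combined forest is obtained from $F_1 \oplus F_2$ followed by $\Red$ in $k^{O(1)}$ time. The number of pairs processed at a disjoint union node is bounded by $(2^{O(k \log k)})^2 = 2^{O(k \log k)}$.

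The main point to verify is precisely this disjoint union bookkeeping: the work there could a priori be quadratic in the number of states per node, but since both factors are of the form $2^{O(k \log k)}$, the product remains $2^{O(k \log k)}$, i.e., within our budget. Summing over the $O(n)$ nodes of the (possibly label-doubled) $k$-expression yields total time $2^{O(k \log k)} \cdot n$, as claimed; the label doubling does not affect the asymptotics since we still work with $O(k)$ labels and forests of $O(k)$ size.
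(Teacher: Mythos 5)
Your proposal is correct and follows essentially the same route as the paper: correctness via Lemma~\ref{lem:val}, a $2^{\Oh(k\log k)}$ bound on states per node from Claim~\ref{claim:sfvs-forest-size}, the observation that squaring this bound at disjoint union nodes stays within $2^{\Oh(k\log k)}$, and a sum over the $\Oh(n)$ nodes. Your additional details on the forest encoding and per-transition cost are a welcome elaboration of what the paper leaves implicit, but they do not change the argument.
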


\begin{proof}
	From Lemma \ref{lem:val} we know that the algorithm computes the solution to SFVS on $G$,
	For each $t \in T$, the number of states $(t,F,\mathcal{P})$ is $2^{\mathcal{O}(k \log k)}$
	since the forests have $\mathcal{O}(k)$ vertices by claim \ref{claim:sfvs-forest-size}.
	Note that enumerating pairs of states for disjoint union nodes takes time
	$\left(2^{\mathcal{O}(k \log k)}\right)^2=2^{\mathcal{O}(k \log k)}$. Summing for all
	nodes of $T$ (at most $n$), we conclude that the values for all states can be computed in time
	$2^{\mathcal{O}(k \log k)} \cdot n$.
\end{proof}

\section{Odd Cycle Transversal in graphs of bounded cliquewidth}\label{sec:oct}
\def\OCT{\textsc{Odd Cycle Transversal}}
\def\s{\mathbf{s}}
\def\u{\mathbf{u}}
\def\v{\mathbf{v}}
\def\x{\mathbf{x}}
\def\y{\mathbf{y}}
\def\cw{\mathbf{cw}}
\def\lcw{\mathbf{lcw}}

\subsection{The algorithm}

We describe a dynamic programming algorithm for \OCT{} parameterized by
clique-width.
In the following, we consider a $k$-labeled graph $G$, and assume that it comes with a
$k$-expression of size $n$ and its associated tree $T$ with root $r\in V(T)$.

A {\em state} in our dynamic programming algorithm is a tuple $(t,\x)$ where $t$ is a node of $T$, and $\x\in \{0,1,2,3\}^k$ is a vector.
It is called \emph{admissible} if there exists a triple $(X,A,B)$ with $X$ an OCT of $G_t$ and $(A,B)$ an ordered bipartition of $G_t-X$, called \emph{compatible} with $(t,\x)$, such that
\begin{itemize}
	\item $V_i(G_t)\subseteq X$ for every $i\in [k]$ with $\x[i]=0$;
	\item $V_i(G_t-X)\subseteq A$ for every $i\in [k]$ with $\x[i]=1$; and
	\item $V_i(G_t-X)\subseteq B$ for every $i\in [k]$ with $\x[i]=2$.
\end{itemize}
We call \emph{category} of label $i$ with respect to vector $\x$ the integer $\x[i]$.
Labels from category $3$ may contain any vertex of $G_t$.
Hence $X$ is not characterized by $\x$. 
In particular, two OCTs for a same graph $G_t$ may witness the admissibility of a single state $(t,\x)$: in $d[t,\x]$ will be stored the size of one such OCT.
We will later prove that the minimum size of an OCT for $G$ is found by iterating through the different values of states $(r,\x)$.

For convenience in the following, let us define the poset on ground set $\{0,1,2,3\}$ with
comparabilities ${0<1}$, ${0<2}$, ${1<3}$, and ${2<3}$, and consider the \emph{supremum} and
\emph{infimum} operations $\vee$ and $\wedge$ on such a poset: $a\vee b$ denotes the unique
minimal element $c$ such that both $a\leq c$ and $b\leq c$, $a \wedge b$ denotes the unique element
$c'$ such that both $a \geq c'$ and $b \geq c'$.
We extend the order relation to pairs of vectors $\x,\y \in \{0,1,2,3\}$ by defining $\x\leq \y$ if $\x[i]\leq \x[j]$ for all $i,j\in [k]$.
Similarly, the supremum of pairs of vectors
$\x,\y\in \{0,1,2,3\}^k$ is obtained by performing the supremum on each of their coordinates.

We describe a bottom-up computation of the values of states, handling each type of node in $T$. 
First, we initialize $d[t,\x]=+\infty$ for every node $t\in V(T)$ and vector $\x\in \{0,1,2,3\}^k$. 
We then proceed to the computation of some states, called \emph{reachable} in the remaining of the section, going trough the following transitions:

\begin{itemize}
	\item \textbf{Leaf node.} If $t$ is a leaf node with $G_t=i(v)$ for some vertex $v$ and label
	$i\in[k]$, we set $d[t,\{0\}^k]=c(v)$, $d[t,\x]=0$, and $d[t,\y]=0$, where $\x[i]=1$, $\y[i]=2$,
	and $\x[j]=\y[j]=0$ for $i\neq j\in [k]$.  These states correspond to either placing $v$ into an
	OCT we are constructing, or placing it into one of the two sides of a bipartition we are
	setting.

	\item \textbf{Disjoint union node.} Let $t$ be a disjoint union node with children $t_1,t_2$ such that ${G_t=G_{t_1} \oplus\,G_{t_2}}$. 
	For each $\x \in \{0,1,2,3\}^k$, we put
	$$d[t,\x] = \min \limits_{\x_1 \vee \x_2 \leq \x} d[t_1,\x_1]+d[t_2,\x_2].$$

	\item \textbf{Join node.} Let $t$ be a join node with child $t'$ such that $G_t=\eta_{i\times j}(G_{t'})$.
	We do not modify states but only let through those whose join can only preserve the bipartition induced by categories $1$ and $2$ in $G_{t'}$.
	Such states are those $(t',\x)$ which satisfy $\x[i] \wedge \x[j]=0$, for which we put
	$$d[t,\x] = d[t',\x].$$
	
	\item \textbf{Renaming label node.} Let $t$ be a renaming label node with child $t'$ such that $G_t=\rho_{i\rightarrow j}(G_{t'})$.
	For each reachable state $(t',\x')$ we create a vector $\x$ by setting $\x[i]=0$, $\x[j]=\x'[i] \vee \x'[j]$, and $\x[\ell]=\x'[\ell]$ for $\ell\not\in \{i,j\}$.
	We then put
	$$d[t,\x] \leftarrow d[t',\x'].$$
\end{itemize}

We show in the following lemma that reachable states are admissible and contain the value of a compatible OCT.

\begin{lemma}\label{lem:DP-exeq}
	For every reachable state $(t,\x)$ there exists a compatible triple $(X,A,B)$ such that $c(X)=d[t,\x]$.
\end{lemma}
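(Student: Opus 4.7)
The plan is to prove Lemma~\ref{lem:DP-exeq} by induction on $T$, following the bottom-up structure of the dynamic programming algorithm. For each of the four node types, I will start from a reachable state $(t,\x)$, identify the predecessor state(s) that caused $(t,\x)$ to be propagated, apply the induction hypothesis to produce compatible triples at the children, and then assemble a triple $(X,A,B)$ compatible with $(t,\x)$ whose weight matches $d[t,\x]$.

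For the leaf case with $G_t=i(v)$, the three initial values correspond directly to the three obvious triples: $(\{v\},\emptyset,\emptyset)$ for $\x=\{0\}^k$, $(\emptyset,\{v\},\emptyset)$ for $\x[i]=1$, and $(\emptyset,\emptyset,\{v\})$ for $\y[i]=2$. For the disjoint union case $G_t = G_{t_1}\oplus G_{t_2}$, $(t,\x)$ is reached from some $(\x_1,\x_2)$ with $\x_1\vee\x_2\leq\x$; I apply induction to obtain $(X_i,A_i,B_i)$ at each child, and set $X:=X_1\cup X_2$, $A:=A_1\cup A_2$, $B:=B_1\cup B_2$. Since odd cycles do not cross connected components of a disjoint union, $X$ is an OCT and $(A,B)$ bipartitions $G_t-X$; the compatibility conditions for $\x$ are then immediate from $\x_1\vee\x_2\leq\x$, by a small case distinction on the value of $\x[\ell]\in\{0,1,2,3\}$. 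For the renaming case $G_t=\rho_{i\to j}(G_{t'})$, the underlying graph does not change, so $X$ remains an OCT and $(A,B)$ a valid bipartition; I only need to verify that the constraints on categories are preserved for the labels $i$ and $j$, using $V_i(G_t)=\emptyset$ and $V_j(G_t)=V_i(G_{t'})\cup V_j(G_{t'})$, together with $\x[j]=\x'[i]\vee \x'[j]$.

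The only case that requires genuine (but brief) graph-theoretic reasoning is the join $G_t=\eta_{i\times j}(G_{t'})$. Here $d[t,\x]$ is propagated from $d[t',\x]$ only if $\x[i]\wedge\x[j]=0$, and I claim the triple $(X,A,B)$ obtained by the inductive hypothesis at $t'$ is still compatible with $(t,\x)$. The key observation is that $\x[i]\wedge \x[j]=0$ forces one of two situations: either $\x[i]=0$ or $\x[j]=0$ (so all added join-edges have an endpoint in $X$, hence do not appear in $G_t-X$), or $\{\x[i],\x[j]\}=\{1,2\}$ (so the added join-edges go strictly between $A$ and $B$ in $G_t-X$). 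In both situations $G_t-X$ remains bipartite with bipartition $(A,B)$, so $X$ is an OCT of $G_t$, and the label constraints, being identical to those at $t'$, are still satisfied.

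The main obstacle I expect is cleanly articulating this join case, in particular justifying that the two-valued infimum $\wedge$ on $\{0,1,2,3\}$ really does capture precisely the combinations of categories that are harmless for the join, and that no subtle edge is created between two vertices in $A$ or two vertices in $B$. A short case analysis on $(\x[i],\x[j])\in\{(0,\cdot),(\cdot,0),(1,2),(2,1)\}$ handles this explicitly. Once this is stated, the bookkeeping in the remaining cases is routine and the equality $c(X)=d[t,\x]$ follows directly from how values are propagated.
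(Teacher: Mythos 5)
Your proposal is correct and follows essentially the same route as the paper: induction on the expression tree, with the leaf, disjoint-union, and renaming cases handled by direct bookkeeping, and the join case resolved by exactly the same observation that $\mathbf{x}[i]\wedge\mathbf{x}[j]=0$ means one of the joined labels lies entirely in $X$ or the two labels lie on opposite sides of the bipartition, so every added edge is either incident to $X$ or crosses between $A$ and $B$. Your explicit case split on $(\mathbf{x}[i],\mathbf{x}[j])\in\{(0,\cdot),(\cdot,0),(1,2),(2,1)\}$ matches the paper's enumeration verbatim, so there is nothing to add.
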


\begin{proof}
	We proceed by induction on $T$.

	Let $t$ be a leaf node with $G_t=i(v)$ for some vertex $v$ and label $i\in[k]$, then three different states are initialized depending on whether $v$ is placed in the OCT, or in one of the two sides of the bipartition.
	Each of these states satisfies the conclusion.

	Let $t$ be a disjoint union node with children $t_1,t_2$ such that ${G_t=G_{t_1} \oplus\,G_{t_2}}$, and $(t,\x)$ be a reachable state.
	By the described transitions, there exist reachable states $(t_1,\x_1)$ and $(t_2,\x_2)$ such
	that $\x \geq \x_1 \vee \x_2$ and $d[t,\x]= d[t_1,\x_1]+d[t_2,\x_2].$
	By induction hypothesis, for $i\in [2]$, there exist triples $(X_i,A_i,B_i)$ compatible with
	$(t_i,\x_i)$ such that $c(X_i)=d[t_i,\x_i]$.
	As ${G_t=G_{t_1} \oplus\,G_{t_2}}$, $X=X_1\cup X_2$ is an OCT of $G_t$ and $(A_1\cup A_2,B_1\cup B_2)$ defines a bipartition of $G_t-X$.
	Furthermore as $\x \geq \x_1 \vee \x_2$, it is straightforward to check that $(X,A_1\cup A_2,B_1\cup B_2)$ are compatible with $(t,\x)$.
	We conclude noting that $c(X)=c(X_1)+c(X_2)=d[t_1,\x_1]+d[t_2,\x_2]=d[t,\x]$.

	Let $t$ be a join node with child $t'$ such that $G_t=\eta_{i\times j}(G_{t'})$, and $(t,\x)$ be a reachable state.
	By the described transitions $(t',\x)$ is also a reachable state satisfying one of the conditions $\x[i]=0$, or $\x[j]=0$, or ($\x[i]=1$ and $\x[j]=2$), or ($\x[i]=2$ and $\x[j]=1$).
	Furthermore, $d[t,\x]=d[t',\x]$.
	By induction hypothesis, there exists a triple $(X,A,B)$ compatible with $(t',\x)$ such that
	$c(X)=d[t',\x]$, and hence such that $c(X)=d[t,\x]$.
	According to the transition, as $(X,A,B)$ and $(t',\x)$ are compatible, every edge added to $G_{t'}$ when performing the join operation either has an endpoint in $X$, or one in $A$ and the other in $B$. 
	We deduce that $X$ is also an OCT of $G_t$, and that $(A,B)$ also defines a bipartition of $G_t$.
	Hence, $(X,A,B)$ is compatible with $(t,\x)$.

	Let $t$ be a renaming label node with child $t'$ such that $G_t=\rho_{i\rightarrow j}(G_{t'})$, and let $(t,\x)$ be a reachable state. 
	By the described transitions there exists $\x'$ such that $\x[i]=0$, $\x[j]=\x'[i] \vee \x'[j]$, and $\x[\ell]=\x'[\ell]$ for $\ell\not\in \{i,j\}$, and such that $d[t,\x]= d[t',\x']$.
	By induction hypothesis, there exists a triple $(X,A,B)$ compatible with $(t',\x')$ such that
	$c(X)=d[t',\x']$, and hence such that $c(X)=d[t,\x]$.
	As $G_t$ and $G_{t'}$ are isomorphic, $X$ is also an OCT of $G_{t}$, and $(A,B)$ also defines a bipartition of $G_{t}-X$.
	As $\x[i]=0$ and $\x[j]=\x'[i] \vee \x'[j]$ are the only modifications made to $\x'$, it is easy to check that $(X,A,B)$ is compatible with $(t,\x)$.
\end{proof}

\begin{lemma}\label{lem:DP-geq}
	For every node $t\in V(T)$, for every OCT $X$ of $G_t$ and bipartition $(A,B)$ of $G_t-X$, there
	exists $\x\in \{0,1,2,3\}^k$ such that $(t,\x)$ is reachable, compatible with $(X,A,B)$, and
	satisfies $c(X)\geq d[t,\x]$.
\end{lemma}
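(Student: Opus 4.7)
The plan is to proceed by induction on $T$. For each OCT $X$ of $G_t$ and bipartition $(A, B)$ of $G_t - X$, I would exhibit the \emph{canonical} vector $\bar\x$ associated to $(X, A, B)$: for each label $i \in [k]$, set $\bar\x[i] = 0$ if $V_i(G_t) \subseteq X$; $\bar\x[i] = 1$ if $V_i(G_t) \not\subseteq X$ and $V_i(G_t - X) \subseteq A$; $\bar\x[i] = 2$ if $V_i(G_t) \not\subseteq X$ and $V_i(G_t - X) \subseteq B$; and $\bar\x[i] = 3$ otherwise. By construction $\bar\x$ is compatible with $(X, A, B)$, so it remains to show that $(t, \bar\x)$ is reachable and $d[t, \bar\x] \leq c(X)$.

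The leaf case is immediate: for $G_t = i(v)$, exactly one of $v \in X$, $v \in A$, $v \in B$ holds, and each possibility matches an initialized state with the correct value. For a disjoint union node $t$ with children $t_1, t_2$, I would split $(X, A, B)$ into $(X_\ell, A_\ell, B_\ell) = (X \cap V(G_{t_\ell}), A \cap V(G_{t_\ell}), B \cap V(G_{t_\ell}))$ for $\ell \in \{1, 2\}$, apply induction to each child to obtain canonical $\bar\x_1, \bar\x_2$, and verify coordinate-wise that $\bar\x = \bar\x_1 \vee \bar\x_2$. The transition then gives $d[t, \bar\x] \leq d[t_1, \bar\x_1] + d[t_2, \bar\x_2] \leq c(X_1) + c(X_2) = c(X)$. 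For a renaming label node with $G_t = \rho_{i \to j}(G_{t'})$, the triple $(X, A, B)$ is still compatible with $G_{t'}$, so induction provides a canonical $\bar\x'$; a short coordinate check (using $V_j(G_t) = V_i(G_{t'}) \cup V_j(G_{t'})$) shows that applying the transition to $\bar\x'$ produces exactly $\bar\x$.

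The main obstacle is the join node. For $G_t = \eta_{i \times j}(G_{t'})$, the triple $(X, A, B)$ remains compatible with $G_{t'}$ since removing edges preserves the bipartition property; moreover $G_t$ and $G_{t'}$ share per-label vertex sets, so their canonical vectors coincide, and induction on $G_{t'}$ directly yields the desired reachability of $\bar\x = \bar\x'$ with the correct value. The delicate step is verifying the precondition $\bar\x[i] \wedge \bar\x[j] = 0$ of the join transition. The hard part will be the following structural observation: after the join, every vertex of $V_i(G_t) \setminus X$ is adjacent to every vertex of $V_j(G_t) \setminus X$, and since $(A, B)$ is a bipartition of $G_t - X$, whenever both sets are non-empty they must lie on opposite sides of $(A, B)$. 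A short case analysis on the value of $\bar\x[i]$ then yields that either $\bar\x[i] = 0$, or $\bar\x[j] = 0$, or $\{\bar\x[i], \bar\x[j]\} = \{1, 2\}$; in the latter case $1 \wedge 2 = 0$, so the precondition holds in every case, closing the induction.
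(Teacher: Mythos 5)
Your proof is correct and follows essentially the same route as the paper: induction over the expression tree with the same per-node-type case analysis, including the key observation at join nodes that the complete bipartite set of added edges forces labels $i$ and $j$ either into $X$ or onto opposite sides of $(A,B)$, so that $\x[i]\wedge\x[j]=0$. Your one refinement is to pin down a \emph{canonical} compatible vector rather than taking an arbitrary one from the induction hypothesis; this is a sensible tightening (it makes the precondition check at join nodes and the identity $\bar\x=\bar\x_1\vee\bar\x_2$ at union nodes immediate) but not a different argument.
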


\begin{proof}
	We proceed by induction on $T$.

	Let $t$ be a leaf node with $G_t=i(v)$ for some vertex $v$ and label $i\in[k]$, then three different cases arise depending on whether $v$ is placed in the OCT, or in one of the two sides of the bipartition.
	For each of these cases, a compatible state is constructed and a value satisfying $c(X)\geq d[t,\x]$ is initialized.

	Let $t$ be a disjoint union node with children $t_1,t_2$ such that ${G_t=G_{t_1} \oplus\,G_{t_2}}$, $X$ be an OCT of $G_t$, and $(A,B)$ be a bipartition of $G_t-X$.
	As ${G_t=G_{t_1} \oplus\,G_{t_2}}$, $X_1=X\cap V(G_{t_1})$ and $X_2=X\cap V(G_{t_2})$ are OCTs
	of $G_{t_1}$ and $G_{t_2}$, and the ordered pairs
	$(A_1=A\cap V(G_{t_1}),B_1=B\cap V(G_{t_1}))$ and $(A_2=A \cap V(G_{t_2}),B_2 =
	B \cap V(G_{t_2}))$ define bipartitions of $G_{t_1}-X_1$ and $G_{t_2}-X_2$, respectively.
	By induction hypothesis, there exist reachable states $(t_1,\x_1)$ and $(t_2,\x_2)$ with
	$d[t_1,\x_1] \leq c(X_1)$ and $d[t_2,\x_2] \leq c(X_2)$ that are compatible with $(X_1,A_1,B_1)$ and $(X_2,A_2,B_2)$, respectively.
	By the described transitions, the state $(t,\x)$ defined by $\x:= \x_1 \vee \x_2$ is
	computed, and it is compatible with $(X,A,B)$. 
	Since $c(X)= c(X_1)+c(X_2)$ we deduce $d[t,\x] \leq d[t_1,\x_1]+d[t_2,\x_2]\leq c(X)$ as desired.

	Let $t$ be a join node with child $t'$ such that $G_t=\eta_{i\times j}(G_{t'})$, $X$ be an
	OCT of $G_t$, and $(A,B)$ be a bipartition of $G_t-X$. 
	Since $E(G_{t'}-X) \subseteq E(G_t-X)$, $X$ is also an OCT of $G_{t'}-X$, and $(A,B)$ is also
	a bipartition of $G_{t'}-X$.  
	By induction hypothesis, there exists a reachable state $(t',\x)$ that is compatible with
	$(X,A,B)$, and which satisfies $d[t',\x] \leq c(X)$.
	Since $(A,B)$ is a bipartition of $G_t-X$, the edges that were added to $G_{t'}$ by the join operation either had an endpoint in $X$, or one in $A$ and the other in $B$. 
	Hence, either one of the labels $i$ and $j$ is empty (category 0), or $i$ and $j$ are on distinct sides of the partition. 
	We conclude to a transition from $(t',\x)$ to $(t,\x)$, and hence $d[t,\x] \leq d[t',\x]\leq
	c(X)$.

	Let $t$ be a renaming label node with child $t'$ such that $G_t=\rho_{i\rightarrow j}(G_{t'})$,
	$X$ be an OCT of $G_t$ and $(A,B)$ be a bipartition of $G_t-X$.
	Since $G_t$ and $G_{t'}$ are isomorphic, $X$ is also an OCT of $G_{t'}$, and $(A,B)$ is also a bipartition of $G_{t'}-X$. 
	By induction hypothesis, there exists a reachable state $(t',\x')$ that is compatible with
	$(X,A,B)$, and which satisfies $d[t',\x'] \leq c(X)$. 
	The transition from this state gives us a reachable state $(t,\x)$ such that $d[t,\x] \leq
	d[t',\x'] \leq c(X)$.
	The obtained state $(t,\x)$ is still compatible with $(A,B)$, as its modification exactly translates the change of label of the concerned vertices.
\end{proof}

We deduce the correctness of our algorithm.

\begin{theorem}\label{thm:DP-correctness}
	The minimum size of an OCT for $G$ is equal to the minimum, among $\x\in \{0,1,2,3\}^k$, of $d[r,\x]$.
\end{theorem}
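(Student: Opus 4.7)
The plan is to derive Theorem~\ref{thm:DP-correctness} directly from the two preceding lemmas, which together establish soundness and completeness of the dynamic programming table at every node of $T$, and in particular at the root $r$. Since $G_r = G$, the states $(r,\x)$ cover exactly OCTs of the whole graph $G$; so the only thing to check is that taking a minimum over $\x \in \{0,1,2,3\}^k$ of $d[r,\x]$ gives the optimum.

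First I would show the inequality $\min_{\x} d[r,\x] \geq \mathrm{OPT}$, where $\mathrm{OPT}$ denotes the minimum weight of an OCT of $G$. For any $\x$ such that $d[r,\x] < +\infty$ the state $(r,\x)$ is by construction reachable, so Lemma~\ref{lem:DP-exeq} applied at the root provides a triple $(X,A,B)$ compatible with $(r,\x)$ with $c(X) = d[r,\x]$. In particular $X$ is an OCT of $G_r = G$, hence $d[r,\x] = c(X) \geq \mathrm{OPT}$. Taking the minimum over $\x$ yields the claimed inequality, and also confirms that every finite value stored at the root is the weight of an actual OCT.

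Second, I would prove the reverse inequality $\min_{\x} d[r,\x] \leq \mathrm{OPT}$. Let $X^*$ be an OCT of $G$ of minimum weight $\mathrm{OPT}$; since $G - X^*$ is bipartite it admits some ordered bipartition $(A^*,B^*)$. Applying Lemma~\ref{lem:DP-geq} at the root node $r$ to $(X^*, A^*, B^*)$ produces a vector $\x^* \in \{0,1,2,3\}^k$ such that $(r,\x^*)$ is reachable and $d[r,\x^*] \leq c(X^*) = \mathrm{OPT}$. Hence $\min_{\x} d[r,\x] \leq d[r,\x^*] \leq \mathrm{OPT}$, and combining the two inequalities gives the equality stated in the theorem.

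Since both directions are immediate consequences of the previously proved invariants, there is essentially no obstacle here; the only minor point worth being careful about is noting that $G - X^*$ being bipartite (which is exactly what makes $X^*$ an OCT) is what lets us feed an actual bipartition into Lemma~\ref{lem:DP-geq}, so that the hypothesis of that lemma is genuinely available.
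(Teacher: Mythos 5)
Your proof is correct and follows essentially the same route as the paper: one direction applies Lemma~\ref{lem:DP-geq} at the root to a minimum OCT together with a bipartition of $G-X^*$, and the other applies Lemma~\ref{lem:DP-exeq} to every reachable root state (with unreachable states contributing $+\infty$). The observation that bipartiteness of $G-X^*$ supplies the required bipartition is exactly the implicit step in the paper's argument as well.
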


\begin{proof}
	Let $X^*$ be an OCT of $G$ of minimum size, and $(A,B)$ be a bipartition of $G-X^*$.
	By Lemma~\ref{lem:DP-geq}, there exists a reachable state $(r,\x)$ compatible with $(X^*,A,B)$
	such that $c(X^*)\geq d[r,\x]$.
	By Lemma~\ref{lem:DP-exeq}, for every other state $(r,\y)$, either $(r,\y)$ is not reachable and
	$d[r,\y]=+\infty$, or it is reachable and there exists an OCT $Y$ of $G$ such that $c(Y)=d[r,\y]$.
	We conclude by noting that $c(X^*)\leq c(Y)$ for every such set $Y$.
\end{proof}

We conclude the section with the time analysis.

\begin{theorem}
	There are at most $\mathcal{O}(4^k\cdot n)$ possible states in total, and computing
	all reachable states takes $\mathcal{O}(4^k\cdot k \cdot n)$ time.
\end{theorem}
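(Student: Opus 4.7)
The state bound is immediate: the tree $T$ has at most $n$ nodes (the size of the input expression), and for each node there are $4^k = |\{0,1,2,3\}^k|$ candidate vectors $\x$, so in total there are at most $4^k\cdot n$ states of the form $(t,\x)$.

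For the time bound, the plan is to process the four node types separately. Leaf nodes do a constant amount of work. At a join node with $G_t = \eta_{i\times j}(G_{t'})$, the transition only checks for each vector $\x$ whether $\x[i] \wedge \x[j] = 0$ and, if so, copies $d[t',\x]$; this costs $\Oh(4^k)$ per node. At a renaming node with $G_t = \rho_{i\rightarrow j}(G_{t'})$, for each $\x' \in \{0,1,2,3\}^k$ we compute in constant time the resulting vector $\x$ and propagate the value; again $\Oh(4^k)$ per node.

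The main obstacle is the disjoint union node, where a naive enumeration over pairs $(\x_1,\x_2)$ would incur $\Theta(16^k)$ work. The plan is to exploit the following structural observation: since $\vee$ is the supremum in our poset on $\{0,1,2,3\}$, the condition $\x_1\vee \x_2 \leq \x$ is equivalent to the conjunction $\x_1 \leq \x$ and $\x_2 \leq \x$. This decouples the optimisation into
\[
  d[t,\x] \;=\; \Bigl(\min_{\x_1 \leq \x} d[t_1,\x_1]\Bigr) + \Bigl(\min_{\x_2 \leq \x} d[t_2,\x_2]\Bigr).
\]
For each child $t_\ell$ we will compute the downward-closure minimum $f_\ell[\x] := \min_{\x'\leq \x} d[t_\ell,\x']$ by a zeta-transform-style sweep: initialise $f_\ell := d[t_\ell,\cdot]$, then for each coordinate $j \in [k]$ and each cover relation of the poset at coordinate $j$ (namely $0\prec 1$, $0\prec 2$, $1\prec 3$, $2\prec 3$), update $f_\ell[\x] := \min(f_\ell[\x], f_\ell[\x^{(j\leftarrow b)}])$ for each $\x$ whose $j$-th coordinate is the larger element of the relation, where $\x^{(j\leftarrow b)}$ denotes $\x$ with its $j$-th coordinate replaced by the smaller element. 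Processing one coordinate costs $\Oh(4^k)$, so all coordinates together cost $\Oh(k \cdot 4^k)$ per child, after which $d[t,\x] = f_1[\x] + f_2[\x]$ is assembled in a further $\Oh(4^k)$ time.

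Summing the per-node bound $\Oh(k\cdot 4^k)$ over the at most $n$ nodes of $T$ yields the claimed total running time of $\Oh(4^k\cdot k\cdot n)$.
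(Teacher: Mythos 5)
Your proposal is correct and follows essentially the same route as the paper: the same state count, the same $\Oh(4^k)$ handling of leaf, join, and renaming nodes, and for disjoint union nodes the same decoupling identity $\min_{\x_1\vee\x_2\leq\x}(d[t_1,\x_1]+d[t_2,\x_2])=\min_{\x_1\leq\x}d[t_1,\x_1]+\min_{\x_2\leq\x}d[t_2,\x_2]$ combined with an $\Oh(4^k\cdot k)$ computation of the downward-closure minima. The only cosmetic difference is that you organise that last computation as a coordinate-wise zeta-transform sweep over the cover relations, whereas the paper minimises over the $\Oh(k)$ immediate predecessors of each vector; both give the same bound.
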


\begin{proof}
	There are $4^k$ states per node of the $k$-expression, hence there are at most
	$\mathcal{O}(4^k\cdot n\cdot k^2)$ possible states in total in the algorithm.

	The computation of a leaf node takes $\Oh(1)$ time, and the computations of a join or renaming
	label node is done within $\mathcal{O}(4^k)$ time.
	Let us show that the computation of a disjoint union node $t$ takes $\mathcal{O}(4^k \cdot k)$ time. 
	We first compute
	$\delta_i(\x) = \min \{ d[t_i,\x'] : \x' \leq \x \}$ for $i \in [2]$ and every $\x \in \{0,1,2,3\}^k$.
	This is done in $\mathcal{O}(4^k \cdot k)$ time starting from $\delta_i(\{0\}^k)=d[t_i,\{0\}^k]$, and computing the other values from bottom to top, observing that, by monotonicity of $\delta_i$, $\delta_i(\x)=\min \{d[t_i,\x], \delta_i(\x') \colon \x'\in \pred(\x)\}$, where $\pred(\x)$ denote the immediate predecessors of $\x$ in $\{0,1,2,3\}^k$ with respect to $\leq$.
	Now, note that by definition of the supremum, we have the following identity:
	$$\min_{\x_1 \vee \x_2 \leq x} d[t_1,\x_1] + d[t_2,\x_2] = \min_{\x_1 \leq \x}
	d[t_1,\x_1] + \min_{\x_2 \leq \x} d[t_2,\x_2].$$	
	We deduce that $d[t,\x]=\delta_1(\x) + \delta_2(\x)$.

	Summing up, our algorithm runs in time $\mathcal{O}(4^k\cdot k \cdot n)$.
\end{proof}

\subsection{The lower bound}

Our construction follows the line of the ones for \textsc{Independent Set} and \textsc{Odd Cycle Transversal} in \cite{lokshtanov2011known}.
Let $\varphi$ be an instance of SAT on $n$ variable and $m$ clauses.
Note that we may assume that $n$ is even, as otherwise we can add a single extra dummy variable to the instance with no impact on the satisfiability of $\varphi$.
We denote by $\rho$ the sum of the sizes of the clauses in $\varphi$, i.e., the total number of occurrences of literals in $\varphi$.

Let us first describe a gadget introduced in \cite[Section 7]{lokshtanov2011known}.
For two vertices $u,v$ we call \emph{arrow from $u$ to $v$} and denote by $A(u,v)$ the graph consisting of a path $ua_1a_2a_3v$ on five vertices, together with the additional four vertices $b_1, b_2, b_3,b_4$ and eight edges $ub_1, b_1a_1, a_1b_2, b_2a_2, a_2b_3, b_3a_3, a_3b_4, b_4v$. 
This graph is illustrated in Figure~\ref{fig:arrow}.
Note that such a graph has a unique smallest OCT of size $2$, which is $\{a_1,a_3\}$:
we call \emph{passive OCT} of the arrow such an OCT.
In $A(u,v)\setminus \{u\}$, $\{a_2, v\}$ is a smallest OCT:
we call it \emph{active OCT} of the arrow.

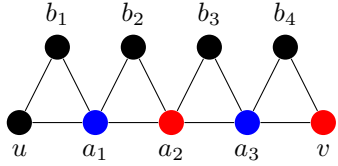
\begin{figure}[H]
	\centering
	\begin{tikzpicture}
		\foreach \i in {1,...,4} { \node[circle,fill,label=above:$b_\i$] (b\i) at (\i-0.5,1) {};}
		\foreach \i in {1,3} { \node[circle,fill,blue,label=below:$a_\i$] (a\i) at (\i,0) {};}
		\node[circle,fill,label=below:$u$] (a0) at (0,0) {};
		\node[circle,fill,red,label=below:$a_2$] (a2) at (2,0) {};
		\node[circle,fill,label=below:$v$,red] (a4) at (4,0) {};
		\foreach \i in {0,...,3} {
			\draw (a\i) -- (b\number\numexpr\i+1\relax); 
			\draw (a\i) -- (a\number\numexpr\i+1\relax);
			\draw (a\number\numexpr\i+1\relax) -- (b\number\numexpr\i+1\relax);
		}
	\end{tikzpicture}
	\caption{The arrow $A(u,v)$. Vertices in blue form the passive OCT of $A(u,v)$, and those in red form the active OCT of $A(u,v) \setminus \{u\}$.}
	\label{fig:arrow}
\end{figure}

For a clause $C=\{\ell_1,\dots,\ell_t\}$ with $t$ an odd integer greater than $2$, let us denote by $\widehat{C}$ the graph consisting of a simple cycle on vertex set $\{c_1,\dots,c_t\}$ with $c_i$ representing literal $\ell_i$, and with edges linking $c_i$'s in the natural way: $c_ic_{i+1}$ for every $i\in [t-1]$, and $c_tc_{1}$.
For a clause $C$ of even size $t\geq 2$, we proceed similarly but subdivide an arbitrary edge so that the obtained graph $\widehat{C}$ is an odd cycle.
For a clause $C$ of size one, we simply create a triangle $\widehat{C}$, one of its vertices representing the unique literal of the clause.

We construct a graph $G_1$ as follows.
We create $n$ paths $P_1,\dots,P_n$ each of length $2m$, and denote by $p_{i,j}$ the $j^{\text{th}}$ vertex on path $P_i$.
Each $P_i$ will be referred to as the path of variable $x_i$ in the following.
We add ``crossing'' edges as follows: for every odd $i\in [n-1]$ and even $j\in [2m-1]$, we add edges $p_{i,j}p_{i+1,j+1}$ and $p_{i+1,j}p_{i,j+1}$;
this is illustrated in Figure~\ref{fig:cross}.
Crossing edges will serve no other purpose than allowing to construct these paths with $n/2$ labels.
At this stage, paths of $G_1$ may be seen as rows representing variables of $\varphi$, and vertices
$\bigcup_{i\in [n]} \{p_{i,2j}, p_{i,2j-1}\}$ for $j\in [m]$ may be seen as columns grouped two by
two representing the two possible values variables $x_i$ may take in clause $C_j$.
For every clause $C_j=\{\ell_1,\dots,\ell_{|C_j|}\}$ in $\varphi$, we add cycle $\widehat{C}_j$ to $G_1$, and make the following connections.
If variable $x_i$ appears positively in $C_j$ on position $p$, i.e., $\ell_p=x_i$, then we add arrow $A(p_{i,2j-1}, c_p)$ to $G_1$.
If $x_i$ appears negatively in $C_j$ on position $p$, then we add arrow $A(p_{i,2j}, c_p)$ to $G_1$.
This concludes the construction of the graph~$G_1$.

We now consider $n+1$ copies of $G_1$ that we name $G_1,\dots,G_{n+1}$.
Let us denote by $P^k_1,\dots,P^k_n$ the $n$ paths of $G_k$, $p^k_{i,j}$ the $j^\text{th}$ vertex of $P^k_i$, and $C^k$ the cycle representing clause $C$ in $G_k$, for every $k\in [n+1]$.
We link consecutive graphs $G_k,G_{k+1},k\in [n]$ by adding edges $p^k_{i,2m}p^{k+1}_{i,1}$ for
every $i\in [n]$, and denote by $P^*_i$ the resulting path of length $2m\cdot (n+1)$ obtained by concatenating $P^1_i,\dots,P^{n+1}_i$ in such a way.
In addition, we add crossing edges $p^k_{i,2m}p^{k+1}_{i+1,1}$ and $p^k_{i+1,2m}p^{k+1}_{i,1}$ for every $k\in [n]$ and odd $i\in [n]$.
We complete our construction by creating a biclique of bipartition $\{A,B\}$ with $A$ and $B$ of size $\alpha+1=(n+1)(nm+2\rho)+1$, by connecting every vertex of $A$ to every vertex of $P^*_i$ for odd $i\in [n]$, and by connecting every vertex of $B$ to every vertex of $P^*_i$ for even $i\in [n]$.
We call $G$ the obtained graph.

\begin{figure}[H]
	\centering
	\begin{tikzpicture}
	\foreach \i in {0,...,5}
	{
		\foreach \j in {0,...,4}
		{
			\node[circle,fill] (\j\i) at (\j+3,\i) {};
		}
		\node (e\i) at (7.5,\i) {};
		\foreach \j in {0,...,3}
		{
			\draw (\j\i) -- (\number\numexpr \j +1\relax\i);
		}
		\draw (3\i) -- (e\i);
		\node (dots\i) at (8,\i) {$\dots$};
	}
	\foreach \i in {0,2,4}
	{
		\foreach \j in {1,3}
		{
			\draw (\j\i) -- (\number\numexpr \j + 1 \relax\number\numexpr \i + 1\relax);
			\draw (\j\number\numexpr\i+1\relax) -- (\number\numexpr \j+1\relax\i);
		}
	}
	\foreach \i in {0,...,5}
	{
		\foreach \j in {6,7}
		{
			\node[circle,fill] (\j\i) at (\j+3,\i) {};
		}
		\node (b\i) at (8.5,\i) {};
		\draw (b\i) -- (6\i) -- (7\i);	
	}
	\foreach \i in {0,2,4}
	{
		\node (c\i) at (8.5,\i+0.4) {};
		\draw (c\i) -- (6\i);
	}
	\foreach \i in {1,3,5}
	{
		\node (c\i) at (8.5,\i-0.4) {};
		\draw (c\i) -- (6\i);
	}
	\end{tikzpicture}
	\caption{The crossing edges on paths $P_1,\dots,P_n$ in $G_1$.}
	\label{fig:cross}
\end{figure}
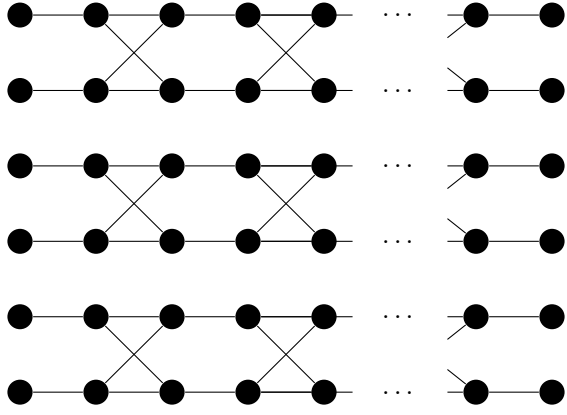

\begin{lemma}
	If $\varphi$ is satisfiable, then there exists an OCT of size $\alpha=(n+1)(nm + 2\rho)$ in $G$.
\end{lemma}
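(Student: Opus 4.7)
Given a satisfying assignment $\sigma \colon \{x_1,\dots,x_n\} \to \{0,1\}$ of $\varphi$, my plan is to build an explicit OCT $X$ of $G$ of size exactly $\alpha$. For each variable $x_i$ and each copy $k \in [n+1]$, I will put into $X$ the $m$ odd-indexed vertices of $P^k_i$ if $\sigma(x_i)=1$ and the $m$ even-indexed vertices otherwise; this amounts to $(n+1)nm$ deletions. For each arrow $A(u,c_p)$ encoding the occurrence of a literal at position $p$ of clause $C_j$, I observe that by the way the tail $u$ is defined in the construction (namely $u=p^k_{i,2j-1}$ for a positive literal $x_i$ and $u=p^k_{i,2j}$ for a negative literal $\neg x_i$), the vertex $u$ lies in $X$ if and only if the corresponding literal is satisfied by $\sigma$. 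In the first case I add the active OCT $\{a_2,c_p\}$ of the arrow to $X$, and in the second case I add the passive OCT $\{a_1,a_3\}$. Each arrow thus contributes two vertices to $X$, for a total of $2\rho(n+1)$ arrow deletions, yielding $|X|=(n+1)(nm+2\rho)=\alpha$.

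It remains to verify that $G-X$ is bipartite. I will exhibit the two-colouring dictated by the biclique: put $A$ on one colour class, $B$ on the other, and place every surviving $p^k_{i,j}$ on the class opposite to $A$ when $i$ is odd and opposite to $B$ when $i$ is even. Properness for edges incident to path vertices follows from a short case analysis: edges inside a single $P^k_i$ always link consecutive vertices whose indices have opposite parity, so one endpoint has been deleted by the rule for $x_i$; the bridging edges $p^k_{i,2m}p^{k+1}_{i,1}$ are cut for the same reason; intra-copy and inter-copy crossing edges always connect paths of different $i$-parity, so the colour condition is automatic whenever both endpoints survive. The remaining arrow fragments become pendant trees: a passive arrow splits into $\{u,b_1\}$, $\{b_2,a_2,b_3\}$, and $\{v,b_4\}$, while an active arrow (with $u$ deleted from outside) splits into $\{b_1,a_1,b_2\}$ and $\{b_3,a_3,b_4\}$. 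In either case the colouring extends uniquely starting from the colours of $u$ and $v$. Finally, the satisfaction hypothesis is what breaks every clause cycle: each clause $C_j$ has at least one literal satisfied by $\sigma$, whose arrow uses the active OCT and hence deletes the corresponding $c_p$, turning the odd cycle $\widehat{C}^k_j$ into a disjoint union of paths that can be $2$-coloured compatibly with the pendant $b_4$-neighbours coming from the unsatisfied literals.

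I expect the arithmetic and the arrow-local argument to be straightforward; the main obstacle is the case analysis for the crossing edges, where survival of each endpoint depends on a distinct variable. The insight that streamlines this step is that the construction only places crossings between paths whose row indices differ by one, hence of different parity, so the bichromatic constraint enforced by the biclique is automatically compatible with whatever survival pattern $\sigma$ produces, and no further checking is needed once the six types of edges (path, two intra-copy crossings, bridge, two inter-copy crossings) have been enumerated.
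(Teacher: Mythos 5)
Your construction of the transversal is exactly the one in the paper (delete the odd- or even-indexed path vertices according to the truth value of $x_i$, then use the active OCT of an arrow precisely when its tail is already deleted, which happens iff the literal is satisfied), and your count and the clause-cycle argument coincide with the paper's. The only difference is that you verify bipartiteness of $G-X$ by exhibiting an explicit $2$-colouring anchored at the biclique, which is a more detailed but equivalent version of the paper's remark that the surviving crossing and biclique edges induce no odd cycles; the proposal is correct.
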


\begin{proof}
	Let $\tau$ be a satisfying truth assignment of $\varphi$. 
	We construct an OCT $T$ of $G$ as follows.
	If $\tau(x_i)$ is true, then we add $p^k_{i,j}$ to $T$ for every $k\in [n+1]$ and odd $j\in [2m]$.
	Otherwise we add $p^k_{i,j}$ to $T$ for every $k\in [n+1]$ and even $j\in [2m]$.
	For each of the $\rho\cdot(n+1)$ arrows $A(u,v)$ in $G$, we add its active OCT if $u$ is part of $T$, and its passive OCT otherwise, each of size two.
	The obtained set $T$ is of size $\alpha$.

	Since each clause $C$ is satisfied by $\tau$, at least one of its literals is satisfied.
	Thus for each cycle in $\widehat{C}^k, k\in [n+1]$ there exists an arrow $A(u,v)$ having an endpoint $u$ being part of $T$.
	Hence that arrow is active and we conclude that $v$ as well is part of $T$.
	We conclude that every (odd) cycle $\widehat{C}^1,\dots,\widehat{C}^{n+1}$ corresponding to $C$ is hit by $T$.
	Since $T$ is a vertex cover of the paths $P^*_i, i\in [n]$ every remaining edge in $G-T$ that is
	not part of a clause cycle nor an arrow is either a crossing edge, or an edge incident to a
	vertex of $A$ or $B$.
	Since these edges do not induce odd cycles, we conclude that $T$ is an OCT of $G$, as desired.
\end{proof}

\begin{lemma}
	If there exists an OCT of size $\alpha=(n+1)(nm + 2\rho)$ in $G$, then $\varphi$ is satisfiable.
\end{lemma}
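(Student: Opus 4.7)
The plan is to show that any OCT $T\subseteq V(G)$ of size exactly $\alpha$ is so tightly constrained by the construction that a satisfying assignment of $\varphi$ can be read off it. I proceed in three phases: a tight budget analysis that pins down the local structure of $T$, a bipartition argument based on the big biclique, and a clause-by-clause argument turning this structure into satisfaction.

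First I would prove $|T|\ge\alpha$ via two disjoint lower bounds, and identify what equality forces. Every edge of the path $P^*_i$ (on $2m(n+1)$ vertices) must be covered by $T$, so $|T\cap V(P^*_i)|\ge m(n+1)$. For each arrow $A(u,v)$, the four internal triangles together with a short case analysis---using that $\{a_1,a_3\}$ is the unique minimum OCT of $A(u,v)$ and $\{a_2,v\}$ is the only $2$-element OCT of $A(u,v)\setminus\{u\}$ containing $v$---give $|T\cap\{a_1,a_2,a_3,b_1,b_2,b_3,b_4,v\}|\ge 2$, and moreover $v\in T$ attaining this bound forces the choice $\{a_2,v\}$, and hence $u\in T$. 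Summing yields $|T|\ge nm(n+1)+2(n+1)\rho=\alpha$, and the equality $|T|=\alpha$ then forces (a) each $T\cap V(P^*_i)$ to be a minimum vertex cover of $P^*_i$, (b) each arrow to contribute exactly two OCT vertices inside its internal vertices together with $v$, and (c) $T\cap(A\cup B)=\varnothing$ and $T$ to contain no clause-cycle vertex other than arrow endpoints $v$.

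Second I would use $|A|=|B|=\alpha+1>|T|$ to extract surviving vertices in both $A$ and $B$; the biclique between them forces any bipartition of $G-T$ to put $A$ and $B$ on opposite sides. Since every vertex of $P^*_i$ with odd $i$ is adjacent to every vertex of $A$, surviving vertices of such $P^*_i$ lie on the side opposite $A$, and dually for even $i$ and $B$. The surviving set on $P^*_i$ therefore consists of $m(n+1)$ vertices forming an independent set in the path $P^*_i$, monochromatic with respect to the bipartition of $G-T$. The crucial claim is that this forces $T\cap V(P^*_i)$ to consist of exactly the odd-indexed or exactly the even-indexed positions of $P^*_i$; I expect the argument to proceed by contradiction, showing that a ``switch'' in the parity pattern of a minimum vertex cover of $P^*_i$, combined with the crossings between consecutive rows, the attached arrows, and the biclique edges, would yield an odd closed walk in $G-T$ and violate bipartiteness. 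Given this rigidity I define $\tau(x_i)=\mathrm{true}$ if $T\cap V(P^*_i)$ is the set of odd-indexed positions, and $\tau(x_i)=\mathrm{false}$ otherwise.

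Third, for each clause $C_j$ and each copy $k\in[n+1]$ the odd cycle $\widehat{C}^k_j$ must meet $T$; by tightness conclusion (c) the intersection must be an arrow endpoint $v=c_p$, which by the case analysis of step one forces the corresponding $u$ into $T$. If $\ell_p$ is the positive literal $x_i$ then $u=p^k_{i,2j-1}$ sits at an odd position of $P^*_i$, so $\tau(x_i)=\mathrm{true}$ and $\ell_p$ is satisfied; if $\ell_p=\neg x_i$ then $u=p^k_{i,2j}$ is at an even position, $\tau(x_i)=\mathrm{false}$, and again $\ell_p$ is satisfied. Hence $\tau$ satisfies every clause of $\varphi$. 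The main obstacle is the parity-rigidity claim in step two: minimum vertex covers of long paths do admit mixed configurations, so the rigidity cannot come from path combinatorics alone and must be extracted from the interplay between the biclique attachments, the row crossings, the arrow hooks, and the inter-copy connections; once that rigidity is in hand, steps one and three are essentially bookkeeping.
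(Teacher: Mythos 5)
Your phases 1 and 3 match the paper: the two disjoint budgets (a vertex cover of each $P^*_i$, forced because an uncovered path edge plus a surviving vertex of $A$ or $B$ gives a triangle and $|A|=|B|>\alpha$; and two vertices per arrow outside $u$) pin $|T|$ to exactly $\alpha$, and the ``at most two vertices of $T$ in $A(u,v)\setminus\{u\}$'' accounting does force $\{u,a_2,v\}\subseteq T$ whenever $v\in T$, exactly as in the paper. The problem is phase 2. The global parity rigidity you are hoping for --- that $T\cap V(P^*_i)$ is exactly the odd positions or exactly the even positions of the \emph{entire} path $P^*_i$ --- is false, and no amount of interplay with the biclique, crossings, or arrows will rescue it. A minimum vertex cover of a path on $2N$ vertices may contain one consecutive pair (one ``parity switch''), and such a switched configuration extends to a genuine OCT of size $\alpha$: the surviving vertices of an odd row form an independent set of $P^*_i$ and can all be placed on the side opposite $A$ regardless of their position parity, the crossings only go to even rows (the other side), and the surviving parts of passive arrows are trees hanging off the path, so no odd cycle is created. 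This is precisely why the construction takes $n+1$ copies of $G_1$ rather than one.

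The missing idea is a pigeonhole over those copies. Since each $T\cap V(P^*_i)$ is a vertex cover of size exactly $m(n+1)$ of a path on $2m(n+1)$ vertices, it contains at most one consecutive pair; summed over the $n$ rows this gives at most $n$ parity switches in total, so at least one copy $G_j$ among the $n+1$ has no switch in any of its rows. Inside that copy each row is an exact alternation (a vertex cover with no two consecutive vertices of a $2m$-vertex subpath is either all odd or all even positions), and the assignment $\tau$ is read off from $G_j$ alone. Phase 3 is then applied only to the clause cycles $\widehat{C}^j$ of that copy, which is all that is needed. With that replacement for your phase 2, the rest of your argument goes through and coincides with the paper's proof.
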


\begin{proof}
	First note that an OCT of $G$ must either contain a vertex cover of each $P^*_i, i\in[n]$ or contain all of $A$ or $B$.
	Since $|A|=|B|>\alpha$, we can assume the existence of an OCT $T$ of size $\alpha$ satisfying the first condition.
	Hence $T$ contains at least half of the vertices of paths $P^*_i, i\in[n]$, that is, at least $nm\cdot (n+1)$ vertices in total.
	Since we need at least two additional vertices to intersect the odd cycles of every arrow (even if an endpoint of the arrow is part of $T$), and that there are $\rho$ of them, we conclude that $T$ contains precisely $nm\cdot (n+1)$ vertices from $P^*_1,\dots,P^*_n$, and $2\rho\cdot (n+1)$ vertices from the arrows.
	Now, observe that a vertex cover of size $k$ in a path of size $2k$ may contain at most once two consecutive vertices.
	Since $P^*_i$'s are made from $n+1$ repetitions of $n$ paths, there exists $G_j, j\in [n+1]$ for which none of the paths $P^j_i, i\in [n]$ contains consecutive vertices from $T$.
	We construct a truth assignment $\tau$ for $\varphi$ out of these paths as follows.
	For a variable $x_i, i\in [n]$ we set $\tau(x_i)$ to true if the vertices of $T$ coincide with odd vertices of $P^j_i$, and $\tau(x_i)$ to false otherwise.

	Since every clause cycle in $G_j$ has to be hit by $T$, and $T$ already contains $\alpha$ vertices from $P^*_1,\dots,P^*_n$ and the arrows, we conclude that each clause cycle in $G_j$ contains at least one incoming arrow $A(u,v)$ with $v\in T$.
	Hence $T$ contains at least three vertices from that arrow; see Figure~\ref{fig:arrow}. 
	Since $T$ intersects $A(u,v)-\{u\}$ on at most two vertices, the only way to hit every cycle of $A(u,v)$ that way is for $T$ to contain $\{u,a_2,v\}$: the arrow is active and $T$ contains $u$.
	By construction, the variable corresponding to vertex $u$ satisfies the clause in~$\varphi$.
	We conclude to a satisfying truth assignment of $\varphi$ as desired. 
\end{proof}

\begin{lemma}
	The constructed graph $G$ satisfies $\cw(G) \leq \lcw(G) \leq n/2 + 10$.
\end{lemma}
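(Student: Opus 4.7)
The plan is to exhibit an explicit linear $k$-expression constructing $G$ with $k \le n/2 + 10$ labels. The graph $G$ decomposes naturally into $n/2$ ``ladder-with-crossings'' subgraphs, one per pair of rows $(P^*_{2p-1}, P^*_{2p})$, together with the biclique $(A,B)$ and the clause-cycle gadgets attached by arrows; crucially, the only edges between different row-pairs arise through the biclique. I would dedicate one label $L_p$ per pair $p \in [n/2]$ to hold that pair's currently active tip vertices, together with a constant number of auxiliary labels: two labels that will host the biclique sides $A$ and $B$, two history labels accumulating finalized odd-row and even-row vertices (used for the final biclique joins), two labels tracking the first and the current last vertex of the clause cycle under construction, and a handful of working labels used to introduce new vertices and to build the seven inner vertices of each arrow gadget.

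The construction proceeds column by column, from left to right across all $n+1$ copies, with all pairs handled in parallel at each column. Crossing transitions (from an even column to the next, odd one) are easy: for each pair $p$ in turn, introduce the two new tip vertices into a fresh working label, join with $L_p$ to install the required $K_{2,2}$, dispatch the old contents of $L_p$ into the appropriate history labels, and rename the working label back to $L_p$. Non-crossing transitions (odd to even column) are handled analogously, pair by pair, using the working labels to add only the two desired path edges per pair and none of the would-be crossings. Whenever the current column is $2j-1$ or $2j$ of some copy $G_k$, I would pause the path construction to handle the arrows and clause cycle of $C_j^k$: each arrow's seven inner vertices and its far endpoint $c_p$ are built with the working labels, and $c_p$ is spliced into the clause cycle through the two dedicated cycle-tracking labels; as soon as $C_j^k$'s last literal is processed, the cycle is closed by a final join and flushed to history. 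Since clauses attach at pairwise-disjoint column pairs, at most one clause is ever open, so two cycle-tracking labels suffice for all of them. Finally, I would introduce $A$ and $B$, add the biclique-to-path edges by joining $A$ (resp.\ $B$) with the odd-row (resp.\ even-row) history label, and perform a last $A$--$B$ join to obtain the biclique.

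The expected main obstacle is the handling of non-crossing transitions: only two path edges per pair must be added, yet each pair's two tips sit together in its single label $L_p$ and clique-width does not allow ``splitting'' a label. Resolving this with $O(1)$ additional labels requires a delicate sequential pair-by-pair procedure that momentarily moves the two tips of the active pair into separate working labels before reassembling $L_p$; individually the steps are routine, but verifying that the label budget stays at $n/2 + 10$ throughout, and that the final history labels partition the path vertices by row parity, requires careful bookkeeping.
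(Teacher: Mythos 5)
Your high-level architecture (one label per row pair, a constant number of auxiliary labels, a left-to-right column sweep) matches the paper's, but there is a genuine gap at exactly the point you flag as the ``expected main obstacle,'' and the fix you gesture at does not exist. You propose to keep the invariant that both tips of pair $p$ sit together in the single label $L_p$, and to handle a non-crossing (odd-to-even column) transition by ``momentarily moving the two tips into separate working labels.'' No clique-width operation can do this: $\rho_{i\rightarrow j}$ moves \emph{all} of label $i$, and there is no operation that extracts a proper subset of a label. Once two vertices share a label they share it forever, so from the state ``both tips in $L_p$'' it is impossible to add the two path edges without also adding the two crossing edges. The resolution is not a cleverer splitting procedure but a different invariant on \emph{when} the tips are merged: process two columns at a time. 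Introduce the odd-column vertices $p_{i,j},p_{i+1,j}$ in two separate working labels and join each of them to the merged pair label -- this produces exactly the crossing $K_{2,2}$, which is wanted at an even-to-odd transition; keep these two vertices in their separate working labels, introduce $p_{i,j+1},p_{i+1,j+1}$ in two further separate working labels, and join them individually ($y_1$--$z_1$, $y_2$--$z_2$) so that only the path edges appear; only \emph{then} rename the two new tips into the pair label, which is safe because the next transition is again a crossing one. This rescheduling is the actual content of the lemma's proof, and your write-up does not supply it.

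Two secondary points on the label budget. First, the arrow-internal vertices and finished clause-cycle vertices must land in a dedicated ``dump'' label that is never joined again; they cannot go into your history labels, since those get joined to $A$ or $B$ at the end, which would create spurious edges. Second, your two history labels are avoidable (and likely push you over $n/2+10$ once the dump label and the four simultaneously-needed tip labels are counted): build the biclique \emph{first} and join each path vertex to $A$ or $B$ at the moment it is created, while it still sits alone in a working label, then discard it to the dump label. With that ordering the count $n/2$ (pairs) $+\,2$ (biclique) $+\,2$ (clause cycle) $+\,5$ (working) $+\,1$ (dump) $= n/2+10$ goes through.
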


\begin{proof}
	We describe a construction of $G$ by a linear $K$-expression with $K=n/2 + 10$ labels. This
	proves $\lcw(G) \leq K$, and $\cw \leq \lcw$ follows from definition.

	During the construction, $n/2$ labels will be used to maintain the current path extremities, grouped two by two.
	These labels are indexed by $[n/2]$.
	Two labels will be dedicated to build clause cycles, denoted $a$ and $b$. 
	No more than five labels will be used to build arrows, and to prolongate current paths extremities.
	These labels are referred to as \emph{working labels} in the following.
	At last, two labels will be used for the biclique induced by $A$ and $B$, and one \emph{final} label indexed by $0$ will contain the rest of the graph: no further modifications will be made to vertices once they are put in that~label.
	We first construct the biclique with its two dedicated labels.
	Then, we will construct the rest of $G$ from $G_1$ to $G_{n+1}$, column after column (recall that columns consist of groups of two vertices from each $P^*_i, i\in [n]$), with each column constructed from top to bottom two rows at a time.
	At each step, four vertices are added to the $P^*_i$'s, and the gadgets are constructed along the way.

	Let us focus on the graph $G_1$ and assume that we are currently constructing the column
	described by indices $j,j+1$ for an odd $j\in [2m]$, and are willing to construct rows $i,i+1$ of that column for an odd $i\in [n]$.
	First, we create vertices $p_{i,j}$ and $p_{i+1,j}$ using one distinct working label for each, call them $y_1$ and $y_2$, respectively.
	If one of $p_{i,j}$ and $p_{i+1,j}$, name it $u$, participates to an arrow $A(u,c_k)$ linking $u$ to a clause cycle $\widehat{C}$ on vertex $c_k, k\in [|C|]$, we construct that arrow using other working labels.
	Clearly, three additional labels suffice to proceed with the construction of the arrow (not
	counting label 0).
	The vertices of the clause cycles will in fact be constructed by adding arrows one after the other.
	When creating arrow $A(u,c_k)$, $c_k$ is joined with label $a$, the vertex labeled $a$ is
	relabeled 0, and $c_k$ is relabeled $a$. We point out that the order of the vertices in the
	cycle does not matter.
	Note that to this extent, we only need three labels for $\widehat{C}$: two labels $a$ and $b$
	for the extremities of paths that constitute the cycle for now, and label $0$ for internal vertices
	in these path (they will not be joined anymore).
	We point out that during the construction of the column described by $j,j+1$, only the clause
	cycle $\widehat{C}$ is constructed.
	When finishing to build the column, the possible extra vertices that were added to $\widehat{C}$
	so that it is of odd size are added by joins between $a$ and a working label, emptying label
	$a$, and labeling them $a$.
	Once all vertices of $\widehat{C}$ have been added, the cycle is closed by joining labels $a$
	and $b$, which are then relabeled to $0$ and hence, labels $a$ and $b$ are free to be used for
	the next column.

	When constructing an arrow, we move every internal vertex in $A(u,v)\setminus \{u,v\}$ that has
	edges to all of its neighbours into label $0$.
	Then, we join $p_{i,j}$ and $p_{i+1,j}$ to the extremities of paths $P^*_i$,$P^*_{i+1}$ if they
	exist by joining label $\left\lceil i/2\right\rceil$ (containing the extremities of these paths only) to $y_1$ and $y_2$.
	This yields the crossing edges described in Figure~\ref{fig:cross}.
	We then rename label $\left\lceil i/2 \right\rceil$ to 0.
	Lastly, we join $p_{i,j}$ and $p_{i+1,j}$ accordingly to $A$ and $B$.

	The creation of vertices $p_{i,j+1}$ and $p_{i+1,j+1}$ follows the same principle regarding the gadgets.
	Only we join them to $p_{i,j}$ and $p_{i+1,j}$ using working labels $z_1$ and $z_2$ for
	$p_{i,j+1}$ and $p_{i+1,j+1}$, respectively, and joining $y_1$ to $z_1$ and $y_2$ to $z_2$. This
	is followed by renaming $y_1$ and $y_2$ to $0$.
	The vertices labeled $z_1$ and $z_2$ can each be renamed to $\left\lceil i/2\right\rceil$ as
	soon as joins to $y_1$ and $y_2$ respectively, and to neighbours in the possible arrow are done.

	Repeating this process to the next row, column, until $G_{n+1}$ is constructed yields our graph $G$ with the desired number of labels. 
\end{proof}

Observe that here we deduced a lower bound for the parameter linear clique-width from a lower bound
for the parameter pathwidth. Both parameters are respectively a linearized version of clique-width
and a linearized version of treewidth, the parameters of the algorithms for which we prove tight
complexity.

\appendix

\section{Appendix}

\begin{lemma}\label{lem:join-unique-correctness}
	Given a $k$-expression of a graph $G$ with associated tree $T$, if an edge of $j$ appears in two join operations, then in $T$ the nodes associated to the operations satisfy that one is
	an ancestor of the other. If we denote them $a,b \in V(T)$ with $a$ the ancestor, then $b$ can be
	replaced by its child.
\end{lemma}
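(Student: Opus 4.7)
The plan is to establish the ancestor--descendant claim by a direct subtree argument, and then to show the removal of $b$'s join is safe via an invariant that label classes evolve as indivisible blocks under rename operations.

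For the first part, let $e=uv$ be an edge ``appearing in'' both join operations $\eta_{i\times j}$ at $a$ and $b$. For the join at $t\in\{a,b\}$ to be relevant to $e$, both $u$ and $v$ must lie in $V(G_t)$. The vertex set $V(G_t)$ of any node $t$ is exactly the set of vertices created at leaves of the subtree $T_t$ rooted at $t$, so the creation leaves of $u$ and $v$ both lie in $T_a\cap T_b$. Since any two rooted subtrees of $T$ are either disjoint or nested, $T_a\cap T_b\neq\varnothing$ forces one of $a,b$ to be an ancestor of the other.

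Now assume that $a$ is the ancestor. The key observation is that each rename $\rho_{i'\to k}$ acts uniformly on all vertices currently labeled $i'$, so vertices that share a label can never be split apart by subsequent label operations. Consequently, both $S_i:=V_i(G_b)$ and $S_j:=V_j(G_b)$ move as indivisible blocks along the path from $b$ to $a$: at every node of this path, all of $S_i$ shares one common label, and likewise for $S_j$. Since $u\in S_i$ and $v\in S_j$ must belong to $V_i(G_a)$ and $V_j(G_a)$ respectively for $e$ to be a candidate of $a$'s join as well, the common label of $S_i$ at $a$ must be $i$ and that of $S_j$ must be $j$. Hence $S_i\subseteq V_i(G_a)$ and $S_j\subseteq V_j(G_a)$.

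To finish, every edge produced by the join at $b$ lies in $S_i\times S_j\subseteq V_i(G_a)\times V_j(G_a)$, so $a$'s join covers the same pairs. Removing $b$'s join may only delete some edges at intermediate nodes, but these edges are reinstated by $a$'s join; moreover, intermediate operations are monotone in the edges present (joins only add edges, never consult prior absences, and renames ignore edges entirely), so no other downstream edge is affected. The final graph at the root is unchanged, so $b$ may indeed be replaced by its child. The main technical point to formalize cleanly is the block-movement invariant for $S_i$ and $S_j$, which requires a short induction over the operations on the $b$-to-$a$ path; the rest is structural bookkeeping.
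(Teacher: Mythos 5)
Your proposal is correct and follows essentially the same route as the paper: both arguments derive the ancestor relation from the fact that the nodes whose partial graphs contain a given vertex form a chain, and both justify the deletion of $b$ by observing that label classes are never split, so $V_{i'}(G_b)\subseteq V_i(G_a)$ and $V_{j'}(G_b)\subseteq V_j(G_a)$, whence the edges introduced at $b$ are a subset of those introduced at $a$. Your added remark that the intermediate operations are insensitive to the temporarily missing edges is a small but welcome elaboration of a step the paper leaves implicit.
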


\begin{proof}
	Suppose edge $uv$ appears in two join operations associated to nodes $t,t' \in V(T)$.
	
	Because vertices are introduced only once in a $k$-expression, the vertices $s\in V(T)$ satisfying $u\in V(G_s)$ are exactly the ancestors of the node $s'\in V(T)$ creating $u$. 
	Because $t$ and $t'$
	must contain $u$, this means that one of $t$ and $t'$ is an ancestor of the other. We denote $a$
	the ancestor and $b$ the descendant.
	Denote by $i$ the label of $u$ in $a$ and $i'$ its label in $b$. Denote by $j$ the label of $v$
	in $a$ and $j'$ its label in $b$.

	Note that because there is no operation that separates vertices that have the same label,
	$V_{i'}(G_b) \subseteq V_i(G_a)$ and $V_{j'}(G_b) \subseteq V_j(G_a)$. 
	As $uv$ appears in the join induced by nodes $a$ and $b$, it must be that the performed joins are $\eta_{i \times j}$ and $\eta_{i' \times j'}$, respectively.
	These joins introduce edges $A=\{ xy : x \in V_i(G_a),\ y \in V_j(G_a)\}$ and $B=\{ xy : x \in
	V_{i'}(G_b),\ y \in V_{j'}(G_b)\}$.
	From the previous inclusions, we deduce $B \subseteq A$. Hence, if $b$ is replaced by its child
	in $T$, which removes $b$ from the tree of the expression, the constructed graph is the
	unchanged.
\end{proof}

\begin{lemma}\label{lem:join-unique-algo}
	Given a $k$-expression of a graph $G$, and its associated tree $T$, there is an algorithm running in time $\Oh(k^2 \cdot |V(T)|)$ that produces a $k$-expression of $G$ with associated tree $T'$ in which every edge of $G$ appears once in a join node, and such that each join node introduces an edge (i.e., it does not involve an empty label).
\end{lemma}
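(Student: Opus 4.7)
The plan is to apply Lemma~\ref{lem:join-unique-correctness} greedily in a top-down fashion: traverse $T$ from the root and mark every join node whose operation is already covered by some ancestor join (or involves an empty label), then physically splice the marked nodes out at the end. To fit the running time budget I maintain two pieces of data at each node $t$, computed in two passes.

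A first bottom-up pass in $O(k\cdot|V(T)|)$ time computes indicators $N_t[i]$ for whether $V_i(G_t)$ is nonempty. A second top-down pass maintains a symmetric matrix $P_t\in\{0,1\}^{k\times k}$ satisfying the invariant that $P_t[i][j]=1$ iff $V_i(G_t)$ or $V_j(G_t)$ is empty, or there is an ancestor join $a=\eta_{i'\times j'}(\cdot)$ of $t$ in $T$ with $V_i(G_t)\subseteq V_{i'}(G_a)$ and $V_j(G_t)\subseteq V_{j'}(G_a)$ (up to swapping $i',j'$). At a join node $t=\eta_{p\times q}(G_c)$ I consult $P_t[p][q]$: if it equals $1$, I mark $t$ and propagate $P_c:=P_t$; otherwise I retain $t$ and propagate $P_c$ equal to $P_t$ but with $P_c[p][q]=P_c[q][p]:=1$. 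The update at a rename node $t=\rho_{p\to q}(G_c)$ uses the key fact that the vertices of $V_p(G_c)\cup V_q(G_c)$ share a common label at every ancestor of $t$; this forces $P_c[p][\ell]=P_t[q][\ell]\vee\neg N_c[p]$ and $P_c[q][\ell]=P_t[q][\ell]\vee\neg N_c[q]$ for $\ell\ne p,q$, and $P_c[p][q]=\neg N_c[p]\vee\neg N_c[q]$. At a disjoint-union node $t=G_{c_1}\oplus G_{c_2}$ the analogous argument gives $P_{c_s}[i][j]=P_t[i][j]\vee\neg N_{c_s}[i]\vee\neg N_{c_s}[j]$ for $s\in\{1,2\}$. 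Summed over $T$, the top-down pass costs $O(k^2\cdot|V(T)|)$, and the final splicing step is $O(|V(T)|)$.

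For correctness I need two properties of the output expression. Each retained join has two nonempty labels, since emptiness alone forces $P_t[p][q]=1$. Each edge $uv\in E(G)$ appears in exactly one retained join: by Lemma~\ref{lem:join-unique-correctness} the joins originally containing $uv$ form a chain $j_1,\ldots,j_m$ in $T$, and the same lemma provides, for each $i<m$, the subset inclusions witnessing $P_{j_i}[p_i][q_i]=1$ via the topmost $j_m$, so $j_1,\ldots,j_{m-1}$ are all marked; conversely no ancestor join can contain $uv$ above $j_m$ (else it would belong to the chain), hence $P_{j_m}[p_m][q_m]=0$ and $j_m$ is retained. A subtle point is that marking an ancestor $a$ does not invalidate the flagging of lower joins, because the relation ``$a'$ covers $a$ and $a$ covers $t$'' is transitive (subset composes with subset), so if $a$ is marked then some still-higher ancestor covers everything $a$ would have covered, and the propagated $P$-entries remain consistent with the final tree.

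The hardest part will be nailing down the top-down update rule at a rename node, where the merging of labels $p$ and $q$ interacts nontrivially with the emptiness clause of the invariant; in particular one must verify that the formula $P_c[p][\ell]=P_t[q][\ell]\vee\neg N_c[p]$ simultaneously captures both the coverage and the emptiness contributions to the invariant at $c$, which ultimately reduces to the identity $V_p(G_c)\cup V_q(G_c)=V_q(G_t)$ and the fact that no ancestor join of $t$ can separate vertices inside this union.
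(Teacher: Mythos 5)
Your proposal is correct, but it runs in the opposite direction from the paper's proof, so it is worth comparing the two. The paper makes a single \emph{bottom-up} pass: at each node it maintains, for every pair of labels $\{i,j\}$, a list $L(i,j)$ of the maximal (w.r.t.\ the ancestor--descendant order) join nodes below that already realize edges between the current labels $i$ and $j$, together with nonemptiness flags $B[i]$; when a join on $\{i,j\}$ is reached, every node in $L(i,j)$ is marked redundant and the list is reset to the current node, so the topmost join of each chain survives by construction. You instead do a bottom-up nonemptiness pass followed by a \emph{top-down} pass propagating a coverage matrix $P_t$, and you mark a join the moment you see that some ancestor join already covers its pair of labels. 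The two schemes retain exactly the same joins (the maximal element of each chain given by Lemma~\ref{lem:join-unique-correctness}), and both rest on the same structural fact that labels only merge going up the expression, which is what makes your update rules at rename and union nodes sound (coverage of a nonempty sublabel by an ancestor join forces coverage of the whole merged label) and what makes the paper's list concatenation at rename nodes sound. The paper's version gets away with $\Oh(1)$-amortized list handling because each join node enters at most one list; yours pays $\Oh(k^2)$ per node to copy or update the matrix, which still fits the stated budget. Your extra transitivity argument (a marked ancestor is itself covered by a still-higher retained ancestor, or witnesses emptiness) is genuinely needed for your invariant, since you quantify over ancestor joins of the \emph{original} tree; the paper sidesteps this entirely because it only ever marks strict descendants of a join it has decided to process. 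Two cosmetic omissions you should fix in a full write-up: state the initialization of $P$ at the root (coverage is impossible there, so only the emptiness clause can fire), and record that entries $P_c[\ell][\ell']$ for labels untouched by a rename are inherited unchanged.
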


\begin{proof}
	We will make two traversals of $T$: the first one will determine the nodes to be removed in $T$, while the second will compute the desired $k$-expression tree $T'$.
	Let $R$ be a table of Boolean variables indexed by vertices of $T$, and which will indicate vertices to be deleted from $T$; this variable is global and has its values initialized to $\bot$.
	The first traversal of $T$ is described by the function \texttt{FindRepetition}, which, given a node $t\in V(T)$, computes for each pair of labels $\{i,j\}$ a list $L(i,j)$ of join nodes of $T$, descendant of $t$ and taken maximal w.r.t.~the ancestor-descendant relation $\leq$, involving edges between vertices that are currently in labels $i$ and $j$.
	Recall that by Lemma~\ref{lem:join-unique-correctness}, these joins are in fact redundant: every edge involved in one such join is an edge involved in the join of the current node $t$.
	The function also computes, for each label $i$, a Boolean value $B[i]$ indicating whether label $i$ is nonempty.
	This will be used in order to detect and remove trivial join involving an empty label.
	
	We now describe the computation of \texttt{FindRepetition} according to each type of node in the $k$-expression.
	\begin{description}
		\item[Leaf node.] If $t=i(v)$ is a leaf node, we set:
		\begin{itemize}
			\item $L(a,b):=\varnothing$ for all $a,b\in [k]$;

			\item $B[i]=\top$; and

			\item $B[j]=\bot$ for all $j\neq i$.
		\end{itemize}			

		\item[Disjoint node.] If $t=t_1 \oplus t_2$, we call \texttt{FindRepetition} on $t_1$ and $t_2$, respectively returning $(L_1,B_1)$ and $(L_2,B_2)$.
		Then, we set:
		\begin{itemize}
			\item $L(i,j):= L_1(i,j) \sqcup L_2(i,j)$ for all $i,j\in [k]$, where $\sqcup$ denotes the list concatenation; and

			\item $B[i]:=B_1[i] \cup B_2[i]$ for all $i\in [k]$.
		\end{itemize}	
		
		\item[Join node.] If $t=\eta_{i \times j}(t')$, we call \texttt{FindRepetition} on $t'$, returning $(L',B')$. We then set $B:=B'$, and proceed as follows:
		\begin{itemize}
			\item for $\{a,b\} \neq \{i,j\}$, we set $L(a,b):=L'(a,b)$;

			\item for $\{i,j\}$, if $B[i]=\bot$ or $B[j]=\bot$, we set $R[t]:=\top$ and $L(i,j):=\varnothing$; and

			\item otherwise, we set $L(i,j):=(t)$, and for each $s \in L'(i,j)$, we set $R[s]:=\top$.
		\end{itemize}
		
		\item[Renaming label node.] If $t=\rho_{i \rightarrow j}(t')$, we call \texttt{FindRepetition} on $t'$ and gets $(L',B')$ as its return. Then:
		\begin{itemize}
			\item for $\{a,b\}$ such that $i,j \notin \{a,b\}$, we set $L(a,b):=L'(a,b)$;

			\item for $\{a,i\}$, we set $L(a,i):=\varnothing$.

			\item for $\{a,j\}$, $a \neq i$, we set $L(a,i):=L'(a,i) \sqcup L'(a,j)$;

			\item we set $B[i]:=\bot$, $B[j]:=B'[i] \cup B'[j]$; and

			\item for all $a \in [k] \setminus \{i,j\}$, we set $B[a]:=B'[a]$.
		\end{itemize}
	\end{description}

	The second traversal of the tree simply consists in replacing nodes $t$ such that $R[t]=\top$ by their child, as in Lemma~\ref{lem:join-unique-correctness}.

	For the complexity analysis, first note that using double linked for $L(i,j)$, attributions and concatenations may be performed in $\Oh(1)$ time.
	Parsing $L(i,j)$ in join nodes takes at most $\Oh(n)$ time in total, because each node is put in at most one list, and the list is discarded after it is parsed.
	The conclusion follow from the fact that there are $\Oh(k^2)$ pairs of labels, and $n$ nodes in $T$.

	In order to prove the correctness, we make the following observations.
	\begin{itemize}
		\item If $R[t]=\top$, $t$ is either a trivial join node (involving an empty label), or a join node that can be replaced by its child from $T$ by Lemma
		\ref{lem:join-unique-correctness}.
		\item At node $t$, $B[i]=\top$ if and only if there is a vertex labeled $i$.
		\item At node $t$, $L(i,j)$ contains exactly join nodes $s$ that added an edge between a vertex
		labeled $i$ for $t$ and a vertex labeled $j$ for $t$ and such that $R[s]=\bot$.
		\item At join node $t$ joining $i$ and $j$, for all join nodes $s$ under $t$ that added an edge
		between a vertex labeled $i$ for $t$ and a vertex labeled $j$ for $t$, $R[s]=\top$, and if
		$R[t]=\bot$, it adds at least one edge.
	\end{itemize}
	It is straightforward to prove these claims by induction on $t$. From them we can conclude.
\end{proof}

\bibliographystyle{alpha}

\bibliography{main}

\end{document}